\documentclass[11pt]{article} 

\usepackage{fullpage}

\usepackage{amsmath,amsthm, amssymb}
\usepackage{latexsym}
\usepackage{graphicx}
\usepackage{ifthen}
\usepackage{subcaption}
\usepackage{multicol}
\usepackage{algorithm,algorithmic}
\usepackage[numbers]{natbib}
\usepackage{mathtools}
\usepackage{dsfont}
\usepackage{nicefrac}

\usepackage{hyperref}
\hypersetup{colorlinks=true,linkcolor=blue,citecolor=blue,urlcolor=blue}

\newtheorem*{theorem*}{Theorem}
\newtheorem*{definition*}{Definition}

\newtheorem{lemma}{Lemma}

\usepackage{thm-restate}

\newcommand\iid{\stackrel{\mathclap{\tiny\mbox{i.i.d.}}}{ \ \sim \ }}
\renewcommand{\algorithmiccomment}[1]{\bgroup\hfill\footnotesize~#1\egroup}
%==============================================================================
% Macros.
%==============================================================================

\usepackage{hyperref}
\hypersetup{colorlinks=true,linkcolor=blue,citecolor=blue,urlcolor=blue}

%Theorems

%Macros

\newcommand{\algone}{{\textsc{Iterative-Filtering}}}
\newcommand{\algtwo}{{\textsc{Amortized-Filtering}}}
\newcommand{\filter}{{\textsc{Filter}}}
\newcommand{\algMainAlmostFull}{{\textsc{Amortized-Filtering-Proxy}}}
\newcommand{\algMainFull}{{\textsc{Amortized-Filtering-Full}}}

\newcommand{\algdown}{{\textsc{Down-Sampling}}}
\newcommand{\algup}{{\textsc{Up-Sampling}}}
\newcommand{\algadaptive}{{\textsc{Adaptive-Sampling}}}

\newcommand{\estOne}{\textsc{EstimateSet}}
\newcommand{\estTwo}{\textsc{EstimateMarginal}}

%Basics
		% New term (set in italics).
\newcommand{\INPUT}{\item[{\bf Input:}]}

%Probability
\DeclareMathOperator{\E}{\mathbb{E}}
\newcommand{\EU}[1][]{\underset{#1}{\E}}

%Sets
			% Set (as in \set{1,2,3})

	% Set (as in \setof{x}{x > 0})
		% Complement of ...            

%Other Math

\DeclareMathOperator{\argmax}{argmax}

\DeclareMathOperator{\poly}{poly}
\newcommand{\OPT}{\texttt{OPT}}

%\newcommand{\mylow}{\theta_{\texttt{low}}}

%Numbers
	                % Complex numbers.
                     % Positive integers.
                     % Rationals.
\newcommand{\R}{\mathbb{R}}                     % Reals.
                     % Integers.
                     % Matroids.
                     % Independent Sets.
\newcommand{\U}{\mathcal{U}}

\newcommand{\D}{\mathcal{D}}
\renewcommand{\O}{O}

%Headings

%==============================================================================
% Title.
%==============================================================================

%\title{Optimal Approximation and Adaptivity \\ for Submodular Maximization}

\title{An Exponential Speedup in Parallel Running Time for Submodular Maximization without Loss in Approximation}

\author{Eric Balkanski\\Harvard University \\ ericbalkanski@g.harvard.edu 
\and Aviad Rubinstein\\ Harvard University \\ aviad@seas.harvard.edu
\and Yaron Singer\\ Harvard University \\ yaron@seas.harvard.edu\\}

\date{}

\begin{document}

\setcounter{page}{0}

\maketitle
\begin{abstract}
In this paper we study the adaptivity of submodular maximization.  Adaptivity quantifies the number of sequential rounds that an algorithm makes when function evaluations can be executed in parallel.  Adaptivity is a fundamental concept that is heavily studied across a variety of areas in computer science, largely due to the need for parallelizing computation.  For the canonical problem of maximizing a monotone submodular function under a cardinality constraint, it is well known that a simple greedy algorithm achieves a $1-1/e$ approximation~\cite{NWF78} and that this approximation is optimal for polynomial-time algorithms~\cite{nemhauser1978best}.  Somewhat surprisingly, despite extensive efforts on submodular optimization for large-scale datasets, until very recently there was no known algorithm that achieves a constant factor approximation for this problem whose adaptivity is sublinear in the size of the ground set $n$.
  
Recent work by~\cite{BS18} describes an algorithm that obtains an approximation arbitrarily close to $1/3$ in $\mathcal{O}(\log n)$ adaptive rounds and shows that no algorithm can obtain a constant factor approximation in $\tilde{o}(\log n)$ adaptive rounds.  This approach achieves an exponential speedup in adaptivity (and parallel running time) at the expense of approximation quality.  

In this paper we describe a novel approach that yields an algorithm whose approximation is arbitrarily close to the optimal $1-1/e$ guarantee in $\mathcal{O}(\log n)$ adaptive rounds.  \emph{This algorithm therefore achieves an exponential speedup in parallel running time for submodular maximization at the expense of an arbitrarily small loss in approximation quality}.  This guarantee is optimal in both approximation and adaptivity, up to lower order terms.
\end{abstract}

\newpage

\section{Introduction}
In this paper we study the adaptivity of submodular maximization.  For the canonical problem of maximizing a non-decreasing submodular function under a cardinality constraint it is well known that the celebrated greedy algorithm which iteratively adds elements whose marginal contribution is largest achieves a $1-1/e$ approximation~\cite{NWF78}.  Furthermore, this approximation guarantee is optimal for any algorithm that uses polynomially-many value queries~\cite{nemhauser1978best}.   

%To scalable, there has been a great deal of effort in recent year on optimizing submodular functions efficiently whether in distributed frameworks.   
%For well over 40 years the celebrated greedy algorithm for submodular maximization is known to obtain a $1-1/e$ approximation guarantee.  The greedy algorithm selects elements iteratively.  
The optimal approximation guarantee of the greedy algorithm comes at a price of high \emph{adaptivity}. %as in every round it adaptively identifies and adds the element with largest marginal contribution given the elements added in previous rounds.  
Informally, the \emph{adaptivity} of an algorithm is the number of sequential rounds it makes when polynomially-many function evaluations can be executed in parallel in each round.  %Adaptivity is a fundamental property as it lower bounds the parallel running-time of the algorithm.  
For submodular maximization, any algorithm whose adaptivity is $t$ can also be executed in $\tilde{\mathcal{O}}(t)$ parallel time in standard parallel computation models (see Section~\ref{sec:related}).  For cardinality constraint $k$ and ground set of size $n$, the greedy algorithm is $k$-adaptive since it sequentially adds elements in $k$ rounds.  In each round it makes $\mathcal{O}(n)$ function evaluations to identify and include the element with maximal marginal contribution to the set of elements selected in previous rounds.  In the worst case $k \in \Omega(n)$ and thus the greedy algorithm is $\Omega(n)$-adaptive and its parallel running time is $\Theta(n)$.
  
Since submodular optimization is regularly applied on very large datasets, adaptivity is crucial as algorithms with low adaptivity enable dramatic speedups in parallel computing time.  
Submodular optimization has been studied for well over forty years now, and in the past decade there has been extensive study of submodular maximization for large datasets~\cite{badanidiyuru2012sketching,kumar2015fast, mirzasoleiman2013distributed, BV14, pan2014parallel, BMKK14, MBK15, mirrokni2015randomized, mirzasoleiman2015distributed,barbosa2015power, mirzasoleiman2016fast, barbosa2016new,EMZ17}.  
Somewhat surprisingly however, until very recently, there was no known constant-factor approximation algorithm for submodular maximization whose adaptivity is sublinear in $n$.

%such as in sorting and selection \cite{Val75, Col88, BMW16},  communication complexity \cite{papadimitriou1984communication, duris1984lower, nisan1991rounds},   multi-armed bandits \cite{AAAK17}, sparse recovery \cite{HNC09,IPW11, haupt2009compressive}, and  property testing \cite{CG17, buhrman2012non,chen2017settling}.  

In recent work~\cite{BS18} introduce an algorithm for maximizing monotone submodular functions under a cardinality constraint that achieves a constant factor approximation arbitrarily close to $1/3$ in $\mathcal{O}(\log n)$ adaptive steps.  Furthermore, \cite{BS18} show that no algorithm can achieve a constant factor approximation with $\tilde{o}(\log n)$ rounds.  
%The constant factor achieved by the algorithm in~\cite{BS18} is arbitrarily close to $1/3$.  Thus, 

For constant factor approximations, \cite{BS18} provide an exponential speedup in the parallel run-time for the canonical problem of maximizing submodular functions under a cardinality constraint.  
This exponential improvement in adaptivity comes at the expense of the approximation quality achievable by $\Omega(n)$-adaptive algorithms (e.g. greedy), and raises two fundamental questions: 

\begin{itemize}
\item \emph{Is there an algorithm whose adaptivity is sublinear in the size of the ground set that obtains an approximation arbitrarily close to the optimal $1-1/e$ approximation guarantee?}
\item \emph{Given that a constant factor approximation cannot be obtained in $\tilde{o}(\log n)$ rounds, what is the best approximation achievable in $\mathcal{O}(\log n)$ rounds?}  
\end{itemize}
  
In this paper we address both questions as summarized by our main result:
\begin{theorem*}
For any constant $\epsilon>0$, any non-decreasing submodular function $f:2^{[n]} \to \mathbb{R}$ and $k\in [n]$, there is an algorithm that with probability $1 - o(1)$ obtains a $1-1/e - \epsilon$ approximation to $\max_{S:|S|\leq k}f(S)$  in $\mathcal{O}(\log n)$ adaptive rounds.
\end{theorem*}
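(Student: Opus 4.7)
The high-level strategy is an \emph{outer-greedy / inner-filter} decomposition. I would guess a constant-factor estimate $v$ of $\OPT$ satisfying $\OPT \le v \le (1+\epsilon/10)\OPT$ by trying all powers of $(1+\epsilon/10)$ in parallel (an $O(\log n/\epsilon)$-fold overhead in query count but only an $O(1)$ overhead in adaptivity), and fix $r = \Theta(1/\epsilon)$ and a ``block size'' $b = k/r$. The outer loop maintains a current solution $S_i$ initialized to $\emptyset$, and in iteration $i$ invokes an inner subroutine \algtwo{} that returns a block $B_i$ of size at most $b$ satisfying the key one-step guarantee
\[
f(S_i \cup B_i) - f(S_i) \;\ge\; (1-\epsilon)\,\frac{|B_i|}{k}\bigl(v - f(S_i)\bigr).
\]
Unrolling the recursion over $r$ rounds yields $v - f(S_r) \le \bigl(1 - (1-\epsilon)/r\bigr)^{r}\,v \le e^{-(1-\epsilon)}v$, so $f(S_r) \ge (1-1/e-O(\epsilon))\OPT$, which is the desired approximation after absorbing constants into $\epsilon$.

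The inner subroutine \algtwo{} is where the $O(\log n)$ adaptivity is earned. Given $S$, set a threshold $\tau \approx (1-\epsilon)(v-f(S))/k$ and initialize a candidate pool $X = [n]\setminus S$. In each adaptive round, in parallel estimate (via random-subset sampling) whether a uniformly random $b$-subset $R \subseteq X$ satisfies $\mathbb{E}[f(S\cup R)-f(S)] \ge (1-\epsilon) b \tau$. If yes, commit $B_i = R$ and return. If no, discard from $X$ every element $x$ whose \emph{conditional} marginal $\mathbb{E}[f(S\cup R) - f(S\cup (R\setminus\{x\})) \mid x\in R]$ is below $\tau$. The core combinatorial lemma is that whenever the commit condition fails, submodularity plus an averaging argument (over both the identity of $x$ and the random $R\ni x$) forces a constant fraction of $X$ to be discarded, so \algtwo{} terminates within $O(\log n)$ adaptive rounds. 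When \algtwo{} exits by exhausting $X$ before $r$ outer rounds are complete, a separate argument shows $f(S)$ must already exceed $(1-1/e-\epsilon)v$, because essentially all high-marginal OPT-elements have been absorbed into $S$.

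I would handle the key termination/progress lemma as follows. Let $O^\star$ be the optimal set. By submodularity, at any state $S$ with $f(S) < (1-\epsilon)v$, the elements of $O^\star\setminus S$ have average marginal to $S$ at least $(v-f(S))/k$, so enough of them exceed $\tau$ that they remain in $X$. Hence whenever $X$ is too small for \algtwo{} to continue, $f(S)$ is already near $v$. When $X$ is still large but the commit check fails, a double-counting argument over elements of $R\cap X$ shows that a $(1-\epsilon)$-fraction of elements in $X$ are ``bad'' in the sense of contributing below $\tau$, justifying their removal. The potential $|X|$ thus shrinks geometrically, giving the $O(\log n)$ inner-round bound, and the outer loop runs $r = O(1/\epsilon)$ times for a total of $O(\log n / \epsilon) = O(\log n)$ adaptive rounds.

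The main obstacle I expect is Step~3's core averaging lemma: one needs the commit estimator to simultaneously \emph{(a)} be unbiased/concentrated over poly-many samples taken in a single adaptive round, and \emph{(b)} imply a clean local condition on individual $x\in X$ that removes a constant fraction without removing any element crucial for future progress. Making this precise requires introducing a ``proxy'' marginal-gain function (reminiscent of \algMainAlmostFull{} in the paper's notation) that averages over the randomness of $R$ conditional on $x$, and then using submodularity to translate low aggregate contribution into a deletion rule that provably spares elements of $O^\star\setminus S$ in expectation. Concentration and union bounds over the $O(\log n)$ filter rounds, $r$ outer rounds, and $O(\log n /\epsilon)$ guesses of $v$ then deliver the $1-o(1)$ success probability.
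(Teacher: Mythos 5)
Your overall skeleton (outer greedy on blocks, inner threshold-filter that discards low-marginal elements against a random block, guess-and-check for $\OPT$, Hoeffding for estimates) matches the paper's, and several of your ingredients---the threshold $\tau\approx(1-\epsilon)(v-f(S))/k$, the conditional-marginal deletion rule, the double-counting/averaging argument showing a constant fraction of $X$ is deleted each non-commit round---are essentially the paper's Lemmas~\ref{lem:pruning} and \ref{lem:Tstar}. But the single parameter choice $r=\Theta(1/\epsilon)$, $b=k/r=\Theta(\epsilon k)$ is not merely a different constant: it breaks the filter's correctness argument, and the mechanism that actually earns $\mathcal{O}(\log n)$ adaptivity in the paper is missing from your plan.

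Here is the concrete failure. To show that enough high-value elements of $O$ survive the filter, one must lower-bound the marginal contribution of $O$ \emph{after also charging for the $\rho$ random blocks $R_1,\dots,R_\rho$ drawn during filtering}: the usable inequality is
\[
\EU[R_1,\dots,R_\rho]\Bigl[f_{S\cup(\cup_i R_i)}(O)\Bigr]\;\ge\;(\OPT-f(S))-\sum_{i=1}^{\rho}\EU[R_i][f_S(R_i)]\;\ge\;\Bigl(1-\frac{\rho}{r}\Bigr)(\OPT-f(S)),
\]
and this is only meaningful if $\rho\ll r$. The number of filter rounds $\rho$ needed to shrink $|X|$ from $n$ to the commit regime is $\Theta(\log_{1+\epsilon}n)=\Theta(\epsilon^{-1}\log n)$ (each round removes only an $\approx\epsilon/2$ fraction, as in Lemma~\ref{lem:pruning}; one cannot prove removal of a $(1-\epsilon)$ fraction per round from submodularity alone). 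With your $r=\Theta(1/\epsilon)$ this gives $\rho/r=\Theta(\log n)\gg1$, the lower bound above is vacuous, and there is no guarantee any $O$-element survives. Equivalently: each time the filter redraws a random block of size $b=\Theta(\epsilon k)$, it ``steals'' value up to the commit threshold $\Theta(\epsilon(\OPT-f(S)))$ from $O$'s residual marginal, and after $\Theta(\epsilon^{-1}\log n)$ redraws you have charged $\Theta(\log n)\cdot(\OPT-f(S))$, far exceeding the budget. Your fallback claim that ``if $X$ is exhausted early, $f(S)$ is already near $v$'' does not rescue this, because the filter can spuriously discard good $O$-elements once the above lower bound collapses.

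The paper instead keeps $r=\Theta(\epsilon^{-1}\log n)$ and block size $k/r$, so that $\rho/r=\mathcal{O}(\epsilon)$ and the above charging argument closes. With that $r$ your structure is the paper's \algone, which is $\mathcal{O}(\log^2 n)$-adaptive: $\Theta(\epsilon^{-1}\log n)$ outer blocks times $\mathcal{O}(\epsilon^{-1}\log n)$ filter rounds each. The idea you are missing is the \emph{epoch amortization} of \algtwo: rather than restarting $X\leftarrow[n]\setminus S$ at every outer block (as your plan does), the filter state $X$ is carried forward across consecutive block-additions until $f(S)$ has increased by an $\epsilon/20$ fraction of $\OPT-f(S)$; only then is $X$ reset. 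Since $|X|$ shrinks geometrically and monotonically within an epoch, each epoch contributes only $\mathcal{O}(\epsilon^{-1}\log n)$ filter rounds total, and there are $\mathcal{O}(1/\epsilon)$ epochs, giving $\mathcal{O}(\log n)$ overall. Carrying $X$ across block-additions requires a modified version of Lemma~\ref{lem:optContrib} (the paper's Lemma~\ref{lem:optContrib2}) that also accounts for the drift of $S$ during the epoch, which is why the epoch length is capped at a $\Theta(\epsilon)$ fraction of residual value.
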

The algorithm gives an exponential speedup in parallel running time for maximizing a submodular function.  In particular, our result shows that exponential speedups in parallel computing are possible with arbitrarily small sacrifice in the quality of the approximation achievable in poly-time.

\subsection{Technical overview}

%Note: need to define marginal contribution before this next paragraph.

The main goal of this paper is to achieve the optimal $1-1/e$  guarantee in $\mathcal{O}(\log n)$ adaptive steps.  The optimal $1-1/e$ approximation of the greedy algorithm stems from the guarantee that for any given set $S$ there exists an element whose marginal contribution to $S$ is at least a $1/k$ fraction of the remaining optimal value $\OPT - f(S)$.  A standard inductive argument then shows that  iteratively adding the element whose marginal contribution is maximal results in the $1-1/e$ approximation guarantee.  To obtain the $1-1/e$ guarantee in $r=\mathcal{O}(\log n)$ adaptive steps rather than $k$, we could mimic this idea if in each adaptive step we could add a \emph{block} of $k/r$ elements whose marginal contribution to the existing solution $S$ is at least a $1/r$ fraction of $\OPT - f(S)$.   %The inductive argument would then be applied in the same manner showing a 
%We would then obtain a $1-1/e$ approximation in %a total number of adaptive $r$ adaptive steps.    

The entire challenge is in finding  such a block of $k/r$ elements in $\mathcal{O}(1)$ adaptive steps.  A priori, this is a formidable task when $k/r$ is super-constant. In general, the maximal marginal contribution over all sets of size $k/r$ is as low as $(\OPT - f(S))/r$.  Finding a block of size $t$ of maximal marginal contribution in polynomial time is as hard as solving the general problem of submodular maximization under cardinality constraint $t$, which, in general, cannot be approximated within any factor better than $1-1/e$ using polynomially-many queries~\cite{nemhauser1978best}.  Furthermore, we know it is impossible to approximate within any constant approximation in $o(\log n / \log \log n)$ adaptive rounds~\cite{BS18}.

Despite this seeming difficulty, we show one can exploit a fundamental property of submodular functions to identify a block of size $k/r$ whose marginal contribution is arbitrarily close to $(\OPT - f(S))/r$.  In general, we show that for monotone submodular functions, while it is hard to find a set of size $k$ whose value is an arbitrarily good approximation to  $\OPT$, it is actually possible to find a set of size $k / r$ whose value is arbitrarily close to that of $\OPT/r$ in polynomial time for $r = \O(\log n)$, even when $k/r$ is super-constant.
%marginal contribution is largest to an existing set, it is easy to find an arbitrarily good approximation to $f(O)/t$ 
 
%We show that, at every iteration, we can actually find a set of size $k/r$ that has marginal contribution arbitrarily close to $(\OPT - f(S)) / r$ in an amortized number of rounds that is constant per iteration. Intuitively, we exploit the fact that not only there exists a subset of size $k/r$ with marginal contribution at least $(\OPT - f(S))/r$, but that there also exists $k$ elements (the optimal solution) with marginal contribution at least $\OPT - f(S)$ to $S$. 

In Section~\ref{sec:alg1} we describe an algorithm which progressively adds a subset of size $k/r$ to the existing solution $S$ whose marginal contribution is arbitrarily close to $(\OPT - f(S))/r$. To do so, it uses $\mathcal{O}(\log n)$ rounds in each such progression and it is hence $\mathcal{O}(\log^2 n)$-adaptive.  At a high level, in each iteration that it adds a block of size $k/r$, the algorithm carefully and aggressively filters elements in $\mathcal{O}(\log n)$ rounds by considering their marginal contribution to a random set drawn from a distribution that evolves throughout the filtering iterations.  

In Section~\ref{sec:alg2}, we generalize the algorithm so that, on average, every step of adding a block of $k/r$ elements is done in $\mathcal{O}(1)$ adaptive steps.  The main idea is to consider epochs, which consist of  sequences of iterations such that, in the worst case, an iteration might still consist of $\mathcal{O}(\log n)$ rounds, but the amortized number of rounds per iteration during an epoch is now constant.

\subsection{Paper organization}
We first discuss related work, followed by preliminary definitions and notation below.  In Section~\ref{sec:alg1} we describe and analyze \algone \  which obtains an approximation guarantee arbitrarily close to $1-1/e$ in $\mathcal{O}(\log^2 n)$ adaptive rounds.  In Section~\ref{sec:alg2} we describe and analyze \algtwo \ which obtains the same approximation guarantee in $\mathcal{O}(\log n)$ rounds.

\subsection{Related work}
\label{sec:related}

\paragraph{Parallel computing and depth.}
Adaptivity is closely related to the concept of \emph{depth} in the PRAM model. The \emph{depth} of a PRAM algorithm is the number of parallel steps it takes on a shared memory machine with any number of processors. That is, it is the longest chain of dependencies of the algorithm, including operations which are not queries. There is a long line of study on the design of low-depth algorithms (e.g. \cite{blelloch1996programming, blelloch2011linear,  berger1989efficient, rajagopalan1998primal, blelloch1998fast, blelloch2012parallel}). As discussed in further detail in Appendix~\ref{sec:appdepth}, our positive results extend to the PRAM model and our main algorithm has  $\tilde{\mathcal{O}}(\log^2 n\cdot d_f)$ depth, where $d_f$ is the depth required to evaluate the function on a set. While the PRAM model assumes that the input is loaded in memory, we consider the value query model where the algorithm is given oracle access to a function of potentially exponential size. 

\paragraph{Adaptivity.} The concept of adaptivity is generally well-studied in computer science, largely due to the role it plays in parallel computing, such as in sorting and selection \cite{Val75, Col88, BMW16},  communication complexity \cite{papadimitriou1984communication, duris1984lower, nisan1991rounds},   multi-armed bandits \cite{AAAK17}, sparse recovery \cite{HNC09,IPW11, haupt2009compressive}, and  property testing \cite{CG17, buhrman2012non,chen2017settling}. Beyond being a fundamental concept, adaptivity is important for applications where sequentiality is the main runtime bottleneck. We discuss in detail several such applications of submodular optimization in Appendix~\ref{sec:appapplications}. Somewhat surprisingly, until very recently $\Omega(n)$ was the best known adaptivity required for a constant factor approximation to maximizing a monotone submodular maximization under a cardinality constraint.      
As discussed above, \citep{BS18} give an algorithm that  is $\mathcal{O}(\log n)$-adaptive and achieves an approximation arbitrarily close to $1/3$. They also show that no algorithm can achieve a constant factor approximation with $\tilde{o}(\log n)$ rounds.  The approach and algorithms in this paper are  different than \cite{BS18} and we provide a detailed comparison in Appendix~\ref{sec:appcomparison}.

\paragraph{Map-Reduce.} There is a long line of work on distributed submodular optimization  in the Map-Reduce model \cite{ kumar2015fast, mirzasoleiman2013distributed, mirrokni2015randomized, mirzasoleiman2015distributed,barbosa2015power, barbosa2016new,EMZ17}. Map-Reduce is designed to tackle issues related to massive data sets that  are too large to either  fit or be processed by a single  machine. Instead of addressing distributed challenges, adaptivity addresses the issue of sequentiality, where query-evaluation time is  the main runtime bottleneck and where these evaluations can be parallelized. The existing Map-Reduce algorithms for submodular optimization have adaptivity that is linear in $n$ in the worst-case. This high adaptivity is caused by the  algorithms run on each machine, which are variants of the greedy algorithm and thus have adaptivity at least linear in $k$. Additional discussion about the Map-Reduce model is provided in Appendix~\ref{sec:appmapreduce}.

% !TeX root = main.tex

\subsection{Basic definitions and notation}
\paragraph{Submodularity.}  
For a given function $f:2^{N} \to \mathbb{R}$, the \emph{marginal contribution} of an element $X\subseteq N$ to a set $S \subseteq N$ denoted $f_{S}(X)$ is defined as $f(S \cup X) - f(S)$.  A function $f : 2^N \rightarrow \R$ is \emph{submodular} if for any $S \subseteq T \subseteq N$ and any $a\in N\setminus T$ we have that
$f_S(a) \geq f_T(a).$\footnote{For readability we abuse notation and write $a$ instead of $\{a\}$ when evaluating a singleton.}
A function is \emph{monotone} or \emph{non-decreasing} if $f(S) \leq f(T)$ for all $S \subseteq T$.  A submodular function $f$ is also   \emph{subadditive}, meaning  $f(S \cup T) \leq f(S) + f(T)$ for all $S \subseteq T$.   The size of the ground set is $n = |N|$ and $k$ denotes the cardinality constraint of the given optimization problem $\max_{S:|S|\leq k}f(S)$.

\paragraph{Adaptivity.}  
As standard, we assume access to a \emph{value oracle} of the function s.t. for any $S\subseteq N$ the oracle returns $f(S)$ in $\mathcal{O}(1)$ time.  Given a value oracle for $f$, an algorithm  is $r$-adaptive 
 if every query $f(S)$ for the value of a set $S$  occurs at a round $i \in [r]$ s.t. $S$ is independent of the values $f(S')$ of all other queries at round $i$, with at most $\poly(n)$ queries at every round. In Appendix~\ref{sec:appdepth} we discuss adaptivity and parallel computing.

%\input{overview}

% !TeX root = main.tex

\section{\algone: An $\mathcal{O}(\log^2 n)$-adaptive Algorithm}
\label{sec:alg1}

In this section, we present the \algone \ algorithm which obtains an approximation arbitrarily close to $1-1/e$ in $\mathcal{O}(\log^2 n)$ adaptive rounds. At a high level, the algorithm iteratively identifies large blocks of elements of high value and adds them to the solution.  There are $\mathcal{O}(\log n)$ such iterations and each iteration requires $\mathcal{O}(\log n)$ adaptive rounds, which amounts to $\mathcal{O}(\log ^2 n)$-adaptivity.  The analysis in this section will later be used as we generalize this algorithm to one that obtains an approximation arbitrarily close to $1-1/e$ in $\mathcal{O}(\log n)$ adaptive rounds.

\subsection{Description of the algorithm}

The \algone \ algorithm consists of $r$ iterations which each add $k/r$ elements to the solution $S$.  To find these elements the algorithm filters out elements from the ground set using the \textsc{Filter} subroutine and then adds a set of size $k/r$ sampled uniformly at random from the remaining elements.  Let $\U(X,t)$ denote the uniform distribution over subsets of $X$ of size $t$.  Throughout the paper we always sample sets of size $t=k/r$ and therefore write $\mathcal{U}(X)$ instead of $\mathcal{U}(X,\frac{k}{r})$ to simplify notation.  The \algone \ algorithm is described formally above as Algorithm~\ref{alg:1}.     

 \begin{algorithm}[H]
\caption{\algone}
\begin{algorithmic}
    	\INPUT  constraint $k$, bound on number of iterations $r$
    	\STATE  $S \leftarrow \emptyset $
    	\STATE \textbf{for} $r$ \text{iterations}   \textbf{do}
    \STATE \ \ \ $X \leftarrow \textsc{Filter}(N,S,r)$
	\STATE \ \  \ $S \leftarrow S \cup R$, where $R \sim \U(X)$
    	\RETURN $S$ 
  \end{algorithmic}
  \label{alg:1}
\end{algorithm}

The \filter \ subroutine iteratively discards elements until a random set $R \sim \U(X)$ has marginal contribution arbitrarily close to the desired $(\OPT - f(S))/r$ value.  In each iteration, the elements discarded from the set of surviving elements $X$ are those whose marginal contribution to $R \sim \U(X)$ is low.  Intuitively, \filter \ terminates quickly since if a random set has low expected marginal contribution, then there are many elements whose marginal contribution to a random set is low and these elements are then discarded. The subroutine \filter \ is formally described below.

 \begin{algorithm}[H]
\caption{\textsc{Filter}$(X,S,r)$}
\begin{algorithmic}
    	\INPUT  Remaining elements $X$, current solution $S$, bound on number of outer-iterations $r$
   	\STATE  \textbf{while}  $\E_{R \sim \U(X)}\left[ f_S(R)\right]  <  \left(1 -  \epsilon \right) \left( \OPT - f(S) \right) / r$  \textbf{do} 
   			\STATE  \ \ \ $X \leftarrow X \setminus \left\{a \ : \ \E_{R \sim \U(X)} \left[f_{S \cup (R\setminus \{a\})}(a) \right] <  \left(1 + \epsilon/2 \right)\left(1 - \epsilon \right) \left( \OPT - f(S) \right) / k \right\}$
    	\RETURN $X$ 
  \end{algorithmic}
  \label{alg:filter}
\end{algorithm}

Both \algone \ and \filter \  are idealized versions of the algorithms we implement.  This is due to the fact that we do not know the value of the optimal solution $\OPT$ and we cannot compute expectations exactly. In practice, we can apply multiple guesses of $\OPT$ in parallel and estimate expectations by repeated sampling. For ease of presentation we analyze these idealized versions of the algorithms and defer the presentation and  analysis of the full algorithm to Appendix~\ref{sec:appFull}.  In our analysis we assume that in \algone \ when $\E_{R \sim \U(X)}[f_S(R)] \geq t$ this implies that a random set $R\sim \mathcal{U}(X)$ respects $f_S(R) \geq t$.\footnote{Since we estimate $E_{R \sim \U(X)}[f_S(R)]$ by sampling in the full version of the algorithm, there is at least one sample with value at least the estimated value of $E_{R \sim \U(X)}[f_S(R)]$ that we can take.}

\subsection{Analysis}  
The analysis of \algone \ relies on two properties of its \filter \ subroutine: (1) the marginal contribution of the set of elements not discarded in \filter \ after $\mathcal{O}(r)$ iterations is arbitrarily close to $(\OPT-f(S))/r$ and (2) there are at most $k/r$ remaining elements after $\mathcal{O}(r)$ rounds.  We assume that $\epsilon > 0$ is a small constant in the analysis.

\subsubsection{Bounding the value of elements that survive \textsc{Filter}}
\label{sec:alg1apx}

We first prove that the marginal contribution of elements returned by \filter \ to the existing solution $S$ is arbitrarily close to $(\OPT - f(S))/r$.  We do so by arguing that the set returned by \filter \ includes a subset of the optimal solution $O$ with such marginal contribution.  
Let $\rho$ be the number of iterations of the while loop in \filter.  For a given iteration $i \in [\rho]$ let $R_i$ be a random set of size $\frac{r}{k}$ drawn uniformly at random from $X_i$, where $X_i$ are the remaining elements at iteration $i$.  Notice that by monotonicity and submodularity, $f_{S}(O) \geq \OPT - f(S)$.  We first show that we can consider the marginal contribution of $O$ not only to $S$ but $S \cup (\cup_{i=1}^\rho R_i)$, while suffering an arbitrarily small loss. Considering the marginal contribution over random sets $R_i$ is important to show that some optimal elements of high value must survive all rounds.

 \begin{lemma} 
\label{lem:optContrib}
Let $R_i \sim \U(X)$ be the random set at iteration $i$ of \filter$(N,S,r)$. For all $S \subseteq N$ and $r, \rho > 0$, if \filter$(N,S,r)$ has not terminated after $\rho$ iterations, then
$$\EU[R_1, \ldots, R_\rho]\left[f_{S \cup \left(\cup_{i=1}^\rho R_i \right)}\left(O\right)\right] \geq \left(1 - \frac{\rho}{r}\right) \cdot \left(\OPT - f(S)\right).$$
\end{lemma}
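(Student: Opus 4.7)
My plan is to telescope the expected loss in the marginal contribution of $O$ as we progressively add the random sets $R_1, R_2, \ldots, R_\rho$ to the base set $S$, and then to bound each individual telescoping term using submodularity together with the fact that the while condition in \filter\ must hold at every iteration $i \leq \rho$.

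First, observe that by monotonicity $f_S(O) = f(S\cup O) - f(S) \geq f(O) - f(S) \geq \OPT - f(S)$, so it suffices to show
$$\EU[R_1,\ldots,R_\rho]\bigl[f_S(O) - f_{S \cup R_{1..\rho}}(O)\bigr] \leq \frac{\rho}{r}\bigl(\OPT - f(S)\bigr),$$
where I write $R_{1..i} := \bigcup_{j=1}^i R_j$. I would rewrite the left-hand side as the telescoping sum $\sum_{i=1}^\rho \EU{}\bigl[f_{S \cup R_{1..i-1}}(O) - f_{S\cup R_{1..i}}(O)\bigr]$ and bound each summand in two submodularity steps: (i) a direct algebraic rearrangement shows that $f_{S \cup R_{1..i-1}}(O) - f_{S\cup R_{1..i}}(O) = f_{S\cup R_{1..i-1}}(R_i) - f_{S \cup R_{1..i-1} \cup O}(R_i)$, and by monotonicity the second term is nonnegative, yielding $f_{S \cup R_{1..i-1}}(O) - f_{S\cup R_{1..i}}(O) \leq f_{S\cup R_{1..i-1}}(R_i)$; (ii) submodularity (marginals decrease as the base grows) then gives $f_{S\cup R_{1..i-1}}(R_i) \leq f_S(R_i)$.

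To finish, I would use the hypothesis that \filter\ has not terminated after $\rho$ iterations: this means the while-loop condition holds at every iteration $i \leq \rho$, so conditioned on $R_1, \ldots, R_{i-1}$ (which determine $X_i$), we have $\EU[R_i \sim \U(X_i)]\bigl[f_S(R_i) \mid R_1,\ldots,R_{i-1}\bigr] < (1-\epsilon)(\OPT - f(S))/r \leq (\OPT - f(S))/r$. Taking total expectation, summing over $i=1,\ldots,\rho$, and combining with the starting inequality yields the claim.

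The main subtlety, rather than any deep obstacle, is that $X_i$, and therefore the distribution of $R_i$, depends on $R_1, \ldots, R_{i-1}$; the random sets are \emph{not} independent. So the argument has to be phrased with iterated conditional expectations, applying the filter condition inside each conditional expectation before taking the outer expectation. Apart from this bookkeeping, the proof is a standard two-line application of submodularity and monotonicity.
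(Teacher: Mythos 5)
Your proof is correct and follows essentially the same route as the paper: both combine monotonicity (to get $f_S(O\cup\bigcup_i R_i)\geq \OPT - f(S)$), the while-loop condition (to bound each $\E[f_S(R_i)]$ by $(1-\epsilon)(\OPT-f(S))/r$), and a subadditivity-type decomposition — the paper invokes subadditivity of marginal contributions directly to get $\E[f_S(\bigcup_i R_i)]\leq\sum_i\E[f_S(R_i)]$, while you obtain the same bound via telescoping. One small remark on your stated ``subtlety'': in the idealized \filter{} no sampling actually occurs inside the subroutine (elements are discarded based on exact expectations), so the sets $X_i$ are deterministic and the $R_i$ introduced in the lemma are in fact independent draws; the iterated-conditional-expectation bookkeeping you flag is harmless but unnecessary here (it would become relevant only in the sampled full algorithm).
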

\begin{proof} We exploit the fact that if \filter$(N,S,r)$  has not terminated after $\rho$ iterations, then by the algorithm, the random set $R_i \sim \U(X)$ at iteration $i$ has expected value that is upper bounded as follows:
$$\EU[R_i]\left[f_{S }\left(  R_i \right)\right] < \frac{1- \epsilon}{r} \left(\OPT - f(S)\right)$$
for all $i \leq \rho$. Next, by subadditivity, we have  
$\E_{R_1, \ldots, R_\rho}\left[f_{S }\left( \left(\cup_{i=1}^\rho R_i \right)\right)\right]  \leq \sum_{i=1}^\rho \E_{R_i}\left[f_{S }\left(  R_i \right)\right]$ and, by monotonicity, $\E_{R_1, \ldots, R_\rho}\left[f_{S }\left(O\cup \left(\cup_{i=1}^\rho R_i \right)\right)\right] \geq \OPT - f(S)$. Combining the above inequalities, we conclude that
\begin{align*}
\EU[R_1, \ldots, R_\rho]\left[f_{S \cup \left(\cup_{i=1}^\rho R_i \right)}\left(O\right)\right] & = \EU[R_1, \ldots, R_\rho]\left[f_{S }\left(O\cup \left(\cup_{i=1}^\rho R_i \right)\right)\right] -  \EU[R_1, \ldots, R_\rho]\left[f_{S }\left( \left(\cup_{i=1}^\rho R_i \right)\right)\right] \\
 & \geq \OPT - f(S)- \sum_{i=1}^\rho \EU[R_i]\left[f_{S }\left(R_i \right)\right]\\
 &  \geq 
\left(1 - \frac{\rho}{r}\right) \cdot \left(\OPT - f(S)\right). \qedhere
\end{align*} 
\end{proof}

 Next, we bound the value of elements that survive filtering rounds. To do so, we use Lemma~\ref{lem:optContrib} to show that there exists a a subset $T$ of the optimal solution $O$ that survives $\rho$ rounds of filtering and that has marginal contribution to $S$ arbitrarily close to $(\OPT - f(S))/r$.
\begin{lemma}
\label{lem:Tstar}
For all $S \subseteq N$ and $\epsilon > 0$, if $r \geq 20 \rho \epsilon^{-1}$, then the  elements $X_\rho$ that survive $\rho$ iterations  \filter$(N,S,r)$  satisfy 
$$f_S(X_{\rho}) \geq \frac{1}{r} \left(1 -  \epsilon \right) \left( \OPT - f(S) \right).$$
\end{lemma}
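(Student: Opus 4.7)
}

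The plan is to lower-bound $f_S(X_\rho)$ by $f_S(O^*)$ where $O^* := O \cap X_\rho$ is the set of optimal elements that survived $\rho$ filtering iterations, and then to lower-bound $f_S(O^*)$ by combining Lemma~\ref{lem:optContrib} (which says the optimum still contributes a lot even after conditioning on the union $\cup_i R_i$) with an upper bound on the contribution of the \emph{discarded} optimal elements coming from the discard threshold in \filter.

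First I would decompose, using $O = O^* \cup (O\setminus X_\rho)$, the quantity
\[
f_{S\cup(\cup_{i=1}^\rho R_i)}(O) \;=\; f_{S\cup(\cup_{i=1}^\rho R_i)}(O^*) \;+\; f_{S\cup(\cup_{i=1}^\rho R_i)\cup O^*}(O\setminus X_\rho).
\]
By Lemma~\ref{lem:optContrib}, the left-hand side in expectation is at least $(1-\rho/r)(\OPT-f(S))$. To upper bound the last term, I'd fix any ordering of the discarded optimal elements $a_1,\dots,a_m$ (with $m\le k$), expand by telescoping marginals, and use submodularity to drop the prefix and the already-added $O^*$, leaving, for each $a_j$ discarded at some iteration $i(a_j)\le\rho$,
\[
f_{S\cup(\cup_i R_i)\cup O^*\cup\{a_1,\ldots,a_{j-1}\}}(a_j) \;\le\; f_{S\cup(R_{i(a_j)}\setminus\{a_j\})}(a_j),
\]
which holds pointwise since the LHS base set contains $R_{i(a_j)}\setminus\{a_j\}$. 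Taking expectations and using that $a_j$ was discarded at iteration $i(a_j)$, each summand is strictly less than $(1+\epsilon/2)(1-\epsilon)(\OPT-f(S))/k$, and summing over at most $k$ optimal discards gives an upper bound of $(1+\epsilon/2)(1-\epsilon)(\OPT-f(S))$.

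Combining, $\EU f_{S\cup(\cup_i R_i)}(O^*) \ge [\epsilon/2+\epsilon^2/2 - \rho/r](\OPT-f(S))$. Then by submodularity and monotonicity
\[
f_S(X_\rho)\;\ge\;f_S(O^*)\;\ge\;\EU f_{S\cup(\cup_i R_i)}(O^*)\;\ge\;\Bigl(\tfrac{\epsilon}{2}-\tfrac{\rho}{r}\Bigr)(\OPT-f(S)),
\]
and the hypothesis $r\ge 20\rho/\epsilon$ makes the bracketed factor at least $9\epsilon/20$, which comfortably exceeds $(1-\epsilon)/r$ (the case $\rho=0$ is immediate from $f_S(N)\ge\OPT-f(S)$).

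The main obstacle is the second term in the decomposition: a naive triangle-inequality bound on the marginal of $O\setminus X_\rho$ gives essentially $\OPT-f(S)$, which is useless. The point is that the discard threshold is set to $(1+\epsilon/2)(1-\epsilon)(\OPT-f(S))/k$ rather than just $(\OPT-f(S))/k$, so summing $k$ threshold terms beats $\OPT-f(S)$ by a slack of roughly $\epsilon/2$; this slack, against the $\rho/r$ loss from Lemma~\ref{lem:optContrib}, is precisely what is recovered by the assumption $r\ge 20\rho/\epsilon$. Getting the submodularity bookkeeping right so that each discarded $a\in O\setminus X_\rho$ is charged against its \emph{own} iteration's threshold (and not against a later, stronger base set) is the step that requires care.
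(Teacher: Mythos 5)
Your proof is correct, but it takes a genuinely different route from the paper's. The paper introduces the auxiliary quantity $\Delta_\ell := \E\bigl[f_{S\cup O_{\ell-1}\cup(\cup_i R_i\setminus\{o_\ell\})}(o_\ell)\bigr]$ for each optimal element, defines $T$ to be the optimal elements with $\Delta_\ell\ge(1-\epsilon/4)\Delta$ where $\Delta=\tfrac1k(1-\rho/r)(\OPT-f(S))$, proves by submodularity that every element of $T$ exceeds the filter threshold at every iteration and hence survives, and then runs a Markov-style counting argument: Lemma~\ref{lem:optContrib} gives $\sum_\ell\Delta_\ell\ge k\Delta$, the complement of $T$ contributes at most $k(1-\epsilon/4)\Delta$, so $\sum_{T}\Delta_\ell\ge k\epsilon\Delta/4$, and telescoping gives $f_S(T)\ge k\epsilon\Delta/4$. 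Your proof instead works directly with $O^* := O\cap X_\rho$, decomposes $f_{S\cup(\cup_i R_i)}(O)=f_{S\cup(\cup_i R_i)}(O^*)+f_{S\cup(\cup_i R_i)\cup O^*}(O\setminus X_\rho)$, and charges each discarded optimal element to its own discard iteration's threshold via submodularity and a telescoping expansion, so the second term is at most $(1+\epsilon/2)(1-\epsilon)(\OPT-f(S))$. Both approaches hinge on Lemma~\ref{lem:optContrib} plus the deliberate $(1+\epsilon/2)$ slack in the discard threshold; the difference is that the paper defines a surviving subset $T$ indirectly via the proxy $\Delta_\ell$ and an averaging argument (and must then verify $T\subseteq X_\rho$), whereas you use the actual surviving optimal set $O^*$ and extract the slack directly from the discard condition. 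Your route is arguably shorter and more transparent about where the $\epsilon/2$ margin comes from; the paper's route has the feature of never needing to reason about the (deterministic) iteration $i(a_j)$ at which a given element was discarded, and its $\Delta_\ell$ machinery is reused verbatim in the epoch version (Lemma~\ref{lem:Tstar2}), though your decomposition would adapt there as well.
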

\begin{proof}
At a high level, the proof first defines a subset $T$ of the optimal solution $O$. Then, the remaining of the proof consists of two main parts. First, we show that elements in $T$ survive $\rho$ iterations of \filter$(N,S,r)$. Then, we show that $f_S(T) \geq \frac{1}{r} \left(1 -  \epsilon \right) \left( \OPT - f(S) \right).$ We introduce some notation. Let $O = \{o_1, \ldots, o_k\}$ be the optimal elements in some arbitrary order and $O_\ell = \{o_1, \ldots, o_\ell\}$. We define the following marginal contribution $\Delta_\ell$ of each optimal element $o_\ell$:
$$\Delta_\ell := \EU[R_1, \ldots, R_\rho] \left[ f_{S \cup O_{\ell-1}\cup \left(\cup_{i=1}^\rho R_i \setminus \{o_\ell\}\right) }(o_\ell)\right].$$
We define $T$ to be the set of optimal elements $o_\ell$ such that $\Delta_\ell \geq (1 - \epsilon/4) \Delta$ where
$$\Delta := \frac{1}{k}\left( 1 - \frac{\rho}{r}   \right) \cdot \left(\OPT - f(S)\right).$$
 We first argue that elements in $T$ survive $\rho$ iterations of \filter$(N,S,r)$. For element $o_\ell \in T$, we have
$$\Delta_\ell  \geq  (1 - \epsilon/4)\Delta  \geq \frac{1}{k}(1 - \epsilon/4)\left( 1 - \frac{\rho}{r}  \right) \cdot \left(\OPT - f(S)\right)  \geq \frac{1}{k}(1+\epsilon/2) (1 -  \epsilon)  \cdot \left(\OPT - f(S)\right)$$
where the last inequality is since since  $r \geq 20 \rho \epsilon^{-1}$. Thus, at iteration $i \leq \rho$, by submodularity, 
$$\EU[R_i  ] \left[f_{S \cup (R_i \setminus \{o_\ell\})}(o_\ell) \right] \geq  \EU[R_1, \ldots, R_\rho] \left[ f_{S \cup O_{\ell-1}\cup \left(\cup_{i=1}^\rho R_i  \setminus \{o_\ell\}\right)}(o_\ell)\right] = \Delta_\ell \geq \frac{1}{k}(1+\epsilon/2) (1 -  \epsilon)  \cdot \left(\OPT - f(S)\right) $$ 
and $o_\ell$ survives all iterations $i \leq \rho$, for all $o_\ell \in T$. 

Next, we argue that $f_S(T) \geq \frac{1}{r} \left(1 -  \epsilon \right) \left( \OPT - f(S) \right).$ Note that
$$\sum_{\ell =1}^k \Delta_\ell \geq   \EU[R_1, \ldots, R_\rho]\left[f_{S \cup \left(\cup_{i=1}^\rho R_i \right)}\left(O\right)\right] \geq\left( 1 - \frac{\rho}{r}  \right) \cdot \left(\OPT - f(S)\right) =  k \Delta.$$

where the second inequality is by Lemma~\ref{lem:optContrib}. Next, observe that

$$ \sum_{\ell =1}^k \Delta_\ell = \sum_{o_\ell \in T} \Delta_\ell + \sum_{j  \in O \setminus T} \Delta_\ell \leq \sum_{o_\ell \in T} \Delta_\ell + k (1 - \epsilon/4)\Delta.$$
By combining the two inequalities above, we get  $\sum_{o_\ell \in T} \Delta_\ell  \geq k \epsilon \Delta/4$. Thus, by submodularity,
\begin{align*}
f_S(T)  \geq \sum_{o_\ell \in T} f_{S \cup O_{\ell-1}}\left(o_\ell\right) 
 \geq \sum_{o_\ell \in T}\EU[R_1, \ldots, R_\rho] \left[ f_{S \cup O_{\ell-1}\cup \left(\cup_{i=1}^\rho R_i  \setminus \{o_\ell\}\right)}(o_\ell)\right]  
 = \sum_{o_\ell \in T} \Delta_\ell  
 \geq k \epsilon \Delta/4. 
\end{align*}
We conclude that 
\begin{align*}
f_S(X_{\rho}) \geq f_S(T)  \geq k \epsilon \Delta/4 = (\epsilon/4) \left( 1 - \frac{\rho}{r}  \right) \cdot \left(\OPT - f(S)\right) 
\geq \frac{1}{r} \cdot  (1 -  \epsilon)  \cdot \left(\OPT - f(S)\right).
\end{align*}
where the first inequality is by monotonicity and since $T \subseteq X_\rho$ is  a set of surviving elements.
\end{proof}

\subsection{The adaptivity of \textsc{Filter}} 
\label{sec:alg1rounds}

The second part of the analysis bounds the number of adaptive rounds of the \filter \ algorithm. A main lemma for this part, Lemma~\ref{lem:pruning}, shows that a constant number of elements are discarded at every round of filtering. Combined with the previous lemma that bounds the value of remaining elements, Lemma~\ref{lem:rounds} then shows that \filter \ has at most $\log n$ rounds. The analysis that a constant number of elements are discarded at every round is similar as in~\cite{BS18} and we defer the proof to the appendix for completeness. Since this is an important lemma, we give a proof sketch nevertheless.

\begin{restatable}{rLem}{lempruning}
\label{lem:pruning}
Let $X_i$ and $X_{i+1}$ be the surviving elements at the start and end of iteration $i$ of \filter$(N,S,r)$. For all $S \subseteq N$ and $r, i, \epsilon > 0$, if \filter$(N,S,r)$ does not terminate at iteration $i$, then
$$|X_{i+1}| < \frac{|X_i|}{1+\epsilon/2}.$$
\end{restatable}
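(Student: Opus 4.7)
The plan is to do a double-counting / averaging argument: use submodularity to express $\E_{R\sim\U(X_i)}[f_S(R)]$ as (at least) a sum over candidate elements $a\in X_i$ of their ``leave-one-out'' marginals, and then leverage the fact that every surviving element must have leave-one-out marginal above the filter's threshold.

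First, I would fix an arbitrary ordering on each realization of $R$, write $f_S(R)$ as the telescoping sum of marginals along that ordering, and apply submodularity element-wise to get the pointwise inequality
\[
f_S(R) \;\geq\; \sum_{a\in R} f_{S\cup (R\setminus\{a\})}(a).
\]
Taking expectations over $R\sim\U(X_i)$ and using linearity gives
\[
\EU[R\sim\U(X_i)]\bigl[f_S(R)\bigr] \;\geq\; \sum_{a\in X_i} \Pr[a\in R]\cdot \EU[R]\bigl[f_{S\cup(R\setminus\{a\})}(a)\,\big|\,a\in R\bigr],
\]
with $\Pr[a\in R]=(k/r)/|X_i|$. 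The next step is to relate this conditional expectation to the quantity $q_a:=\E_{R\sim\U(X_i)}[f_{S\cup(R\setminus\{a\})}(a)]$ that the \filter\ algorithm actually thresholds against. Conditional on $a\in R$, the set $R\setminus\{a\}$ is a uniform size-$(k/r-1)$ subset of $X_i\setminus\{a\}$; conditional on $a\notin R$, $R\setminus\{a\}=R$ is a uniform size-$(k/r)$ subset of $X_i\setminus\{a\}$. By submodularity the marginal of $a$ over the smaller random set dominates the one over the larger, so the conditional expectation given $a\in R$ is at least $q_a$. This yields
\[
\EU[R]\bigl[f_S(R)\bigr] \;\geq\; \frac{k/r}{|X_i|}\sum_{a\in X_i} q_a.
\]

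Now I apply the hypothesis that \filter\ did not terminate at iteration $i$, i.e.\ $\E_R[f_S(R)] < (1-\epsilon)(\OPT-f(S))/r$, to obtain
\[
\sum_{a\in X_i} q_a \;<\; \frac{|X_i|\,(1-\epsilon)(\OPT-f(S))}{k}.
\]
Every element of $X_{i+1}$ survived filtering, so its $q_a$ is at least $(1+\epsilon/2)(1-\epsilon)(\OPT-f(S))/k$; summing this lower bound over $X_{i+1}$ and combining with the above upper bound on $\sum_{a\in X_i} q_a$ immediately gives $|X_{i+1}|<|X_i|/(1+\epsilon/2)$.

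The main subtlety (rather than an obstacle) is the bookkeeping between the conditional expectation that naturally arises from the submodular decomposition and the unconditional expectation $q_a$ appearing in the algorithm's filtering criterion; both steps above are pure applications of submodularity, but one has to be careful that the comparison goes the correct way so that the inequality chain still produces the desired factor $(1+\epsilon/2)$.
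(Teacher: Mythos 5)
Your proof is correct and takes essentially the same approach as the paper: a submodular telescoping decomposition of $\E[f_S(R)]$, the comparison of the conditional expectation given $a\in R$ against the unconditional quantity $q_a$ that the algorithm thresholds, and then the two filtering conditions. The only cosmetic difference is that the paper restricts the decomposition to $R_i\cap X_{i+1}$ from the start (using monotonicity via $\E[f_S(R_i)]\geq\E[f_S(R_i\cap X_{i+1})]$), whereas you sum $q_a$ over all of $X_i$ and drop the non-surviving terms at the end, which uses monotonicity in the equivalent form $q_a\geq 0$.
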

\begin{proof}[Proof Sketch (full proof in Appendix~\ref{sec:appalg1})]
At a high level, since the surviving elements must have high value and a random set has low value, we can then use the thresholds to bound how many such surviving elements there can be while also having a random set of low value. To do so, we focus on the value of $f(R_i \cap X_{i+1})$ of the surviving elements $X_{i+1}$ in a random set $R_i \sim \D_{X_i}$.

First, the proof uses submodularity and  the threshold for elements in $X_{i+1}$ to survive from $X_i$ to show that $\E \left[f_S(R_i \cap X_{i+1})\right]  \geq |X_{i+1}| \cdot \frac{1}{r|X_i|} \cdot\left(1 + \epsilon/2 \right)\left(1 -  \epsilon \right) \left( \OPT - f(S) \right)$. Using monotonicity and the bound on the value of a random set $\E \left[f_S(R_i)\right]$ for \filter \ to discard additional elements, the proof then shows that $   \E \left[f_S(R_i \cap X_{i+1})\right] <  \frac{1}{r} \left(1 - \epsilon \right) \left( \OPT - f(S) \right)$ and concludes that $|X_{i+1}| \leq |X_i|/ (1+\epsilon/2)$.
\end{proof}

Thus, by the previous lemma, there are at most $k/r$ surviving elements after logarithmically many filtering rounds and by Lemma~\ref{lem:Tstar}, these remaining elements must have high value. Thus, \filter \ terminates and we obtain the following main lemma for the number of rounds.

\begin{lemma}
\label{lem:rounds}
For all $S \subseteq N$, if $r \geq 20 \epsilon^{-1}\log_{1+\epsilon/2}(n)  $, then  \filter$(N,S,r)$ terminates after at most $\O\left(\log n\right)$ iterations.
\end{lemma}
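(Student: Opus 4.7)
The plan is to combine Lemma~\ref{lem:pruning} and Lemma~\ref{lem:Tstar} through a short proof by contradiction. Lemma~\ref{lem:pruning} tells us that the set of surviving elements shrinks by a factor of at least $1+\epsilon/2$ whenever \filter\ fails to terminate, while Lemma~\ref{lem:Tstar} says that if \filter\ has run for $\rho$ iterations without terminating (provided $r \geq 20\rho\epsilon^{-1}$), then the surviving elements must still carry nontrivial marginal value $\tfrac{1-\epsilon}{r}(\OPT - f(S))$. These two forces cannot coexist once $\rho$ exceeds $\log_{1+\epsilon/2}(n)$: by then the surviving set is forced to be empty (so its marginal value is $0$), yet must be strictly positive.

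Concretely, set $\rho := \lceil \log_{1+\epsilon/2}(n)\rceil$, and suppose for contradiction that \filter$(N,S,r)$ has not terminated after $\rho$ iterations. Iterating Lemma~\ref{lem:pruning} from $|X_0| \leq n$ yields
\[
|X_\rho| \;<\; \frac{n}{(1+\epsilon/2)^{\rho}} \;\leq\; 1,
\]
so $X_\rho = \emptyset$ and hence $f_S(X_\rho) = 0$. On the other hand, the hypothesis $r \geq 20 \epsilon^{-1}\log_{1+\epsilon/2}(n) \geq 20\rho\epsilon^{-1}$ permits us to invoke Lemma~\ref{lem:Tstar}, which gives
\[
f_S(X_\rho) \;\geq\; \frac{1-\epsilon}{r}\bigl(\OPT - f(S)\bigr).
\]
Provided $\OPT > f(S)$, this lower bound is strictly positive, contradicting $f_S(X_\rho)=0$. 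If instead $\OPT \leq f(S)$, the while-condition in \filter\ is satisfied immediately (its right-hand side is non-positive while the left is non-negative), so the algorithm terminates in zero iterations. In either case, the number of iterations is at most $\rho = \mathcal{O}(\log n)$.

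The main obstacle is essentially bookkeeping: making sure that the hypothesis of Lemma~\ref{lem:Tstar} ($r \geq 20\rho\epsilon^{-1}$) is compatible with the bound on $\rho$ forced by Lemma~\ref{lem:pruning}, which is exactly why the statement of Lemma~\ref{lem:rounds} couples $r$ to $\log_{1+\epsilon/2}(n)$. A secondary subtlety is that Lemma~\ref{lem:pruning} applies at iteration $i$ only under the premise that \filter\ did not exit there; this is exactly the standing assumption of the contradiction argument, so iterating the bound over $i = 0, \ldots, \rho-1$ is legitimate.
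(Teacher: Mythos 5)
Your argument is correct and takes essentially the same route as the paper, chaining Lemma~\ref{lem:pruning} (geometric shrinkage of the surviving set) with Lemma~\ref{lem:Tstar} (positive surviving marginal value) to force termination within $\O(\log n)$ filtering rounds. The only differences are cosmetic: you derive a contradiction by driving $|X_\rho|$ all the way to zero, whereas the paper observes that once $|X_\rho|\le k/r$ the random set $R$ equals $X_\rho$ and the while-loop exit condition is met; and note a small wording slip in your degenerate case $\OPT\le f(S)$, where you say the while-condition ``is satisfied'' when you mean it \emph{fails} (left side nonnegative, right side nonpositive), though the conclusion of zero iterations is clearly intended.
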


\begin{proof} 
If \filter$(N,S,r)$ has not yet terminated after $\log_{1+\epsilon/2}(n)$ iterations, then,
by Lemma~\ref{lem:pruning}, at most $k/r$ elements survived these $\rho = \log_{1+\epsilon/2}(n)$ iterations. By Lemma~\ref{lem:Tstar}, with $r \geq 20 \rho \epsilon^{-1} $, the set $X_\rho$ of elements that survive these $\log_{1+\epsilon/2}(n)$ iterations is such that $f_S(X_{\rho}) \geq \frac{1}{r} \cdot \left(1 -  \epsilon \right) \left( \OPT - f(S) \right)$. Since there are at most $k/r$ surviving elements $X$, $R = X_{\rho}$ for $R \sim \U(X_{\rho})$ and 
$$f_S(R) = f_S(X_{\rho}) \geq \frac{1}{r} \cdot \left(1 - \epsilon \right) \left( \OPT - f(S) \right),$$
and \filter$(N,S,r)$ terminates at this iteration.
\end{proof}

\paragraph{Main result for \algone.}
We are now ready to prove the main result for \algone. By Lemma~\ref{lem:rounds}, at every iteration of \algone, in at most $\O\left(\log n\right)$ iterations of \filter, the value of the solution $S$ is increased by at least $\left(1 -  \epsilon \right) \left( \OPT - f(S) \right) / r$  with $k/r$ new elements. The analysis of the $1 -1/e-\epsilon$ approximation then follows similarly as for the standard analysis of the greedy algorithm. Regarding the total number of rounds, we fix parameter $r = 20 \epsilon^{-1}\log_{1+\epsilon/2}(n) $. there are at most $r $ iterations of \algone, each of which with at most $\O\left(\log n\right)$ iterations of \filter \  and the queries at every iteration of \filter \ are non-adaptive. We defer the proof to Appendix~\ref{sec:appalg1}.

\begin{restatable}{rThm}{thmalgone}
\label{thm:algone}
For any constant $\epsilon > 0$, \algone  \ is a  $\mathcal{O}\left(\log^2 n\right)$-adaptive algorithm that obtains a $1 - 1/e - \epsilon$ approximation, with parameter $r = 20 \epsilon^{-1} \log_{1+\epsilon/2}(n) $.
\end{restatable}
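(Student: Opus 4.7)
The plan is to combine the per-iteration gain guarantee established in Lemma~\ref{lem:rounds} with a standard greedy-style telescoping argument, and then bound the adaptivity by multiplying the number of outer iterations by the round complexity of \filter.

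\textbf{Approximation.} Fix $r = 20\epsilon^{-1}\log_{1+\epsilon/2}(n)$, let $S_0 = \emptyset$, and let $S_j$ denote the solution at the end of iteration $j$ of \algone. By Lemma~\ref{lem:rounds}, each call $\filter(N, S_{j-1}, r)$ terminates within $\O(\log n)$ while-loop iterations. By the termination condition of \filter and the idealized assumption that an expectation lower bound is realized by some set in the support, the sampled set $R \sim \U(X)$ that is appended in the $j$-th iteration satisfies
\[
f(S_j) - f(S_{j-1}) \ = \ f_{S_{j-1}}(R) \ \geq \ \frac{1-\epsilon}{r}\bigl(\OPT - f(S_{j-1})\bigr).
\]
Rearranging yields $\OPT - f(S_j) \leq \bigl(1 - \tfrac{1-\epsilon}{r}\bigr)\bigl(\OPT - f(S_{j-1})\bigr)$. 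Iterating $r$ times,
\[
\OPT - f(S_r) \ \leq \ \Bigl(1 - \tfrac{1-\epsilon}{r}\Bigr)^{r}\OPT \ \leq \ e^{-(1-\epsilon)}\OPT.
\]
Hence $f(S_r) \geq \bigl(1 - e^{-(1-\epsilon)}\bigr)\OPT$. Since $e^{-(1-\epsilon)} = e^{-1}e^{\epsilon} \leq e^{-1}(1 + 2\epsilon)$ for small enough $\epsilon$, we obtain $f(S_r) \geq (1 - 1/e - 2\epsilon/e)\OPT$; replacing $\epsilon$ by a suitable constant multiple of itself gives the desired $1 - 1/e - \epsilon$ guarantee.

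\textbf{Adaptivity.} With the chosen $r = \O(\log n)$, there are $\O(\log n)$ outer iterations of \algone. Each outer iteration makes a single call to \filter, which by Lemma~\ref{lem:rounds} performs $\O(\log n)$ while-loop iterations. Within each while-loop iteration of \filter, all queries needed to evaluate the expectations $\E_{R \sim \U(X)}[f_S(R)]$ and $\E_{R \sim \U(X)}[f_{S \cup (R \setminus \{a\})}(a)]$ (for every $a \in X$) depend only on the current $X$ and $S$, so they can be issued non-adaptively in a single round. Multiplying $\O(\log n)$ outer iterations by $\O(\log n)$ rounds per \filter call yields $\O(\log^2 n)$ adaptive rounds overall.

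\textbf{Main obstacle.} The only delicate piece is reconciling $1 - e^{-(1-\epsilon)}$ with the $1 - 1/e - \epsilon$ target, which is purely a matter of rescaling $\epsilon$ by an absolute constant. Everything else — the per-iteration marginal gain, the termination of \filter, and the non-adaptivity within each filtering round — is handed to us by Lemmas~\ref{lem:Tstar}, \ref{lem:pruning}, and \ref{lem:rounds}, so the proof reduces to assembling these ingredients. The argument is deferred to Appendix~\ref{sec:appalg1} precisely because it is a straightforward composition of the preceding lemmas with the classical greedy induction.
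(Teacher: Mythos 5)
Your proof is correct and follows essentially the same approach as the paper's: both derive the per-iteration gain $f(S_j) - f(S_{j-1}) \geq \frac{1-\epsilon}{r}(\OPT - f(S_{j-1}))$ from the termination condition of \filter, telescope (equivalently, induct) $r$ times, pass to $e^{-(1-\epsilon)}$, and multiply $r = \O(\log n)$ outer iterations by the $\O(\log n)$ filtering rounds from Lemma~\ref{lem:rounds}. One small simplification: your final rescaling of $\epsilon$ is unnecessary, since $2\epsilon/e < \epsilon$ already gives $1 - e^{-(1-\epsilon)} \geq 1 - 1/e - 2\epsilon/e \geq 1 - 1/e - \epsilon$ directly, which is exactly how the paper closes the argument.
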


% !TeX root = main.tex

\section{\algtwo : An $\mathcal{O}(\log n)$-adaptive Algorithm}\label{sec:alg2}
In this section, we build on the algorithm and analysis from the previous section to obtain the main result of this paper. We present \algtwo \ which accelerates \algone \ by using less filtering rounds while maintaining the same approximation guarantee.  In particular, it obtains an approximation arbitrarily close to $1-1/e$ in logarithmically-many adaptive rounds.

\subsection{Description of the algorithm}

\algtwo \  iteratively adds a block of $k/r$ elements obtained using the \filter \ subroutine to the existing solution $S$, exactly as \algone. The improvement in adaptivity comes from the use of \emph{epochs}. An epoch is a sequence of iterations during which the value of the solution $S$ increases by at most $\epsilon(\OPT - f(S))/20$. During an epoch, the algorithm invokes \filter \ with the surviving elements from the previous iteration of \algtwo, rather than all elements in the ground set as in \algone. In a new epoch, \filter \ is then again invoked with the ground set. A formal description of an idealized version is included below.

 \begin{algorithm}[H]
\caption{\algtwo}
\begin{algorithmic}
    	\INPUT  bound on number of iterations  $r$
    	\STATE $S \leftarrow \emptyset$
    	\STATE \textbf{for}  $\frac{20}{\epsilon}$ epochs   \textbf{do}
    	\STATE \ \ \  $ X \leftarrow N, T \leftarrow \emptyset$
    	\STATE \ \ \ \textbf{while}  $f_{S}(T) <  (\epsilon/20)(\OPT - f(S))$ and $|S \cup T| < k$ \textbf{do}
    \STATE \ \ \ \ \ \  $X \leftarrow \textsc{Filter}(X,S \cup T,r)$
	\STATE \ \  \ \ \ \ $T \leftarrow T \cup R$, where $R \sim \mathcal{U}(X)$
    	\STATE \ \ \  $S \leftarrow S \cup T$
    	\RETURN $S$ 
  \end{algorithmic}
  \label{alg:2}
\end{algorithm}

\subsection{Analysis of \algtwo} 
As in the previous section, we analyze the idealized version described above and defer the analysis of the full algorithm to the appendix.  Our analysis for \algtwo \ relies on the properties of every \emph{epoch}. In particular, we first show that during an epoch, the surviving elements $X$ have marginal contribution at least $\epsilon(\OPT - f(S))/20$ to $S \cup T$ (Section~\ref{sec:alg2apx}). Notice that the marginal contribution is with respect to the set $S \cup T$ and the value with respect only to $S$.  We then show that for any epoch, the total number of iterations of \filter \ during that epoch is $\mathcal{O}(\log n )$  (Section~\ref{sec:alg2rounds}). We emphasize that an iteration of \filter \ is different than an iteration of \algtwo, i.e., an epoch consists of multiple iterations of \algtwo, each of which consists of multiple iterations of \filter.  Since there are at most $20/\epsilon$ epochs, the amortized number of iterations of \filter \ per iteration of \algtwo \ is now constant.

\subsubsection{Bounding the value of elements that survive an epoch}
\label{sec:alg2apx}

For any given epoch, we first bound the marginal contribution of $O$ to $S\cup T$ and the random sets $\{R_i\}_{i=1}^{\rho}$ when there are $\rho$ iterations of filtering during the epoch.  Similar to the previous section, we show that the marginal contribution of $O$ to $S \cup T$ and the random sets is arbitrarily close to the desired $\OPT - f(S)$ value. The analysis is similar to the analysis of Lemma~\ref{lem:optContrib}, except for a subtle yet crucial difference.  The analysis in this section needs to handle the fact that the solution $S \cup T$ changes during the epoch. To do so we rely on the fact that the increase in the value of $S \cup T$ during an epoch is bounded.  Due to space considerations, we defer the proof to Appendix~\ref{sec:appalg2}.

 \begin{restatable}{rLem}{optContribTwo} 
\label{lem:optContrib2} For any epoch $j$ and $\epsilon > 0$,
let $R_i \sim \U(X)$ be the random set at iteration $i$ of filtering during epoch $j$. For all $r, \rho > 0$, if epoch $j$ has not ended after $\rho$ iterations of filtering, then
$$\EU[R_1, \ldots, R_\rho]\left[f_{S_j^+ \cup \left(\cup_{i=1}^\rho R_i \right)}\left(O\right)\right] \geq \left(1   - \frac{\rho}{r} - \epsilon/20\right) \cdot \left(\OPT - f(S_j)\right)$$
where $S_j$ is the set $S$ at epoch $j$ and $S_j^+$ is the set $S \cup T$ at the last iteration of epoch $j$.
\end{restatable}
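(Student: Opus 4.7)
The plan is to follow the template of Lemma~\ref{lem:optContrib}, being careful that the filter's ``base set'' $S_j \cup T$ now evolves over the course of the epoch. Let $T^* := S_j^+ \setminus S_j$ denote the value of $T$ at the last iteration of epoch $j$, and for each filter-iteration $i \in [\rho]$ let $\tilde{T}_i$ denote the value of $T$ passed into the \filter\ call that contains iteration $i$ (i.e.\ the $T$ at the start of the outer iteration of \algtwo\ containing this filter iteration). Because $T$ only grows during an epoch, $\tilde T_i \subseteq T^*$ pointwise for every $i$.

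I would start with the analogue of the key identity used in Lemma~\ref{lem:optContrib}, now anchored at $S_j \cup T^* = S_j^+$:
$$f_{S_j^+ \cup (\cup_{i=1}^\rho R_i)}(O) \;=\; f_{S_j \cup T^*}\!\left(O \cup (\cup_{i=1}^\rho R_i)\right) \;-\; f_{S_j \cup T^*}\!\left(\cup_{i=1}^\rho R_i\right).$$
For the first (positive) term, monotonicity gives $f_{S_j \cup T^*}(O \cup (\cup_i R_i)) \geq \OPT - f(S_j \cup T^*)$. The epoch hypothesis, namely that the outer while-loop of \algtwo\ has not exited, forces $f_{S_j}(T^*) < (\epsilon/20)(\OPT - f(S_j))$, which upgrades the bound on the first term to $(1 - \epsilon/20)(\OPT - f(S_j))$.

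For the subtracted term I would apply subadditivity and then a submodularity-based re-basing. Subadditivity of the monotone submodular function $f_{S_j \cup T^*}$ gives $f_{S_j \cup T^*}(\cup_i R_i) \leq \sum_i f_{S_j \cup T^*}(R_i)$, and submodularity combined with $\tilde T_i \subseteq T^*$ yields the pointwise inequality $f_{S_j \cup T^*}(R_i) \leq f_{S_j \cup \tilde T_i}(R_i)$. The filter hypothesis now applies: since the filter's while-loop has not terminated at iteration $i$, $\E_{R_i}[f_{S_j \cup \tilde T_i}(R_i)] < (1-\epsilon)(\OPT - f(S_j \cup \tilde T_i))/r \leq (\OPT - f(S_j))/r$, the last step using $f(S_j \cup \tilde T_i) \geq f(S_j)$. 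Summing over $i$ bounds the subtracted term by $(\rho/r)(\OPT - f(S_j))$; combining with the first term gives the claimed $(1 - \rho/r - \epsilon/20)(\OPT - f(S_j))$.

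The main obstacle, and the reason this is not a mechanical copy of Lemma~\ref{lem:optContrib}, is precisely the time-varying base set: the filter guarantee at iteration $i$ controls marginals relative to $S_j \cup \tilde T_i$, whereas the target bound is phrased relative to the fixed $S_j$. The fix is to anchor the whole decomposition at the largest set $S_j \cup T^* = S_j^+$, pay a single $\epsilon/20$ loss via the epoch condition to relate $f(S_j \cup T^*)$ back to $f(S_j)$, and apply submodularity once to lift each filter bound from $\tilde T_i$ up to $T^*$ without any additional loss.
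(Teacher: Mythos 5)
Your proposal matches the paper's proof essentially step for step: anchor the telescoping decomposition at $S_j^+$, bound the positive term via monotonicity plus the epoch condition $f_{S_j}(T^*) < (\epsilon/20)(\OPT - f(S_j))$, and bound the subtracted term via subadditivity followed by a submodularity step that lowers the base set from $S_j^+$ to $S_{j,i} = S_j \cup \tilde T_i$ so that the \filter\ while-condition applies. The only cosmetic difference is that you drop the $(1-\epsilon)$ factor one line earlier than the paper does; the argument and the resulting bound are the same.
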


Next, we bound the value of elements that survive the filtering iterations during an epoch. The proof is similar to that of Lemma~\ref{lem:Tstar}, modified to handle the fact that the solution $S$ evolves during an epoch. We defer the proof to Appendix~\ref{sec:appalg2}.

\begin{restatable}{rLem}{lemTstartwo}
\label{lem:Tstar2}
For any epoch $j$ and  $\epsilon > 0$, if $r \geq 20 \rho \epsilon^{-1}$, then the  elements $X_\rho$ that survive $\rho$ iterations of filtering at epoch $j$  satisfy 
$$f_{S_j^+}(X_{\rho}) \geq (\epsilon/4) \left(1 - \epsilon \right) \left( \OPT - f(S_j) \right).$$
where $S_j$ is the set $S$ at epoch $j$ and $S_j^+$ is the set $S \cup T$ at the last iteration of epoch $j$.
\end{restatable}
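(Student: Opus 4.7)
The plan is to adapt the proof of Lemma~\ref{lem:Tstar} to the epoch setting. The statement is structurally the same: after $\rho$ filtering iterations, a subset of the optimal solution must survive with sufficient marginal contribution. Two things change compared to the \algone \ analysis: the reference point is $S_j^+$ rather than a fixed $S$, and the threshold used inside \filter \ drifts during the epoch because $f(S_j \cup T)$ grows while $S_j$ is held fixed. The key new ingredient is Lemma~\ref{lem:optContrib2} in place of Lemma~\ref{lem:optContrib}, combined with the epoch invariant $f(S_j \cup T) - f(S_j) \leq \epsilon(\OPT - f(S_j))/20$, which is precisely what will absorb the drift.

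Concretely, I would fix an ordering $O = \{o_1, \ldots, o_k\}$, let $O_\ell = \{o_1, \ldots, o_\ell\}$, and define
\[
\Delta_\ell := \EU[R_1,\ldots,R_\rho]\!\left[f_{S_j^+ \cup O_{\ell-1}\cup (\cup_{i=1}^\rho R_i \setminus \{o_\ell\})}(o_\ell)\right], \qquad \Delta := \tfrac{1}{k}\bigl(1 - \tfrac{\rho}{r} - \tfrac{\epsilon}{20}\bigr)(\OPT - f(S_j)),
\]
and set $T := \{o_\ell : \Delta_\ell \geq (1 - \epsilon/4)\Delta\}$. The proof then splits into the same two halves as Lemma~\ref{lem:Tstar}: first show $T \subseteq X_\rho$, then lower bound $f_{S_j^+}(T)$.

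The main obstacle is the survival argument, because the threshold inside \filter \ uses the current $f(S_j \cup T_i)$ at filtering iteration $i$, which differs from $f(S_j)$. Here I would use the epoch condition to write $\OPT - f(S_j \cup T_i) \geq (1 - \epsilon/20)(\OPT - f(S_j))$, and use submodularity together with $S_j \cup T_i \subseteq S_j^+$ and $R_i \subseteq \cup_k R_k$ to conclude that $\EU[R_i]\bigl[f_{S_j \cup T_i \cup (R_i \setminus \{o_\ell\})}(o_\ell)\bigr] \geq \Delta_\ell$. A direct comparison of $(1-\epsilon/4)\Delta$ with the filter threshold $(1+\epsilon/2)(1-\epsilon)(\OPT - f(S_j \cup T_i))/k$ then goes through under $r \geq 20\rho\epsilon^{-1}$, since the two slacks $\rho/r$ and $\epsilon/20$ on $\Delta$ jointly dominate the gap between $(1-\epsilon/4)$ and $(1+\epsilon/2)(1-\epsilon)$. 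This is the step that requires the epoch drift to be controlled, and it explains the choice of the $\epsilon/20$ factor in the definition of an epoch.

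The rest is routine given Lemma~\ref{lem:optContrib2}: the lemma gives $\sum_\ell \Delta_\ell \geq k\Delta$, while elements outside $T$ contribute strictly less than $(1-\epsilon/4)\Delta$ each, yielding $\sum_{o_\ell \in T}\Delta_\ell \geq k(\epsilon/4)\Delta$. Submodularity gives $f_{S_j^+ \cup O_{\ell-1}}(o_\ell) \geq \Delta_\ell$, and telescoping across $o_\ell \in T$ produces $f_{S_j^+}(T) \geq \sum_{o_\ell \in T}\Delta_\ell$. Finally, monotonicity and $T \subseteq X_\rho$ give
\[
f_{S_j^+}(X_\rho) \geq f_{S_j^+}(T) \geq (\epsilon/4)\bigl(1 - \tfrac{\rho}{r} - \tfrac{\epsilon}{20}\bigr)(\OPT - f(S_j)) \geq (\epsilon/4)(1-\epsilon)(\OPT - f(S_j)),
\]
where the last inequality uses $r \geq 20\rho\epsilon^{-1}$, completing the proof.
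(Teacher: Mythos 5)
Your proposal follows the same route as the paper: same $\Delta_\ell$, same $\Delta$ with the extra $\epsilon/20$ slack carried over from Lemma~\ref{lem:optContrib2}, same subset of $O$ (the paper calls it $Q$ to avoid clashing with the algorithm's $T$), same two-part structure (survival then lower bound), and the same final chain $f_{S_j^+}(X_\rho) \geq f_{S_j^+}(Q) \geq \sum_{o_\ell \in Q}\Delta_\ell \geq (\epsilon/4)k\Delta$.

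One small slip in the survival step: you invoke the epoch condition as $\OPT - f(S_j \cup T_i) \geq (1-\epsilon/20)(\OPT - f(S_j))$, but this is a \emph{lower} bound on the filter threshold, and a lower bound cannot be used to show $(1-\epsilon/4)\Delta$ clears it. What you actually need is the opposite direction, which is just monotonicity: since $S_j \subseteq S_j \cup T_i$, we have $\OPT - f(S_j \cup T_i) \leq \OPT - f(S_j)$, so the threshold $\tfrac{1}{k}(1+\epsilon/2)(1-\epsilon)(\OPT - f(S_j \cup T_i))$ is at most $\tfrac{1}{k}(1+\epsilon/2)(1-\epsilon)(\OPT - f(S_j))$, and $(1-\epsilon/4)\Delta \geq \tfrac{1}{k}(1-\epsilon/4)(1-\epsilon/10)(\OPT - f(S_j)) \geq \tfrac{1}{k}(1+\epsilon/2)(1-\epsilon)(\OPT - f(S_j))$ under $r \geq 20\rho\epsilon^{-1}$. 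In other words, the threshold only shrinks as $T$ grows, so survival gets easier, and the epoch condition is not needed here at all; its only role in this lemma is already baked in through the $\epsilon/20$ term of $\Delta$ coming from Lemma~\ref{lem:optContrib2}. With that correction the argument is exactly the paper's.
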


\subsubsection{The adaptivity of an epoch} \label{sec:alg2rounds}

The next lemma bounds  the total number of iterations of filtering per epoch. At a high level,
similarly as for \algone, a constant number of elements are discarded at each iteration of filtering by Lemma~\ref{lem:pruning} and there are at most $k/r$ surviving elements after logarithmically-many filtering rounds. Then, we use Lemma~\ref{lem:Tstar2} and the fact that the surviving elements during an epoch have high contribution to show that the epoch terminates. 

\begin{lemma}
\label{lem:rounds2}
In any epoch  of \algtwo \ and for any $\epsilon \in (0, 1/2)$, if $r \geq 20 \log_{1+\epsilon/2}(n)/ \epsilon$, then  there are at most $\log_{1+\epsilon/2}(n)$ iterations of filtering during the epoch.
\end{lemma}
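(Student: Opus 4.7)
The plan is to proceed by contradiction: suppose more than $\rho^\star := \log_{1+\epsilon/2}(n)$ filtering iterations take place during the epoch, where a \emph{filtering iteration} means one execution of the body of the while loop inside \filter, regardless of which invocation of \filter \ produced it. I would then combine the geometric shrinkage guaranteed by Lemma~\ref{lem:pruning} with the positive value lower bound guaranteed by Lemma~\ref{lem:Tstar2} to conclude that, after the $\rho^\star$-th filtering iteration, the set $X_{\rho^\star}$ of surviving elements would have to be simultaneously empty and of strictly positive marginal value to $S_j^+$, which is impossible.

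For the pruning side, each of the $\rho^\star$ filtering iterations corresponds to a step of some \filter \ call that did not terminate at that step, so Lemma~\ref{lem:pruning} applies and gives $|X_i| < |X_{i-1}|/(1+\epsilon/2)$ at each such step. The key observation is that, within an epoch, \algtwo \ reuses the surviving set $X$ as the argument of the next invocation of \filter \ (this is the essential difference from \algone, which would reset to $N$), so the factor-$(1+\epsilon/2)$ contraction compounds across the successive \filter \ calls made inside the same epoch. Iterating across all $\rho^\star$ filtering iterations starting from $|X_0|\leq n$ then yields $|X_{\rho^\star}| < n/(1+\epsilon/2)^{\rho^\star} = 1$, i.e.\ $X_{\rho^\star} = \emptyset$, and in particular $f_{S_j^+}(X_{\rho^\star}) = 0$.

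For the value side, since we have assumed the epoch has not ended during these $\rho^\star$ filtering iterations, I would apply Lemma~\ref{lem:Tstar2} with $\rho = \rho^\star$; its hypothesis $r \geq 20\rho/\epsilon$ is exactly the standing assumption $r \geq 20\log_{1+\epsilon/2}(n)/\epsilon$. The conclusion $f_{S_j^+}(X_{\rho^\star}) \geq (\epsilon/4)(1-\epsilon)(\OPT - f(S_j))$ is strictly positive in the non-trivial case $f(S_j) < \OPT$ (if $f(S_j) = \OPT$ the epoch's outer while condition is false from the start and the lemma holds vacuously), contradicting the equality $f_{S_j^+}(X_{\rho^\star}) = 0$ derived above. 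Hence at most $\log_{1+\epsilon/2}(n)$ filtering iterations can take place during the epoch.

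The part I expect to be most delicate is the bookkeeping in the second paragraph: Lemma~\ref{lem:pruning} is stated for consecutive iterations within a single invocation of \filter, and one must check that its contraction factor still compounds correctly across the several successive \filter \ calls that the inner while loop of \algtwo \ makes within one epoch. Once that reuse of $X$ across calls is explicit, the collision between the empty surviving set and Lemma~\ref{lem:Tstar2}'s strictly positive lower bound is immediate and no further estimate is needed.
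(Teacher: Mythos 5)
Your argument is correct and uses the same ingredients as the paper's proof: compounding the per-iteration $1/(1+\epsilon/2)$ shrinkage of Lemma~\ref{lem:pruning} across the successive \filter\ invocations made within a single epoch (the key observation being that \algtwo\ feeds the surviving $X$ forward between calls), and then colliding this with the positive lower bound of Lemma~\ref{lem:Tstar2}. The paper reaches the conclusion slightly more constructively---once $|X_\rho|\le k/r$ the draw $R\sim\U(X_\rho)$ is deterministic, Lemma~\ref{lem:Tstar2} then forces the current \filter\ call to terminate and the epoch-ending condition $f_S(T)\ge(\epsilon/20)(\OPT-f(S))$ to hold---whereas you push the shrinkage all the way to $|X_{\rho^\star}|=0$ for a starker but logically equivalent contradiction.
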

\begin{proof}
If an epoch $j$  has not yet terminated after $\rho = \log_{1+\epsilon/2}(n)$ iterations of filtering, then,
by Lemma~\ref{lem:pruning}, at most $k/r$ elements survived these $\rho$ filtering iterations. We consider the set $T$ obtained after these $\rho$ filtering iterations.  By Lemma~\ref{lem:Tstar2}, with  $r \geq 20 \rho\cdot  \epsilon^{-1}$, the set $X_{\rho}$ of elements that survive these iterations is such that $f_{S \cup T}(X_{\rho}) \geq (\epsilon/4) \cdot \left(1 - \epsilon \right) \left( \OPT - f(S) \right)$. Since there are at most $k/r$ surviving elements, $R = X_\rho$ for $R \sim \U(X_\rho)$ and 
$$\E\left[f_{S \cup T}(R)\right] \geq f_{S \cup T}(X_{\rho}) \geq (\epsilon/4) \cdot \left(1 - \epsilon \right) \left( \OPT - f(S) \right) \geq \frac{1}{r}\cdot \left(1 - \epsilon \right) \left( \OPT - f(S \cup T) \right)$$
where the last inequality is by monotonicity. Thus, the current call to the \filter \ subroutine terminates and $X_{\rho}$ is added to $T$ by the algorithm. Next,
$$f_{S}(T \cup X_{\rho}) \geq f_{S}( X_{\rho}) \geq  f_{S \cup T}( X_{\rho}) \geq  (\epsilon/4) \cdot \left(1 - \epsilon \right) \left( \OPT - f(S) \right) \geq (\epsilon/20)\left( \OPT - f(S) \right) $$
the first inequality is by monotonicity and the second by submodularity. Thus,  epoch $j$ ends.
\end{proof}

\subsection{Main result}

We are now ready to prove the main result of the paper which is that analysis of  \algtwo. There are two cases: either the algorithm terminates after $r$ iterations with $|S \cup T| = k$ or it terminates after $20/\epsilon$ epochs. %In both cases, we obtain a $1 -1/e - \epsilon$ approximation similarly as for the greedy algorithm. 
With $r = \mathcal{O} ( \log n )$, there are at most $ \mathcal{O}( \log n )$ iterations of adding elements and at most $\mathcal{O}(1)$ epochs with $\mathcal{O} (\log n )$ filtering iterations per epoch. Thus the total number of adaptive rounds is  $\mathcal{O}(\log n )$.
The proof is deferred to Appendix~\ref{sec:appalg2}
\begin{restatable}{rThm}{thmalgtwo}
\label{thm:alg2}
For any constant $\epsilon > 0$, when using parameter $r = 20\epsilon^{-1}\log_{1+\epsilon/2} (n)$,  \algtwo  \ obtains a $1 - 1/e - \epsilon$ approximation in  $\mathcal{O} (\log n)$-adaptive  steps.
\end{restatable}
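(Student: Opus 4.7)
The plan is to analyze the approximation guarantee via a case split on how \algtwo\ halts, and to aggregate the adaptivity bound over the $20/\epsilon$ epochs using Lemma~\ref{lem:rounds2}. Let $S^{(0)} \subseteq S^{(1)} \subseteq \cdots$ denote the chronological sequence of intermediate solutions $S \cup T$ produced by successive iterations of the inner \textbf{while} loop across all epochs, so that $S^{(i+1)} = S^{(i)} \cup R_i$ where $R_i \sim \U(X)$ is the block of size $k/r$ added in iteration $i$. Because each such $X$ is returned by a call to \filter, its termination condition together with the idealized assumption (that a sample realizes the expectation used in the filter guard) gives the per-block gain inequality $f_{S^{(i)}}(R_i) \geq (1-\epsilon)(\OPT - f(S^{(i)}))/r$.

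For the approximation, consider two cases. \emph{Case A}: the algorithm exits via the clause $|S \cup T| = k$. Then the total number of block additions is exactly $r$, since each block contributes $k/r$ elements. Iterating the per-block inequality yields $\OPT - f(S^{(r)}) \leq (1 - (1-\epsilon)/r)^{r}\,\OPT \leq e^{-(1-\epsilon)}\,\OPT$, which gives $f(S) \geq 1 - e^{-(1-\epsilon)}\,\OPT \geq (1 - 1/e - \epsilon)\,\OPT$ for small $\epsilon$. \emph{Case B}: all $20/\epsilon$ epochs finish without ever hitting the cardinality constraint. Then each completed epoch $j$ exits via $f_{S_j}(T_j) \geq (\epsilon/20)(\OPT - f(S_j))$, where $S_j$ is the value of $S$ at the start of epoch $j$; this gives $\OPT - f(S_{j+1}) \leq (1 - \epsilon/20)(\OPT - f(S_j))$. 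Iterating over $20/\epsilon$ epochs, $\OPT - f(S) \leq (1 - \epsilon/20)^{20/\epsilon}\,\OPT \leq e^{-1}\,\OPT$, so $f(S) \geq (1 - 1/e)\,\OPT$.

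For the adaptivity, fix $r = 20\epsilon^{-1}\log_{1+\epsilon/2}(n)$, which satisfies the hypothesis of Lemma~\ref{lem:rounds2}. That lemma bounds the total number of filtering iterations \emph{within a single epoch} by $\log_{1+\epsilon/2}(n) = \mathcal{O}(\log n)$. Since the queries inside one filtering iteration are non-adaptive (and the subsequent sampling of $R$ adds at most a constant factor per iteration of the inner while loop), each epoch uses $\mathcal{O}(\log n)$ adaptive rounds. Summing over the $20/\epsilon = \mathcal{O}(1)$ epochs gives a total of $\mathcal{O}(\log n)$ adaptive rounds.

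The main obstacle is not analytic but bookkeeping: one must be sure that the chronology of $S \cup T$ across epoch boundaries is set up so the per-block inequality from \filter\ (whose second argument is $S \cup T$, not the epoch-start $S$) chains correctly through the geometric telescoping, and that Case A genuinely corresponds to $r$ block additions. Once these invariants are pinned down, both termination modes yield at least a $(1-1/e-\epsilon)$ approximation and the adaptivity aggregation is immediate from Lemma~\ref{lem:rounds2}; the full formal proof is the one given in Appendix~\ref{sec:appalg2}.
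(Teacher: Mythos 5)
Your proposal is correct and follows essentially the same route as the paper's proof in Appendix~\ref{sec:appalg2}: the identical case split (termination via the cardinality constraint with $r$ block additions versus termination after all $20/\epsilon$ epochs), the same per-block and per-epoch geometric decrease inequalities leading to $1-e^{-(1-\epsilon)}$ and $1-e^{-1}$ respectively, and the same invocation of Lemma~\ref{lem:rounds2} aggregated over the constant number of epochs to bound adaptivity by $\mathcal{O}(\log n)$. The only minor imprecision is in the round-counting aside about the inner while loop, but since the total number of block additions is globally at most $r=\mathcal{O}(\log n)$, the bound still goes through exactly as in the paper.
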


Similarly as for \algone, \algtwo \ is an idealized version of the full algorithm since we do not know $\OPT$ and cannot compute expectations exactly. The full algorithm, which guesses $\OPT$ and estimates expectations arbitrarily well by sampling in one adaptive round, is formally described and analyzed in Appendix~\ref{sec:appFull}. The algorithm is randomized due to the sampling at every round and its analysis is nearly identical to that presented in this section while accounting for an additional arbitrarily small errors due to the guessing of $\OPT$ and the estimates of the expectation. The main result is the following theorem for the full algorithm.

\begin{restatable}{rThm}{thmalgFull}
\label{thm:algFull}
For any $\epsilon \in (0, 1/2)$, there exists an algorithm that obtains a $1 - 1/e - \epsilon$ approximation with probability $1 - \delta$ in $\mathcal{O}({\epsilon^{-1} \log_{1+\epsilon/3} n })$ adaptive steps.  Its query complexity in each round is $\mathcal{O}\left(n (  k + \log_{1+\epsilon/3} n  )^2 \frac{1}{\epsilon^2}\log\left(\frac{n}{\delta}\right) \log_{(1 - \epsilon/20)^{-1}}\left(n\right)\right)$.
\end{restatable}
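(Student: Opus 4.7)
The plan is to convert the idealized \algtwo\ into a fully executable randomized algorithm by (i) running $O(\log_{1+\epsilon/3} n)$ copies in parallel with geometrically-spaced guesses of $\OPT$ and (ii) replacing every exact expectation in \filter\ and in the epoch-termination test by an empirical mean over polynomially many samples taken in a single adaptive round. I would then argue that with probability $1-\delta$ the sampled algorithm makes the same filtering and termination decisions as the idealized one up to an $O(\epsilon)$ slack in every threshold, and that this slack degrades the final approximation by at most an additional $O(\epsilon)$; rescaling $\epsilon$ at the outset then delivers the claimed $1 - 1/e - \epsilon$ guarantee.

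To handle the unknown $\OPT$, note that by monotonicity and submodularity $\max_a f(\{a\}) \leq \OPT \leq k \max_a f(\{a\})$, so the $O(\log_{1+\epsilon/3} n)$ values $v_j = (1+\epsilon/3)^j \max_a f(\{a\})$ contain some $v_{j^\star}$ with $\OPT \leq v_{j^\star} \leq (1+\epsilon/3)\OPT$. Running \algtwo\ once with each $v_j$ in parallel and outputting the best-valued solution multiplies the query complexity by $O(\log_{1+\epsilon/3} n)$ and leaves the adaptivity unchanged, because the copies share adaptive rounds. For the copy using $v_{j^\star}$ every threshold inside \filter\ is within a $(1+\epsilon/3)$ factor of the idealized threshold, and this factor is absorbed into the $1-\epsilon$ slack already tolerated by Lemmas~\ref{lem:Tstar2} and~\ref{lem:rounds2}.

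For the expectations, each quantity $\E_{R \sim \U(X)}[f_{S \cup T}(R)]$ and $\E_{R \sim \U(X)}[f_{S \cup T \cup (R\setminus\{a\})}(a)]$ used inside \filter\ is estimated by the empirical mean over $m$ independent draws of $R$. A standard Chernoff/Hoeffding argument, together with the submodularity-based bound on the range of $f_{S \cup T}(R)$ coming from $|R|=k/r$ and $\max_a f(\{a\}) \leq \OPT$, yields that $m = \Theta((k+\log_{1+\epsilon/3} n)^2\, \epsilon^{-2}\log(n/\delta))$ samples suffice to make every estimate accurate to the additive error $\Theta(\epsilon(\OPT - f(S))/k)$ required by the idealized analysis, simultaneously over all estimates via a union bound. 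All $m$ random sets can be drawn and queried in one adaptive round at cost $O(n)$ queries per sample, which is where the leading $n$ in the query complexity appears.

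The main obstacle is verifying that the accumulated additive sampling error does not corrupt either of the structural lemmas of Section~\ref{sec:alg2}: Lemma~\ref{lem:Tstar2} relies on enough optimal elements surviving each filtering round, and Lemma~\ref{lem:pruning} relies on a constant fraction of elements being pruned, both of which depend on the \filter\ thresholds holding exactly. Each needs to be restated with $(1\pm O(\epsilon))$ wiggle room and re-derived, with the arguments going through essentially unchanged because every inequality in the idealized proofs has a constant multiplicative slack to spend. Once this robust version is in place, the round count follows as in Theorem~\ref{thm:alg2}: at most $20/\epsilon$ epochs times $O(\log_{1+\epsilon/2} n)$ filtering iterations per epoch gives $O(\epsilon^{-1}\log_{1+\epsilon/3} n)$ adaptive rounds, and multiplying the per-round cost $n\cdot m$ by the $O(\log_{1+\epsilon/3} n)$ parallel $\OPT$-guesses and by the $O(\log_{(1-\epsilon/20)^{-1}} n)$ distinct regimes of $\OPT - f(S)$ that the sampling thresholds must cover yields precisely the stated per-round query complexity.
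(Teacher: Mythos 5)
Your overall plan—parallel geometrically-spaced guesses of $\OPT$, one-round sampling of each expectation, a union bound, and a re-derivation of the structural lemmas with additive slack—is the same approach the paper takes in Appendix~\ref{sec:appFull} (via \algMainAlmostFull\ and \algMainFull), and the robustness argument you sketch does go through exactly as you say. The adaptivity accounting is also essentially right, modulo a cosmetic slip in the base of the logarithm: the pruning rate in the sampled version degrades from $1/(1+\epsilon/2)$ to $1/(1+\epsilon/3)$ (Lemma~\ref{lem:pruningFull}), which is why $\log_{1+\epsilon/3}n$ rather than $\log_{1+\epsilon/2}n$ is the correct per-epoch bound, but you arrive at the right total anyway.

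However, there is a concrete error in the final query-complexity tally. You multiply the per-round cost by \emph{two} logarithmic factors: $O(\log_{1+\epsilon/3}n)$ for the parallel $\OPT$-guesses, and an additional $O(\log_{(1-\epsilon/20)^{-1}}n)$ for what you call ``distinct regimes of $\OPT - f(S)$ that the sampling thresholds must cover.'' The second factor is spurious. The quantity $f(S)$ is known exactly at every point in the algorithm, so the only unknown the thresholds depend on is $\OPT$, and that is already handled by the single family of parallel guesses; there is nothing further to cover. In the paper, the lone $\log_{(1-\epsilon/20)^{-1}}(n)$ factor in the theorem statement \emph{is} the number of $\OPT$-guesses—just parameterized with a different base than the one you chose. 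Including both your guess factor and your ``regimes'' factor double-counts, and your product $n \cdot m \cdot \log_{1+\epsilon/3}(n) \cdot \log_{(1-\epsilon/20)^{-1}}(n)$ is larger than the stated bound by a $\Theta(\epsilon^{-1}\log n)$ factor, so the claim that your tally ``yields precisely the stated per-round query complexity'' is false. Dropping the invented ``regimes'' factor and keeping a single $\Theta(\epsilon^{-1}\log n)$ multiplier for the $\OPT$-guesses fixes this and recovers the paper's bound.
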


\newpage

\bibliographystyle{alpha}
\bibliography{biblio}

\newcommand{\etalchar}[1]{$^{#1}$}
\begin{thebibliography}{BGSMdW12}

\bibitem[AAAK17]{AAAK17}
Arpit Agarwal, Shivani Agarwal, Sepehr Assadi, and Sanjeev Khanna.
\newblock Learning with limited rounds of adaptivity: Coin tossing, multi-armed
  bandits, and ranking from pairwise comparisons.
\newblock In {\em COLT}, pages 39--75, 2017.

\bibitem[AM10]{alaei2010maximizing}
Saeed Alaei and Azarakhsh Malekian.
\newblock Maximizing sequence-submodular functions and its application to
  online advertising.
\newblock {\em arXiv preprint arXiv:1009.4153}, 2010.

\bibitem[ANRW15]{alon2015welfare}
Noga Alon, Noam Nisan, Ran Raz, and Omri Weinstein.
\newblock Welfare maximization with limited interaction.
\newblock In {\em Foundations of Computer Science (FOCS), 2015 IEEE 56th Annual
  Symposium on}, pages 1499--1512. IEEE, 2015.

\bibitem[AWZ08]{aldroubi2008sequential}
Akram Aldroubi, Haichao Wang, and Kourosh Zarringhalam.
\newblock Sequential adaptive compressed sampling via huffman codes.
\newblock {\em arXiv preprint arXiv:0810.4916}, 2008.

\bibitem[BDF{\etalchar{+}}12]{badanidiyuru2012sketching}
Ashwinkumar Badanidiyuru, Shahar Dobzinski, Hu~Fu, Robert Kleinberg, Noam
  Nisan, and Tim Roughgarden.
\newblock Sketching valuation functions.
\newblock In {\em SODA}, pages 1025--1035. Society for Industrial and Applied
  Mathematics, 2012.

\bibitem[BENW15]{barbosa2015power}
Rafael Barbosa, Alina Ene, Huy Nguyen, and Justin Ward.
\newblock The power of randomization: Distributed submodular maximization on
  massive datasets.
\newblock In {\em ICML}, pages 1236--1244, 2015.

\bibitem[BENW16]{barbosa2016new}
Rafael da~Ponte Barbosa, Alina Ene, Huy~L Nguyen, and Justin Ward.
\newblock A new framework for distributed submodular maximization.
\newblock In {\em FOCS}, pages 645--654. Ieee, 2016.

\bibitem[BGSMdW12]{buhrman2012non}
Harry Buhrman, David Garc{\'\i}a-Soriano, Arie Matsliah, and Ronald de~Wolf.
\newblock The non-adaptive query complexity of testing k-parities.
\newblock {\em arXiv preprint arXiv:1209.3849}, 2012.

\bibitem[Ble96]{blelloch1996programming}
Guy~E Blelloch.
\newblock Programming parallel algorithms.
\newblock {\em Communications of the ACM}, 39(3):85--97, 1996.

\bibitem[BMKK14]{BMKK14}
Ashwinkumar Badanidiyuru, Baharan Mirzasoleiman, Amin Karbasi, and Andreas
  Krause.
\newblock Streaming submodular maximization: massive data summarization on the
  fly.
\newblock In {\em The 20th {ACM} {SIGKDD} International Conference on Knowledge
  Discovery and Data Mining, {KDD} '14, New York, NY, {USA} - August 24 - 27,
  2014}, pages 671--680, 2014.

\bibitem[BMW16]{BMW16}
Mark Braverman, Jieming Mao, and S~Matthew Weinberg.
\newblock Parallel algorithms for select and partition with noisy comparisons.
\newblock In {\em STOC}, pages 851--862, 2016.

\bibitem[BPR{\etalchar{+}}16]{badanidiyuru2016locally}
Ashwinkumar Badanidiyuru, Christos Papadimitriou, Aviad Rubinstein, Lior
  Seeman, and Yaron Singer.
\newblock Locally adaptive optimization: Adaptive seeding for monotone
  submodular functions.
\newblock In {\em Proceedings of the Twenty-Seventh Annual ACM-SIAM Symposium
  on Discrete Algorithms}, pages 414--429. Society for Industrial and Applied
  Mathematics, 2016.

\bibitem[BPT11]{blelloch2011linear}
Guy~E Blelloch, Richard Peng, and Kanat Tangwongsan.
\newblock Linear-work greedy parallel approximate set cover and variants.
\newblock In {\em SPAA}, pages 23--32, 2011.

\bibitem[BRM98]{blelloch1998fast}
Guy~E Blelloch and Margaret Reid-Miller.
\newblock Fast set operations using treaps.
\newblock In {\em SPAA}, pages 16--26, 1998.

\bibitem[BRS89]{berger1989efficient}
Bonnie Berger, John Rompel, and Peter~W Shor.
\newblock Efficient nc algorithms for set cover with applications to learning
  and geometry.
\newblock In {\em FOCS}, pages 54--59. IEEE, 1989.

\bibitem[BS18]{BS18}
Eric Balkanski and Yaron Singer.
\newblock The adaptive complexity of maximizing a submodular function.
\newblock In {\em STOC}, 2018.

\bibitem[BST12]{blelloch2012parallel}
Guy~E Blelloch, Harsha~Vardhan Simhadri, and Kanat Tangwongsan.
\newblock Parallel and i/o efficient set covering algorithms.
\newblock In {\em SPAA}, pages 82--90. ACM, 2012.

\bibitem[BV14]{BV14}
Ashwinkumar Badanidiyuru and Jan Vondr{\'{a}}k.
\newblock Fast algorithms for maximizing submodular functions.
\newblock In {\em Proceedings of the Twenty-Fifth Annual {ACM-SIAM} Symposium
  on Discrete Algorithms, {SODA} 2014, Portland, Oregon, USA, January 5-7,
  2014}, pages 1497--1514, 2014.

\bibitem[CG17]{CG17}
Clement Canonne and Tom Gur.
\newblock An adaptivity hierarchy theorem for property testing.
\newblock {\em arXiv preprint arXiv:1702.05678}, 2017.

\bibitem[Col88]{Col88}
Richard Cole.
\newblock Parallel merge sort.
\newblock {\em SIAM Journal on Computing}, 17(4):770--785, 1988.

\bibitem[CST{\etalchar{+}}17]{chen2017settling}
Xi~Chen, Rocco~A Servedio, Li-Yang Tan, Erik Waingarten, and Jinyu Xie.
\newblock Settling the query complexity of non-adaptive junta testing.
\newblock {\em arXiv preprint arXiv:1704.06314}, 2017.

\bibitem[CWW10]{chen2010scalable}
Wei Chen, Chi Wang, and Yajun Wang.
\newblock Scalable influence maximization for prevalent viral marketing in
  large-scale social networks.
\newblock In {\em Proceedings of the 16th ACM SIGKDD international conference
  on Knowledge discovery and data mining}, pages 1029--1038. ACM, 2010.

\bibitem[CWY09]{chen2009efficient}
Wei Chen, Yajun Wang, and Siyu Yang.
\newblock Efficient influence maximization in social networks.
\newblock In {\em Proceedings of the 15th ACM SIGKDD international conference
  on Knowledge discovery and data mining}, pages 199--208. ACM, 2009.

\bibitem[DG08]{dean2008mapreduce}
Jeffrey Dean and Sanjay Ghemawat.
\newblock Mapreduce: simplified data processing on large clusters.
\newblock {\em Communications of the ACM}, 51(1):107--113, 2008.

\bibitem[DGS84]{duris1984lower}
Pavol Duris, Zvi Galil, and Georg Schnitger.
\newblock Lower bounds on communication complexity.
\newblock In {\em STOC}, pages 81--91, 1984.

\bibitem[DHK{\etalchar{+}}16]{devanur2016whole}
Nikhil~R Devanur, Zhiyi Huang, Nitish Korula, Vahab~S Mirrokni, and Qiqi Yan.
\newblock Whole-page optimization and submodular welfare maximization with
  online bidders.
\newblock {\em ACM Transactions on Economics and Computation}, 4(3):14, 2016.

\bibitem[DNO14]{dobzinski2014economic}
Shahar Dobzinski, Noam Nisan, and Sigal Oren.
\newblock Economic efficiency requires interaction.
\newblock In {\em Proceedings of the forty-sixth annual ACM symposium on Theory
  of computing}, pages 233--242. ACM, 2014.

\bibitem[DR01]{domingos2001mining}
Pedro Domingos and Matt Richardson.
\newblock Mining the network value of customers.
\newblock In {\em Proceedings of the seventh ACM SIGKDD international
  conference on Knowledge discovery and data mining}, pages 57--66. ACM, 2001.

\bibitem[EMZ17]{EMZ17}
Alessandro Epasto, Vahab~S. Mirrokni, and Morteza Zadimoghaddam.
\newblock Bicriteria distributed submodular maximization in a few rounds.
\newblock In {\em SPAA}, pages 25--33, 2017.

\bibitem[FJK10]{frazier2010use}
Joseph~D Frazier, Peter~K Jimack, and Robert~M Kirby.
\newblock On the use of adjoint-based sensitivity estimates to control local
  mesh refinement.
\newblock {\em Commun Comput Phys}, 7:631--8, 2010.

\bibitem[GK10]{golovin2010adaptive}
Daniel Golovin and Andreas Krause.
\newblock Adaptive submodularity: A new approach to active learning and
  stochastic optimization.
\newblock In {\em COLT}, pages 333--345, 2010.

\bibitem[GLL11]{goyal2011celf++}
Amit Goyal, Wei Lu, and Laks~VS Lakshmanan.
\newblock Celf++: optimizing the greedy algorithm for influence maximization in
  social networks.
\newblock In {\em Proceedings of the 20th international conference companion on
  World wide web}, pages 47--48. ACM, 2011.

\bibitem[HBCN09]{haupt2009compressive}
Jarvis~D Haupt, Richard~G Baraniuk, Rui~M Castro, and Robert~D Nowak.
\newblock Compressive distilled sensing: Sparse recovery using adaptivity in
  compressive measurements.
\newblock In {\em Signals, Systems and Computers, 2009 Conference Record of the
  Forty-Third Asilomar Conference on}, pages 1551--1555. IEEE, 2009.

\bibitem[HNC09]{HNC09}
Jarvis Haupt, Robert Nowak, and Rui Castro.
\newblock Adaptive sensing for sparse signal recovery.
\newblock In {\em Digital Signal Processing Workshop and 5th IEEE Signal
  Processing Education Workshop}, pages 702--707. IEEE, 2009.

\bibitem[HS15]{horel2015scalable}
Thibaut Horel and Yaron Singer.
\newblock Scalable methods for adaptively seeding a social network.
\newblock In {\em Proceedings of the 24th International Conference on World
  Wide Web}, pages 441--451. International World Wide Web Conferences Steering
  Committee, 2015.

\bibitem[IPW11]{IPW11}
Piotr Indyk, Eric Price, and David~P Woodruff.
\newblock On the power of adaptivity in sparse recovery.
\newblock In {\em FOCS}, pages 285--294. IEEE, 2011.

\bibitem[JXC08]{ji2008bayesian}
Shihao Ji, Ya~Xue, and Lawrence Carin.
\newblock Bayesian compressive sensing.
\newblock {\em IEEE Transactions on Signal Processing}, 56(6):2346--2356, 2008.

\bibitem[KKT03]{kempe2003maximizing}
David Kempe, Jon Kleinberg, and {\'E}va Tardos.
\newblock Maximizing the spread of influence through a social network.
\newblock In {\em Proceedings of the ninth ACM SIGKDD international conference
  on Knowledge discovery and data mining}, pages 137--146. ACM, 2003.

\bibitem[KMVV15]{kumar2015fast}
Ravi Kumar, Benjamin Moseley, Sergei Vassilvitskii, and Andrea Vattani.
\newblock Fast greedy algorithms in mapreduce and streaming.
\newblock {\em ACM Transactions on Parallel Computing}, 2(3):14, 2015.

\bibitem[KSV10]{karloff2010model}
Howard Karloff, Siddharth Suri, and Sergei Vassilvitskii.
\newblock A model of computation for mapreduce.
\newblock In {\em SODA}, pages 938--948, 2010.

\bibitem[MBK{\etalchar{+}}15]{MBK15}
Baharan Mirzasoleiman, Ashwinkumar Badanidiyuru, Amin Karbasi, Jan
  Vondr{\'{a}}k, and Andreas Krause.
\newblock Lazier than lazy greedy.
\newblock In {\em Proceedings of the Twenty-Ninth {AAAI} Conference on
  Artificial Intelligence, January 25-30, 2015, Austin, Texas, {USA.}}, pages
  1812--1818, 2015.

\bibitem[MBK16]{mirzasoleiman2016fast}
Baharan Mirzasoleiman, Ashwinkumar Badanidiyuru, and Amin Karbasi.
\newblock Fast constrained submodular maximization: Personalized data
  summarization.
\newblock In {\em ICML}, pages 1358--1367, 2016.

\bibitem[MKBK15]{mirzasoleiman2015distributed}
Baharan Mirzasoleiman, Amin Karbasi, Ashwinkumar Badanidiyuru, and Andreas
  Krause.
\newblock Distributed submodular cover: Succinctly summarizing massive data.
\newblock In {\em NIPS}, pages 2881--2889, 2015.

\bibitem[MKSK13]{mirzasoleiman2013distributed}
Baharan Mirzasoleiman, Amin Karbasi, Rik Sarkar, and Andreas Krause.
\newblock Distributed submodular maximization: Identifying representative
  elements in massive data.
\newblock In {\em NIPS}, pages 2049--2057, 2013.

\bibitem[MNSW95]{miltersen1995data}
Peter~Bro Miltersen, Noam Nisan, Shmuel Safra, and Avi Wigderson.
\newblock On data structures and asymmetric communication complexity.
\newblock In {\em STOC}, pages 103--111. ACM, 1995.

\bibitem[MSW08]{malioutov2008compressed}
Dmitry~M Malioutov, Sujay Sanghavi, and Alan~S Willsky.
\newblock Compressed sensing with sequential observations.
\newblock In {\em Acoustics, Speech and Signal Processing, 2008. ICASSP 2008.
  IEEE International Conference on}, pages 3357--3360. IEEE, 2008.

\bibitem[MZ15]{mirrokni2015randomized}
Vahab Mirrokni and Morteza Zadimoghaddam.
\newblock Randomized composable core-sets for distributed submodular
  maximization.
\newblock In {\em STOC}, pages 153--162, 2015.

\bibitem[NW78]{nemhauser1978best}
George~L Nemhauser and Laurence~A Wolsey.
\newblock Best algorithms for approximating the maximum of a submodular set
  function.
\newblock {\em Mathematics of operations research}, 3(3):177--188, 1978.

\bibitem[NW91]{nisan1991rounds}
Noam Nisan and Avi Widgerson.
\newblock Rounds in communication complexity revisited.
\newblock In {\em STOC}, pages 419--429, 1991.

\bibitem[NWF78]{NWF78}
George~L Nemhauser, Laurence~A Wolsey, and Marshall~L Fisher.
\newblock An analysis of approximations for maximizing submodular set
  functions—i.
\newblock {\em Mathematical Programming}, 14(1):265--294, 1978.

\bibitem[PJG{\etalchar{+}}14]{pan2014parallel}
Xinghao Pan, Stefanie Jegelka, Joseph~E Gonzalez, Joseph~K Bradley, and
  Michael~I Jordan.
\newblock Parallel double greedy submodular maximization.
\newblock In {\em Advances in Neural Information Processing Systems}, pages
  118--126, 2014.

\bibitem[PS84]{papadimitriou1984communication}
Christos~H Papadimitriou and Michael Sipser.
\newblock Communication complexity.
\newblock {\em Journal of Computer and System Sciences}, 28(2):260--269, 1984.

\bibitem[RD02]{richardson2002mining}
Matthew Richardson and Pedro Domingos.
\newblock Mining knowledge-sharing sites for viral marketing.
\newblock In {\em Proceedings of the eighth ACM SIGKDD international conference
  on Knowledge discovery and data mining}, pages 61--70. ACM, 2002.

\bibitem[RS06]{raskhodnikova2006note}
Sofya Raskhodnikova and Adam Smith.
\newblock A note on adaptivity in testing properties of bounded degree graphs.
\newblock {\em ECCC, TR06-089}, 2006.

\bibitem[RV98]{rajagopalan1998primal}
Sridhar Rajagopalan and Vijay~V Vazirani.
\newblock Primal-dual rnc approximation algorithms for set cover and covering
  integer programs.
\newblock {\em SIAM Journal on Computing}, 28(2):525--540, 1998.

\bibitem[SS13]{seeman2013adaptive}
Lior Seeman and Yaron Singer.
\newblock Adaptive seeding in social networks.
\newblock In {\em Foundations of Computer Science (FOCS), 2013 IEEE 54th Annual
  Symposium on}, pages 459--468. IEEE, 2013.

\bibitem[STK16]{singla2016noisy}
Adish Singla, Sebastian Tschiatschek, and Andreas Krause.
\newblock Noisy submodular maximization via adaptive sampling with applications
  to crowdsourced image collection summarization.
\newblock In {\em AAAI}, pages 2037--2043, 2016.

\bibitem[STW15]{servedio2015adaptivity}
Rocco~A Servedio, Li-Yang Tan, and John Wright.
\newblock Adaptivity helps for testing juntas.
\newblock In {\em Proceedings of the 30th Conference on Computational
  Complexity}, pages 264--279. Schloss Dagstuhl--Leibniz-Zentrum fuer
  Informatik, 2015.

\bibitem[TIWB14]{tschiatschek2014learning}
Sebastian Tschiatschek, Rishabh~K Iyer, Haochen Wei, and Jeff~A Bilmes.
\newblock Learning mixtures of submodular functions for image collection
  summarization.
\newblock In {\em Advances in neural information processing systems}, pages
  1413--1421, 2014.

\bibitem[Val75]{Val75}
Leslie~G Valiant.
\newblock Parallelism in comparison problems.
\newblock {\em SIAM Journal on Computing}, 4(3):348--355, 1975.

\end{thebibliography}

\newpage

\appendix

\section*{Appendix}

% !TeX root = main.tex

\section{Additional Discussion of Related Work}

\subsection{Adaptivity}
\label{sec:adaptivity}

Adaptivity has been heavily studied across a wide spectrum of areas in computer science.  These areas include classical problems in theoretical computer science such as sorting and selection (e.g. \cite{Val75, Col88, BMW16}), where adaptivity is known under the term of parallel algorithms, and communication complexity (e.g. \cite{papadimitriou1984communication, duris1984lower, nisan1991rounds,miltersen1995data,dobzinski2014economic,alon2015welfare}), where the number of rounds measures how much interaction is needed for a communication protocol.

 For the  multi-armed bandits problem, the relationship of interest is between  adaptivity and query complexity, instead of adaptivity and approximation guarantee. Recent work  showed that $\Theta(\log^{\star} n)$ adaptive rounds are necessary and sufficient to obtain  the optimal worst case query complexity \cite{AAAK17}. In the bandits setting, adaptivity is necessary to obtain non-trivial query complexity due to the noisy outcomes of the queries. In contrast, queries in submodular optimization are deterministic and adaptivity is necessary to obtain a non trivial approximation  since there are at most polynomially many queries per round and the function is  of exponential size.  Adaptivity is also well-studied for the problems of sparse recovery (e.g. \cite{HNC09,IPW11, haupt2009compressive,ji2008bayesian,malioutov2008compressed,aldroubi2008sequential})
 and  property testing (e.g. \cite{CG17, buhrman2012non,chen2017settling,raskhodnikova2006note,servedio2015adaptivity}). In these areas, it has been shown that
 adaptivity allows significant improvements compared to the non-adaptive setting, which is similar to the results shown in this paper for submodular optimization. However, in contrast to all these areas, adaptivity has not been previously studied in the context of submodular optimization.

We note that the term adaptive submodular maximization has been  previously used, but in an unrelated setting where the goal is to compute a policy which iteratively picks elements one by one, which, when picked, reveal stochastic feedback about the environment \cite{golovin2010adaptive}.

\subsection{Related models of parallelism}
\label{sec:apprelatedparallelism}

\subsubsection{Parallel computing and depth}
\label{sec:appdepth}

Our main result extends to the PRAM model. Let $d_f$ be the depth required to evaluate the function on a set, then there is a $\tilde{\mathcal{O}}(\log^2 n\cdot d_f)$ depth algorithm with $\tilde{\mathcal{O}}(nk^2)$ work whose approximation is arbitrarily close to $1-1/e$ for submodular maximization under a cardinality constraint.

The PRAM model is a generalization of the RAM model with parallelization, it is an idealized model of a shared memory machine with any number of processors which can execute instructions in parallel. The \emph{depth} of a PRAM algorithm  is the longest chain of dependencies of the algorithm, including operations which are not necessarily queries. Thus,  in addition to the number of adaptive rounds of querying, depth also measures the number of adaptive steps of the algorithms which are not queries. 
The additional factor in the depth  compared to the number of adaptive rounds is $d_f \cdot \tilde{O}(\log n)$ , where $d_f$ is the depth required to evaluate the function on a set in the PRAM model. The operations that our algorithms performed at every round, which are maximum, summation, set union, and set difference over an input of size at most quasilinear,  can all be executed by algorithms with logarithmic depth. A simple divide-and-conquer approach suffices for maximum and summation, while logarithmic depth for set union and set difference can be achieved with treaps \cite{blelloch1998fast}.

\subsubsection{Map-Reduce}
\label{sec:appmapreduce}

The problem of distributed  submodular optimization has been extensively studied in the Map-Reduce model in the past decade. This framework is primarily motivated by large scale problems over massive data sets. At a high level, in the Map-Reduce framework \cite{dean2008mapreduce}, an algorithm proceeds in multiple Map-Reduce rounds, where each round consists of a first step where the input to the algorithm is partitioned to be independently processed  on different machines and of a second step where  the outputs of this processing are merged.  Notice that the notion of rounds in Map-Reduce is different than for adaptivity, where one round of Map-Reduce usually consists of multiple adaptive rounds. The formal model of \cite{karloff2010model} for Map-Reduce requires the number of machines and their memory to be sublinear.

This framework for distributing the input to multiple machines with sublinear memory is designed to tackle issues related to massive data sets. Such data sets are too large to either  fit or be processed by a single  machine and the Map-Reduce framework formally models this need to distribute such inputs to multiple machines.

Instead of addressing distributed challenges, adaptivity addresses the issue of sequentiality, where each query evaluation requires a long time to complete and where these evaluations can be parallelized (see Section~\ref{sec:appapplications} for applications).  In other words,  while Map-Reduce addresses the \emph{horizontal}  challenge of large scale problems, adaptivity addresses an  orthogonal \emph{vertical} challenge where  long query-evaluation time is causing the main runtime bottleneck.

%The initial breakthrough in studying distributed submodular optimization is an algorithm which achieves a $(1 - 1/e -\epsilon)$-approximation in polylog rounds of Map-Reduce for the more restricted problem of maximum coverage \cite{chierichetti2010max}, which demonstrated that non-trivial desirable guarantees are achievable in the Map-Reduce setting. 
A long line of work  has  studied problems related to submodular maximization in Map-Reduce achieving different improvements on parameters such as the number of Map-Reduce rounds, the communication complexity, the approximation ratio, the family of functions, and the family of constraints (e.g. \cite{ kumar2015fast, mirzasoleiman2013distributed, mirrokni2015randomized, mirzasoleiman2015distributed,barbosa2015power, barbosa2016new,EMZ17}). To the best of our knowledge, all the existing Map-Reduce algorithms for submodular optimization have adaptivity that is linear in $n$ in the worst-case, which is  exponentially larger than the adaptivity of our algorithm. This high adaptivity is caused by the distributed algorithms which are run on each machine. These algorithms are variants of the greedy algorithm and thus have adaptivity at least linear in $k$. We also note that our algorithm does not (at least trivially) carry over to the Map-Reduce setting.

\section{Applications}
\label{sec:appapplications}

We discuss in detail several  applications of submodular optimization where sequentiality is the main runtime bottleneck.
In \textbf{crowdsourcing and data summarization}, 
algorithms involve subtasks performed by the crowd. The intervention of humans in the evaluation of queries  causes  algorithms with a large number of adaptive rounds to be impractical.  A crowdsourcing platform consists of posted tasks and crowdworkers who are remunerated for performing these posted tasks. For several submodular optimization problems, such as data summarization, the value of queries can be evaluated on a crowdsourcing platform \cite{tschiatschek2014learning,singla2016noisy,BMW16}. The algorithm must wait to obtain the feedback from the crowdworkers, however an algorithm can ask different crowdworkers to evaluate a large number of queries in parallel.

 In \textbf{experimental design}, the goal is to pick a collection of entities (e.g. subjects, chemical elements, data points) which obtains the best outcome when combined for an experiment. Experiments can be run in parallel and have a waiting time to observe the outcome \cite{frazier2010use}. The submodular problem of \textbf{influence maximization}, initiated studied by~\cite{domingos2001mining, richardson2002mining, kempe2003maximizing} has since then been well-studied (e.g. \cite{chen2009efficient,chen2010scalable,goyal2011celf++,seeman2013adaptive,horel2015scalable,badanidiyuru2016locally}). Influence maximization consists of finding the most influential nodes in a social network to maximize the spread of information in this network. 
Information does not spread instantly and an algorithm must wait to observe the total number of nodes influenced by some seed set of nodes. In \textbf{advertising}, the goal is to select the optimal subset of advertisement slots to objectives such as the click-through-rate  or the number of products purchased by customers, which are objectives exhibiting diminishing returns \cite{alaei2010maximizing,devanur2016whole}. Naturally, a waiting time is incurred to observe the performance of different collections of advertisements.

\subsection{Previous work on adaptivity for submodular maximization}
\label{sec:appcomparison}

The main algorithm in \cite{BS18}, \algadaptive, obtains a constant factor approximation in $\O(\log n)$ adaptive rounds. It consists of two primitives, \algdown \ and \algup. \algdown \ is $\O(\log n/ \log \log n)$-adaptive but only obtains a $\O(\log n)$ approximation. On the other hand, \algup \ obtains a constant factor approximation but in linearly many rounds. The main algorithm appropriately combines both primitives to obtain a constant factor approximation guarantee in $\O(\log n)$ rounds.

The main algorithm in this paper, \algtwo, mimics the greedy analysis to obtain an approximation arbitrarily close to $1-1/e$ by finding a block of size $k/r$ whose marginal contribution is arbitrarily close to $(\OPT - f(S))/r$. We first give \algone \ which finds such a set in $\O(\log n)$ rounds by filtering elements at every iteration. We build on that algorithm to obtain \algtwo, which obtains a $1-1/e$ approximation and   uses a concept of epoch to obtain an amortized number of rounds that is constant per iteration during an epoch. The analysis for the approximation is thus very different to obtain the $1-1/e$ approximation. One similarity is Lemma~\ref{lem:pruning} which shows that a constant number of elements can be discarded in one round, similarly as Lemma 1 in \cite{BS18}.

\section{Missing Analysis from Section~\ref{sec:alg1}}
\label{sec:appalg1}

\lempruning*
\begin{proof}
At a high level, since the surviving elements must have high value and a random set has low value, we can then use the thresholds to bound how many such surviving elements there can be while also having a random set of low value. To do so, we focus on the value of $f(R_i \cap X_{i+1})$ of the surviving elements $X_{i+1}$ in a random set $R_i \sim \D_{X_i}$.
\begin{align*}
\E \left[f_S(R_i \cap X_{i+1})\right] 
& \geq \E\left[\sum_{a \in R_i \cap X_{i+1}} f_{S \cup (R_i \cap X_{i+1} \setminus a)}(a)\right] & \text{submodularity}\\
& \geq \E\left[\sum_{a \in X_{i+1}} \mathds{1}_{a \in R_i} \cdot  f_{S \cup (R_i \setminus a)}(a)\right] & \text{submodularity}\\
& = \sum_{a \in X_{i+1}} \E\left[\mathds{1}_{a \in R_i} \cdot  f_{S \cup (R_i \setminus a)}(a)\right]. & \\
 & =\sum_{a \in X_{i+1}} \Pr\left[a \in R_i\right] \cdot \E\left[ f_{S \cup (R_i \setminus a)}(a) | a \in R_i\right] & \\
& \geq\sum_{a \in X_{i+1}} \Pr\left[a \in R_i\right] \cdot \E\left[ f_{S \cup (R_i \setminus a)}(a) \right] & \text{submodularity}\\
& \geq \sum_{a \in X_{i+1}} \Pr\left[a \in R_i\right] \cdot \frac{1}{k}\left(1 + \epsilon/2 \right)\left(1 - \epsilon \right) \left( \OPT - f(S) \right)  & \text{algorithm} \\
&  =  |X_{i+1}| \cdot \frac{k}{r|X_i|} \cdot \frac{1}{k} \left(1 + \epsilon/2 \right)\left(1 - \epsilon \right) \left( \OPT - f(S) \right) & \text{definition of $\U(X)$} \\
&  =  |X_{i+1}| \cdot \frac{1}{r|X_i|} \cdot\left(1 + \epsilon/2 \right)\left(1 - \epsilon \right) \left( \OPT - f(S) \right). & 
\end{align*}
Next, since elements are discarded, a random set must have low value by the algorithm,
$$ \frac{1}{r} \left(1 - \epsilon \right) \left( \OPT - f(S) \right) > \E \left[f_S(R_i)\right] .$$
By monotonicity, we get $\E \left[f_S(R_i)\right] \geq \E \left[f_S(R_i \cap X_{i+1})\right]$. Finally, by combining the above  inequalities, we  conclude that $|X_{i+1}| \leq |X_i|/ (1+\epsilon/2)$. 
\end{proof}

\thmalgone*
\begin{proof}
Let $S_i$ denote the solution $S$ at the $i$th iteration of \algone.
The algorithm increases the value of the solution $S$ by at least $\left(1 -  \epsilon \right) \left( \OPT - f(S) \right) / r$ at every iteration with $k/r$ new elements. Thus,
$$f(S_i) \geq f(S_{i-1}) + \frac{1- \epsilon}{r} \left(\OPT - f(S_{i-1})\right).$$
Next, we show by induction on $i$ that
$$f(S_i) \geq \left( 1 - \left(1 - \frac{1-\epsilon}{r}\right)^i\right) \OPT.$$
Observe that
\begin{align*}
f(S_i) & \geq f(S_{i-1}) + \frac{1-  \epsilon}{r} \left(\OPT - f(S_{i-1})\right) \\
& = \frac{1-  \epsilon}{r} \OPT - \left(1- \frac{1- \epsilon}{r}\right) f(S_{i-1})\\
& \geq \frac{1- \epsilon}{r} \OPT - \left(1- \frac{1- \epsilon}{r}\right)\left( 1 - \left(1 - \frac{1- \epsilon}{r}\right)^{i-1}\right) \OPT\\
& = \left( 1 - \left(1 - \frac{1-  \epsilon}{r}\right)^i\right) \OPT
\end{align*}
Thus, with $i = r$ where there has been $r$ iterations of adding $k/r$ elements, we return solution $S$ such that
$$f(S) \geq \left( 1 - \left(1 - \frac{1-  \epsilon}{r}\right)^r\right) \OPT$$
and obtain $$f(S) \geq \left(1 - e^{-(1-  \epsilon)}\right)\OPT \geq \left(1 - \frac{1 + 2 \epsilon}{e}\right)\OPT \geq  \geq \left(1 - \frac{1}{e} - \epsilon \right)\OPT$$ where the second inequality is since $e^{x} \leq 1 + 2 x$ for $0 < x < 1$. The number of rounds is at most $r \log_{1+\epsilon/2}(n)$ since there are $r$ iterations of \algone, each of which with at most $\log_{1+\epsilon/2}(n)$ iterations of \filter \ by Lemma~\ref{lem:rounds}, with $r = O(\log_{1+\epsilon/2}(n))$.
\end{proof}
\section{Missing Analysis from Section~\ref{sec:alg2}}
\label{sec:appalg2}

We introduce some notation and terminology. We now call \emph{the iteration $i$ of filtering during epoch $j$} the $i^{\text{th}}$ iteration discarding elements inside of \filter \ since the beginning of epoch $j$, over the multiple invokations of \filter. An element \emph{survives $\rho$ iterations of \filter \ at epoch $j$} if it has not been discarded at iteration $i$ of filtering during epoch $j$, for all $i \leq \rho$. 
 Let $S_j$ denote the solution $S$ at epoch $j \in [20/\epsilon]$, $S_j^+$ denote $S_j \cup T$ during the last iteration of \algtwo \ at epoch $j$, i.e., the last $T$ such that $f_{S}(T) <  (\epsilon/20)(\OPT - f(S))$, and $S_{j,i}$ denote $S_j \cup T $ at the  iteration $i$ of filtering  during epoch $j$. Thus, for all $i_1 < i_2$, $$S_j \subseteq S_{j,i_1} \subseteq S_{j,i_2}  \subseteq S_j^+ \subseteq S_{j+1}$$
and $f(S_j^+) - f(S_j) < (\epsilon/20) (\OPT - f(S_j)).$.

\optContribTwo*
\begin{proof} Similarly as for Lemma~\ref{lem:optContrib}, we exploit the fact that if \filter$(N,S,r)$  has not terminated after $\rho$ iterations, then by the algorithm, the random set $R_i \sim \U(X)$ at iteration $i$ has low expected value. In addition, we also use the bound on the change in value of $S$ during epoch $j$: 
\begin{align*}
& \EU[R_1, \ldots, R_\rho]\left[f_{S_j^+ \cup \left(\cup_{i=1}^\rho R_i \right)}\left(O\right)\right] \\
 =& \EU[R_1, \ldots, R_\rho]\left[f_{S_j^+ }\left(O\cup \left(\cup_{i=1}^\rho R_i \right)\right)\right] -  \EU[R_1, \ldots, R_\rho]\left[f_{S_j^+ }\left( \left(\cup_{i=1}^\rho R_i \right)\right)\right] & \\
\geq &\OPT - f(S_j^+) -  \EU[R_1, \ldots, R_\rho]\left[f_{S_j^+ }\left( \left(\cup_{i=1}^\rho R_i \right)\right)\right] & \text{monotonicity} \\
\geq &\OPT - f(S_j)  - (\epsilon/20) \left(\OPT - f(S)\right)  - \EU[R_1, \ldots, R_\rho]\left[f_{S_j^+ }\left( \left(\cup_{i=1}^\rho R_i \right)\right)\right] & \text{same epoch} \\
\geq &(1 - \epsilon/20)\left(\OPT - f(S)\right) - \sum_{i=1}^\rho \EU[R_i]\left[f_{S_j^+ }\left(  R_i \right)\right] & \text{subadditivity} \\
\geq &(1 - \epsilon/20)\left(\OPT - f(S)\right) - \sum_{i=1}^\rho \EU[R_i]\left[f_{S_{j,i}}\left(  R_i \right)\right] & \text{submodularity} \\
\geq &(1 - \epsilon/20)\left(\OPT - f(S)\right)- \sum_{i=1}^\rho \frac{1-\epsilon}{r} \left(\OPT - f(S_{j})\right)] & \text{algorithm} \\
= &\left(1 - \epsilon/20  - \frac{\rho}{r}\right) \cdot \left(\OPT - f(S_j)\right). &
\end{align*}
\end{proof}

\lemTstartwo*
\begin{proof} Let $j$ be any epoch.
Similarly as for Lemma~\ref{lem:Tstar}, the proof  defines a subset $Q$ of the optimal solution $O$ and then shows show that elements in $Q$ survive $\rho$ iterations of filtering at epoch $j$ and show that $f_{S_j^+}(Q) \geq (\epsilon/4) \left(1 -  \epsilon \right) \left( \OPT - f(S_j) \right).$  We define the following marginal contribution $\Delta_\ell$ of each optimal element $o_\ell$:
$$\Delta_\ell := \EU[R_1, \ldots, R_\rho] \left[ f_{S_j^+ \cup O_{\ell-1}\cup \left(\cup_{i=1}^\rho R_i \setminus \{o_\ell\}\right) }(o_\ell)\right].$$

We define $Q$ to be the set of optimal elements $o_\ell$ such that $\Delta_\ell \geq (1 - \epsilon/4) \Delta$ where
$$\Delta := \frac{1}{k}\left(1 - \frac{\rho}{r} - \epsilon/20 \right) \cdot \left(\OPT - f(S_j)\right).$$
 We first argue that elements in $Q$ survive $\rho$ iterations of filtering at epoch $j$. For element $o_\ell \in Q$, we have
$$\Delta_\ell  \geq  (1 - \epsilon/4)\Delta  \geq \frac{1}{k}(1 - \epsilon/4)\left(1- \frac{\rho}{r} - \epsilon/20\right)\cdot \left(\OPT - f(S_j)\right)  \geq \frac{1}{k}(1+\epsilon/2) (1 - \epsilon)  \cdot \left(\OPT - f(S_j)\right)$$
where the third inequality is by the condition on $r$. Thus, at iteration $i \leq \rho$, by submodularity, 
$$\EU[R_i  ] \left[f_{S_{j,i} \cup (R_i \setminus \{o_\ell\})}(o_\ell) \right] \geq  \EU[R_1, \ldots, R_\rho] \left[ f_{S_j^+ \cup O_{\ell-1}\cup \left(\cup_{i=1}^\rho R_i  \setminus \{o_\ell\}\right)}(o_\ell)\right] = \Delta_\ell \geq \frac{1}{k}(1+\epsilon/2) (1 - \epsilon)  \cdot \left(\OPT - f(S_j)\right) $$ 
and $o_\ell$ survives all iterations $i \leq \rho$, for all $o_\ell \in Q$. 

Next, we argue that $f_{S_j^+}(Q) \geq (\epsilon/4) \left(1 -  \epsilon \right) \left( \OPT - f(S_j) \right).$ Note that
$$\sum_{\ell =1}^k \Delta_\ell \geq   \EU[R_1, \ldots, R_\rho]\left[f_{S_j^+ \cup \left(\cup_{i=1}^\rho R_i \right)}\left(O\right)\right] \geq\left( 1 - \frac{\rho}{r}  - \epsilon/20\right) \cdot \left(\OPT - f(S_j)\right) =  k \Delta.$$

where the second inequality is by Lemma~\ref{lem:optContrib2}. Next, observe that

$$ \sum_{\ell =1}^k \Delta_\ell = \sum_{o_\ell \in Q} \Delta_\ell + \sum_{j  \in O \setminus Q} \Delta_\ell \leq \sum_{o_\ell \in Q} \Delta_\ell + k (1 -\epsilon/4)\Delta.$$
By combining the two inequalities above, we get  $\sum_{o_\ell \in Q} \Delta_\ell  \geq k \epsilon \Delta/4$. Thus, by submodularity,
\begin{align*}
f_{S_j^+}(Q)  \geq \sum_{o_\ell \in Q} f_{S_j^+ \cup O_{\ell-1}}\left(o_\ell\right) 
 \geq \sum_{o_\ell \in Q}\EU[R_1, \ldots, R_\rho] \left[ f_{S_j^+ \cup O_{\ell-1}\cup \left(\cup_{i=1}^\rho R_i  \setminus \{o_\ell\}\right)}(o_\ell)\right]  
 = \sum_{o_\ell \in Q} \Delta_\ell  
 \geq k \epsilon \Delta /4. 
\end{align*}
We conclude that 
\begin{align*}
f_{S_j^+}(X_{\rho}) \geq f_{S_j^+}(Q)  \geq k \epsilon \Delta /4 = (\epsilon/4) \left( 1 - \frac{\rho}{r}  \right) \cdot \left(\OPT - f(S_j)\right) 
\geq 
(\epsilon/4) \cdot  (1 -  \epsilon)  \cdot \left(\OPT - f(S_j)\right).
\end{align*}
where the first inequality is by monotonicity and since $Q \subseteq X_\rho$ is  a set of surviving elements.
\end{proof}

\thmalgtwo*
\begin{proof}
First, consider the case where the algorithm terminates after $r$ iterations of adding elements to $S$. Let $S_i$ denote the solution $S$ at the $i$th iteration.
\algtwo \ increases the value of the solution $S$ by at least $\left(1 - \epsilon \right) \left( \OPT - f(S) \right) / r$  at every iteration with $k/r$ new elements. Thus,
$$f(S_i) \geq f(S_{i-1}) + \frac{1-\epsilon}{r} \left(\OPT - f(S_{i-1})\right)$$
and we obtain $f(S) \geq \left(1 - e^{-(1-\epsilon)}\right)\OPT\geq \left(1 - e^{-1} - \epsilon\right)\OPT$ similarly as for Theorem~\ref{thm:algone}. 

Next, consider the case where the algorithm terminated after $(\epsilon/20)^{-1}$ epochs. At every epoch $j$, the algorithm increases the value of the solution $S$ by $(\epsilon/20)(\OPT - f(S_j))$. Thus,
$$f(S_j) \geq f(S_{j-1}) + (\epsilon/20)\left(\OPT - f(S_{j-1})\right).$$
Similarly as in the first case, we get that after $(\epsilon/20)^{-1}$ epochs, $f(S) \geq (1 - e^{-1}) \OPT$.

The total number of rounds of adaptivity of \algtwo \  is at most $20\epsilon^{-1} \log_{1+\epsilon/2}(n) + (\epsilon/20)^{-1} \log_{1+\epsilon/2}(n)$ since there are at most $r = 20 \epsilon^{-1} \log_{1+\epsilon/2}(n)$ iterations of adding elements and at most $(\epsilon/20)^{-1}$ epochs with, by Lemma~\ref{lem:rounds2}, at most $\log_{1+\epsilon/2}(n)$ filtering iterations each. The queries at each filtering iteration are independent and can be evaluated in parallel.
\end{proof}

\section{The Full Algorithm}
\label{sec:appFull}

\subsection{Description of the full algorithm}

\subsubsection{Estimates of expectations in one round via sampling}
\label{sec:appestimates}

We show that the expected value of a random set and  the expected marginal contribution of elements to a random set can be estimated arbitrarily well in one round, which is needed for the \algone \ and \algtwo \ algorithms. Recall that $\U(X)$ denotes the uniform distribution over subsets of $X$ of size $k/r$. The values we are interested in estimating are $\E_{R \sim \U(X)}\left[f_S(R)\right]$ and $\E_{R \sim \U(X)}\left[f_{S \cup (R \setminus a)} (a)\right]$. We denote the corresponding estimates by $v_S(X )$ and $v_{S}(X,a)$, which are computed in Algorithms~\ref{alg:est1} and \ref{alg:est2}. These algorithms first sample $m$ sets from $\U(X)$, where $m$ is the sample complexity, then query the desired sets to obtain a random realization of  $f_S(R)$ and $f_{S \cup (R \setminus a)} (a)$, and finally averages the $m$ random realizations of these values.

\begin{algorithm}[H]
\caption{\estOne: Computes estimate $v_S(X)$ of $\E_{R \sim \U(X)}\left[f_S(R)\right]$.}
\begin{algorithmic}
    	\INPUT  set $X \subseteq N$,  sample complexity $m$. 
    	\STATE Query $f(S)$ and $f(S \cup R_i)$ for all samples $R_1, \ldots, R_m \iid \U(X)$
    	\RETURN $\frac{1}{m}\sum_{i = 1}^m f_S(R_i)$
  \end{algorithmic}
  \label{alg:est1}
\end{algorithm}

\begin{algorithm}[H]
\caption{\estTwo: Computes  estimate $v_{S}(X,a)$ of $\E_{R \sim \U(X)}\left[f_{S \cup (R \setminus a)} (a)\right]$.}
\begin{algorithmic}
    	\INPUT  set $X \subseteq N$,  sample complexity $m$, element $a \in N$. 
    	\STATE Query $f(S \cup (R_i \cup a))$ and $f(S \cup (R_i \setminus a))$  for all samples  $R_1, \ldots, R_m \iid \U(X)$
    	\RETURN $\frac{1}{m}\sum_{i = 1}^m  f_{S \cup (R_i \setminus a)}(a)$
  \end{algorithmic}
  \label{alg:est2}
\end{algorithm}

Using standard concentration bounds, the estimates computed by these algorithms are arbitrarily good for a sufficiently large sample complexity $m$. We state the version of Hoeffding's inequality which is used to bound the error of these estimates.
\begin{lemma}[Hoeffding's inequality]
\label{l:hoeffding}
Let $S_1, \dots, S_n$ be independent random variables with values in $[0,b]$. Let $S = \frac{1}{m}\sum_{i=1}^m S_i$. Then for any $\epsilon > 0$,
$$\Pr\left[|S - \E[S]| \geq \epsilon\right] \leq 2 e^{- 2 m \epsilon^2 /  b^2}.$$
\end{lemma}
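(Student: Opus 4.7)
The plan is to prove Hoeffding's inequality by the standard Chernoff--Cram\'er method: control the upper tail $\Pr[S - \E[S] \geq \epsilon]$ via the moment generating function, then apply the same argument to $\{-S_i\}$ and union-bound to obtain the two-sided inequality with the factor of $2$.

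First I would fix $t > 0$ and apply Markov's inequality to the nonnegative random variable $e^{t(mS - m\E[S])}$:
$$\Pr[S - \E[S] \geq \epsilon] \;=\; \Pr\!\left[e^{t\sum_{i=1}^m (S_i - \E[S_i])} \geq e^{tm\epsilon}\right] \;\leq\; e^{-tm\epsilon}\prod_{i=1}^m \E\!\left[e^{t(S_i - \E[S_i])}\right],$$
where the product form of the expectation uses independence of the $S_i$. The task is then reduced to bounding the moment generating function of each centered, bounded variable $X_i := S_i - \E[S_i]$, which lies in an interval of length at most $b$.

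The key auxiliary step is Hoeffding's lemma: for any random variable $X$ with $\E[X] = 0$ and $X \in [a, b]$, one has $\E[e^{tX}] \leq e^{t^2(b-a)^2/8}$. I would prove this by writing $X$ as the convex combination $X = \lambda b + (1-\lambda)a$ with $\lambda = (X-a)/(b-a) \in [0,1]$, applying convexity of $u \mapsto e^{tu}$, and taking expectations while using $\E[X]=0$ to get $\E[e^{tX}] \leq \tfrac{b}{b-a}e^{ta} - \tfrac{a}{b-a}e^{tb}$. Writing this upper bound as $e^{\varphi(t)}$ where $\varphi(0)=\varphi'(0)=0$, a direct computation shows $\varphi''(t) \leq (b-a)^2/4$ uniformly in $t$ (this is an arithmetic--geometric bound on $p(1-p)$ after a reparametrization), so Taylor's theorem gives $\varphi(t) \leq t^2(b-a)^2/8$. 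Applied to each $X_i$, whose range has length at most $b$, this yields $\E[e^{tX_i}] \leq e^{t^2 b^2 / 8}$, and therefore
$$\Pr[S - \E[S] \geq \epsilon] \;\leq\; \exp\!\left(-tm\epsilon + \tfrac{m t^2 b^2}{8}\right).$$
Optimizing the right-hand side in $t$ by setting $t = 4\epsilon/b^2$ gives the bound $\exp(-2m\epsilon^2/b^2)$. The same argument applied to the variables $-S_i$ (which also have range in an interval of length at most $b$) bounds the lower tail identically, and a union bound produces the factor of $2$.

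The main obstacle is Hoeffding's lemma itself, specifically the second-derivative estimate $\varphi''(t) \leq (b-a)^2/4$: everything else in the argument (Markov, factorization of the MGF over independent summands, the optimization in $t$, and the symmetrization for the lower tail) is routine. I would verify the $\varphi''$ bound carefully by reducing to the case $a = -b/2$, $b = b/2$ via a shift (which does not change $b-a$ and only alters $\varphi$ by a linear function in $t$), so that the bound becomes a uniform estimate on a single-parameter family of Bernoulli-type MGFs.
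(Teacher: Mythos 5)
Your proposal is a correct and complete rendition of the standard Chernoff--Cram\'er proof of Hoeffding's inequality (Markov's inequality on the moment generating function, Hoeffding's lemma for bounded centered variables, optimization in $t$, and symmetrization for the two-sided bound); the optimization at $t = 4\epsilon/b^2$ indeed yields the stated exponent $-2m\epsilon^2/b^2$. The paper itself gives no proof of this lemma---it is stated as a known concentration bound imported for the sampling analysis---so there is nothing to compare against beyond noting that your argument is the standard one and is sound.
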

 We are now ready to show that these estimates are arbitrarily good.

\begin{lemma}
\label{lem:concentration}  
Let $m = \frac{1}{2} \left( \frac{\OPT}{\epsilon}\right)^2 \log\left(\frac{2}{\delta}\right)$,  then for all $S, X \subseteq N$, and $a \in N$, with probability $1 - \delta$ over the samples $R_1, \ldots, R_m$,
 $$\left|v_{S}(X,a) - \EU[R \sim \U(X)]\left[f_{S \cup (R \setminus a)| a \in R} (a)\right]\right| \leq \epsilon \ \ \ \ \ \ \text{ and } \ \ \ \ \ \ \left|v_S(X) - \EU[R \sim \U(X)]\left[f_S(R)\right] \right| \leq \epsilon.$$
 Thus, with $m = n \left( \frac{\OPT}{\epsilon}\right)^2 \log\left(\frac{2n}{\delta}\right)$ total samples in one round, with probability $1- \delta$, it holds that $v_S(X)$ and 
$v_{S}(X,a)$, for all $a \in N$, are $\epsilon$-estimates.
\end{lemma}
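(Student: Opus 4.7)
The plan is a direct application of Hoeffding's inequality (Lemma~\ref{l:hoeffding}) together with a union bound over elements. The argument breaks into three steps.

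First I would establish boundedness of each summand on $[0,\OPT]$. For \estOne, each term $f_S(R_i)$ satisfies $f_S(R_i)\leq f(R_i)\leq \OPT$ by submodularity together with the fact that $|R_i|=k/r\leq k$, so $R_i$ is a feasible candidate. For \estTwo, each term $f_{S\cup(R_i\setminus\{a\})}(a)\leq f(\{a\})\leq \OPT$ by submodularity and feasibility of $\{a\}$. Nonnegativity of marginals (monotonicity) gives the lower bound. Because the $R_i$ are drawn i.i.d.\ from $\U(X)$, the averaged summands are i.i.d.\ random variables in $[0,\OPT]$, and their expectation is exactly $\EU[R\sim\U(X)][f_S(R)]$ and $\EU[R\sim\U(X)][f_{S\cup(R\setminus a)}(a)]$ respectively.

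Second, I would apply Lemma~\ref{l:hoeffding} with $b=\OPT$. This yields, for any single estimator,
$$\Pr\bigl[|v-\E[v]|\geq \epsilon\bigr]\;\leq\;2e^{-2m\epsilon^2/\OPT^2}.$$
Plugging in $m=\tfrac12(\OPT/\epsilon)^2\log(2/\delta)$ makes the exponent equal $-\log(2/\delta)$, so the right-hand side is exactly $\delta$. This proves the first ($\epsilon$-accuracy for one fixed estimator) claim.

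Third, for the ``all $a\in N$'' statement I would use a union bound. There are at most $n+1$ estimators to control simultaneously ($v_S(X)$ together with $v_S(X,a)$ for each $a\in N$). Replacing $\delta$ by $\delta/(n+1)\geq \delta/(2n)$ in the per-estimator sample complexity gives the per-estimator requirement of $\tfrac12(\OPT/\epsilon)^2\log(2n/\delta)$ samples, and summing (or equivalently charging each of the $n+1$ estimators this many function-query samples) gives a total of at most $n(\OPT/\epsilon)^2\log(2n/\delta)$ samples per round, matching the stated bound. The union bound then ensures that all $n+1$ estimates are simultaneously within $\epsilon$ of their expectations with probability at least $1-\delta$. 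Since the $R_i$ in each call are drawn independently within a single adaptive round, all these samples and queries can be issued non-adaptively in one round.

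There is no substantive obstacle here; the only care needed is (i) to verify via submodularity that every queried quantity is bounded by $\OPT$ so that Hoeffding applies with $b=\OPT$, and (ii) to be careful about constants when converting the per-estimator $\delta$ into the union-bounded $\delta/(n+1)$ to recover exactly the stated sample complexity.
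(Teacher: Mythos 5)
Your proof is correct and follows the same route as the paper: bound each sampled quantity in $[0,\OPT]$, apply Hoeffding's inequality with $b=\OPT$ to make the failure probability $\delta$ for a single estimate, and union bound over the (at most $n+1$) estimates with per-estimate confidence $\delta/n$ to get the stated total sample complexity. Your boundedness argument is slightly more careful than the paper's one-line remark (you invoke submodularity to bound the marginals $f_S(R_i)$ and $f_{S\cup(R_i\setminus a)}(a)$ rather than just the query values), but this is a cosmetic refinement, not a different approach.
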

\begin{proof}
Note that 
$$\E\left[v_S(X)\right] = \EU[R \sim \U(X)]\left[f_S(R)\right] \ \ \ \ \ \ \text{ and } \ \ \ \ \ \  \E\left[v_{S}(X,a)\right]= \EU[R \sim \U(X)]\left[f_{S \cup (R \setminus a)} (a)\right]$$
Since all queries are of size at most $k$, their values are all bounded by $\OPT$. Thus, by Hoeffding's inequality with $m = \frac{1}{2} \left( \frac{\OPT}{\epsilon}\right)^2 \log\left(\frac{2}{\delta}\right)$, we get 
$$\Pr\left[\left| v_S(X) - \EU[R \sim \U(X)]\left[f_S(R)\right]\right| \geq  \epsilon\right], \Pr\left[\left| v_{S}(X,a) - \EU[R \sim \U(X)]\left[f_{S \cup (R \setminus a)} (a)\right]\right| \geq \epsilon \right] \leq 2 e^{-\frac{ 2 m \epsilon ^2}{\OPT^2}}  \leq \delta   $$
for $\epsilon > 0$. Thus, with $m = n \left( \frac{\OPT}{\epsilon}\right)^2 \log\left(\frac{2n}{\delta}\right)$ total samples in one round, by a union bound over each of the estimates holding with probability $1 - \delta/n$ individually, we get that all the estimates hold simultaneously with probability $1- \delta$.
\end{proof}

We can now describe the (almost) full version of the main algorithm which uses these estimates. Note that we can force the algorithm to stop after any round   to obtain the desired adaptive complexity with probability $1$. In our analysis, the loss from the event that the algorithm is forced to stop when the desired adaptivity is reached is accounted for in the $\delta$ probability of failure of the approximation guarantee of the algorithm.

 \begin{algorithm}[H]
\caption{\algMainAlmostFull}
\begin{algorithmic}
    	\INPUT  bound on number of iterations $r$, sample complexity $m$, proxy $v^{\star}$
    	\STATE $S \leftarrow \emptyset$
    	\STATE \textbf{for}  $\frac{20}{\epsilon}$ epochs   \textbf{do}
    	\STATE \ \ \  $ X \leftarrow N, T \leftarrow \emptyset$
    	\STATE  \ \ \  $v_S\left(X\right) \leftarrow \estOne\left(X, m\right)$
    	\STATE \ \ \ \textbf{while}  $v_S\left(X\right) <  (\epsilon/20)(v^{\star} - f(S))$ and $|S \cup T| < k$ \textbf{do}
\STATE \ \ \ \ \ \  \textbf{for} $a \in X$  \textbf{do} 
    	\COMMENT{Non-adaptive loop}
    	\STATE \ \ \ \ \ \ \ \ \ $v_S\left(X, a\right) \leftarrow \estTwo\left(X\setminus S, m, a\right)$
    			\STATE \ \ \ \ \ \ $X \leftarrow X \setminus \left\{a \ : \ v_S\left(X, a\right) <  \left(1 + \epsilon/2 \right)\left(1 -  \epsilon \right) \left(v^{\star} - f(S) \right) / k \right\}$
	\STATE \ \  \ \ \ \ $T \leftarrow T \cup R$, where $R \sim \mathcal{U}(X)$
	\STATE \ \ \ \ \ \  $v_S\left(X\right) \leftarrow \estOne\left(X, m\right)$
    	\STATE \ \ \  $S \leftarrow S \cup T$    	
    	\RETURN $S$ 
  \end{algorithmic}
  \label{alg:adaptive}
\end{algorithm}

\subsubsection{Estimates of \OPT}

The main idea to estimate $\OPT$ is to have $O(\log_{1 +\epsilon} n)$ values $v_i$ such that  one of them is guaranteed to be a $(1-\epsilon)$-approximation to $\OPT$.  To obtain such values, we use the simple observation that the singleton $a^{\star}$ with largest value is at least a $1/n$ approximation to $\OPT$. 

More formally, let  $a^{\star} = \argmax_{a \in N}f(a)$ be the optimal singleton, and $v_i = (1+\epsilon)^i \cdot  f\left(a^{\star}\right).$ We argue that there exists some $i \in \left[\log_{1+\epsilon} n\right]$ such that
$\OPT \leq v_i \leq (1+\epsilon) \cdot  \OPT.$ 
By submodularity, we get $f(a^{\star}) \geq \frac{1}{k} \OPT \geq \frac{1}{n} \OPT$. By monotonicity, we have $f(a^{\star}) \leq \OPT$. Combining these two inequalities, we get 
$v_0 \leq \OPT \leq v_{\log_{1+\epsilon} n}.$
By the definition of $v_i$, we then conclude that there must exists some $i \in \left[\log_{1+\epsilon} n\right]$ such that $\OPT \leq v_i \leq (1+\epsilon) \cdot  \OPT.$

Since the solution obtained for the unknown $v_i$ which approximates $\OPT$ well is guaranteed to be  a good solution, we run the algorithm in parallel for each of these values and return the solution with largest value. We obtain the full algorithm \algMainFull \ which we describe next.
 \begin{algorithm}[H]
\caption{\algMainFull}
\begin{algorithmic}
    	\INPUT  bounds  on number of outer-iterations $r$,  sample complexity $m$,  and precision  $\epsilon$
    	\STATE Query $f(a_1), \ldots, f(a_n)$
    	\STATE $a^{\star} \leftarrow \argmax_{a_i} f(a_i)$
    	\STATE \textbf{for} $i \in \left\{0, \ldots, \log_{1 + \epsilon}(n)\right\}$  \textbf{do} 
    	\COMMENT{Non-adaptive loop}
    	\STATE \ \ \ $v^{\star} \leftarrow (1+ \epsilon)^i \cdot f\left(a^\star\right)$
    	\STATE \ \ \ $X_i \leftarrow $ \algMainAlmostFull$(v^{\star})$ 
    	\RETURN best solution $X_i$:   $\argmax_{X_i : i \in \log_{1 + \epsilon}(n)} f(X_i)$
  \end{algorithmic}
\end{algorithm}

\subsection{Analysis of the \algMainFull \ algorithm}

We bound the number of elements removed from $X$ in each round of the full algorithm.  
\begin{lemma}
\label{lem:pruningFull} Assume $(1 - \epsilon/20) \OPT \leq v^{\star} \leq  \OPT$ and $0 < \epsilon< 1/2$.
For any $S$ and $r$, at the iteration $i$ of filtering during any epoch $j$ of \algMainAlmostFull, with probability $1 - \delta$, we have
$$|X_{i+1}| < \frac{1}{1 + \epsilon/3} |X_i|.$$
where $X_i$ and $X_{i+1}$ are the set $X$ before and after this $i$th  iteration and with sample complexity
$m =  \mathcal{O}\left(n\left( \frac{k +r}{\epsilon}\right)^2 \log\left(\frac{n}{\delta}\right)\right)$ at each round.
\end{lemma}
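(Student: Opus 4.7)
The plan is to mimic the proof of Lemma~\ref{lem:pruning} while controlling two additional sources of error: (i) the Hoeffding noise in the sampled estimates $v_S(X)$ and $v_S(X,a)$ that replace exact expectations, and (ii) the slack between the proxy $v^{\star}$ and the true $\OPT$. The hypothesis $(1-\epsilon/20)\OPT\le v^{\star}\le\OPT$ already handles (ii) up to an $O(\epsilon)$ multiplicative slack in every threshold, so essentially all the work is in (i).

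First, I would apply Hoeffding's inequality exactly as in Lemma~\ref{lem:concentration}, with the stated per-round sample complexity $m=\mathcal{O}(n((k+r)/\epsilon)^2\log(n/\delta))$, and take a union bound over the $\mathcal{O}(n)$ estimates produced in a round and over all $\mathrm{poly}(n)$ rounds. With probability at least $1-\delta$, every estimate $v_S(X)$ and $v_S(X,a)$ used by the algorithm lies within an additive $\eta=\Theta(\epsilon\,\OPT/(k+r))$ of its true expectation. This value of $\eta$ is tuned so that both $k\eta$ and $r\eta$ are at most a constant-times-$\epsilon$ fraction of $\OPT$, which is precisely what is needed to shrink the idealized $(1+\epsilon/2)$ down to $(1+\epsilon/3)$ in the final ratio.

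Next, condition on this concentration event and reprise the submodularity chain of Lemma~\ref{lem:pruning}. For each surviving $a\in X_{i+1}$ the algorithm's threshold gives $v_S(X_i,a)\ge(1+\epsilon/2)(1-\epsilon)(v^{\star}-f(S))/k$, hence by concentration $\EU[R_i\sim\U(X_i)][f_{S\cup(R_i\setminus a)}(a)]\ge(1+\epsilon/2)(1-\epsilon)(v^{\star}-f(S))/k-\eta$. Summing over $a\in X_{i+1}$ with $\Pr[a\in R_i]=(k/r)/|X_i|$ and using submodularity yields
$$\EU\!\bigl[f_S(R_i\cap X_{i+1})\bigr]\ge\frac{|X_{i+1}|}{r|X_i|}\bigl[(1+\epsilon/2)(1-\epsilon)(v^{\star}-f(S))-k\eta\bigr].$$
Conversely, since iteration $i$ of filtering did not end the phase, the analog of the Filter threshold $v_S(X_i)<(1-\epsilon)(v^{\star}-f(S))/r$ still holds, and concentration gives $\EU[f_S(R_i)]\le(1-\epsilon)(v^{\star}-f(S))/r+\eta$. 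Chaining with monotonicity, $\EU[f_S(R_i)]\ge\EU[f_S(R_i\cap X_{i+1})]$, and substituting the value of $\eta$ produces $|X_{i+1}|/|X_i|\le1/(1+\epsilon/3)$.

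The main obstacle is the calibration of $\eta$: Hoeffding's bound gives an additive error that scales with $\OPT$ (the a-priori range of any queried value), whereas the relevant thresholds in Filter are on the order of $\OPT/k$ (per-element survival) and $\OPT/r$ (aggregate random set). To make the additive error small relative to both scales simultaneously, the sample complexity must carry a $\max(k,r)^2$ factor, which is exactly where the $(k+r)^2$ in the stated $m$ comes from. A secondary subtlety is ensuring $v^{\star}-f(S)=\Omega(\OPT)$ throughout execution so that the absolute error $\eta$ translates into a small relative error in the thresholds; this follows from $v^{\star}\le\OPT$ together with the epoch structure, which keeps $f(S)$ bounded away from $v^{\star}$ by an $\Omega(\epsilon)$ fraction within any given epoch.
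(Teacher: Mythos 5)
Your proposal takes essentially the same route as the paper: reprise the submodularity chain of Lemma~\ref{lem:pruning}, replace exact expectations by the sampled estimates $v_S(X)$ and $v_S(X,a)$, control the Hoeffding error via Lemma~\ref{lem:concentration} with a union bound over the $\mathcal{O}(n)$ estimates, absorb the proxy slack using $v^{\star}\geq(1-\epsilon/20)\OPT$, and show that the accumulated losses shrink the idealized $1+\epsilon/2$ ratio to no worse than $1+\epsilon/3$. The only cosmetic difference is that the paper applies Lemma~\ref{lem:concentration} with two separate error scales, $\Theta(\epsilon(v^{\star}-f(S))/k)$ for the element-wise estimates and $\Theta(\epsilon(v^{\star}-f(S))/r)$ for the random-set estimate, whereas you use a single radius $\eta=\Theta(\epsilon\OPT/(k+r))$ that dominates both; the two parameterizations yield the same $(k+r)^2$ dependence in the sample complexity.

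One small imprecision: you argue the error radius translates correctly because ``$f(S)$ is bounded away from $v^{\star}$ by an $\Omega(\epsilon)$ fraction.'' For the stated $m=\mathcal{O}(n((k+r)/\epsilon)^2\log(n/\delta))$ to suffice, you actually need $v^{\star}-f(S)=\Omega(\OPT)$, i.e.\ a constant fraction rather than an $\Omega(\epsilon)$ fraction — otherwise you pick up an extra $1/\epsilon^2$ factor. The paper's own proof is similarly terse on this point (it invokes $v^{\star}\geq(1-\epsilon/20)\OPT$ without spelling out the lower bound on $v^{\star}-f(S)$), so you are in good company, but the justification does require the constant-fraction version of the claim, which does follow from the epoch accounting since there are only $20/\epsilon$ epochs and each removes at most an $\epsilon/20$ fraction of the gap, leaving $v^{\star}-f(S)\geq (1-\epsilon/20)^{20/\epsilon}v^{\star}=\Omega(\OPT)$.
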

\begin{proof}
At a high level, since the surviving elements must have high value and a random set has low value, we can then use the thresholds to bound how many such surviving elements there can be while also having a random set of low value. To do so, we focus on the value of $f(R_i \cap X_{i+1})$ of the surviving elements $X_{i+1}$ in a random set $R_i \sim \D_{X_i}$.
\begin{align*}
 & \E \left[f_S(R_i \cap X_{i+1})\right] & \\
 \geq  & \E\left[\sum_{a \in R_i \cap X_{i+1}} f_{S \cup (R_i \cap X_{i+1} \setminus a)}(a)\right] & \text{submodularity}\\
 \geq  & \E\left[\sum_{a \in X_{i+1}} \mathds{1}_{a \in R_i} \cdot  f_{S \cup (R_i \setminus a)}(a)\right] & \text{submodularity}\\
 =  & \sum_{a \in X_{i+1}} \E\left[\mathds{1}_{a \in R_i} \cdot  f_{S \cup (R_i \setminus a)}(a)\right]. & \\
  = & \sum_{a \in X_{i+1}} \Pr\left[a \in R_i\right] \cdot \E\left[ f_{S \cup (R_i \setminus a)}(a) | a \in R_i\right] & \\
  \geq & \sum_{a \in X_{i+1}} \Pr\left[a \in R_i\right] \cdot \E\left[ f_{S \cup (R_i \setminus a)}(a) \right] & \text{submodularity}\\
 \geq & \sum_{a \in X_{i+1}} \Pr\left[a \in R\right] \cdot \left(v_S(X,a) -  \frac{\epsilon}{20k}\left(1 + \epsilon/2 \right)\left(1 -  \epsilon \right) \left(v^{\star} - f(S) \right)\right)  & \text{Lemma~\ref{lem:concentration} } \\
 \geq  & \sum_{a \in X_{i+1}} \Pr\left[a \in R_i\right] \cdot \left(\left(1 - \epsilon/20\right)\frac{1}{k}\left(1 + \epsilon/2 \right)\left(1 -  \epsilon \right) \left(v^{\star}  - f(S) \right)  \right) & \text{algorithm} \\
  =   & |X_{i+1}| \cdot \frac{k}{r|X_i|} \cdot \left(\left(1 - \epsilon/20\right) \frac{1}{k} \left(1 + \epsilon/2 \right)\left(1 - \epsilon \right) \left(v^{\star} - f(S) \right)\right) &  \\
  =  &  |X_{i+1}| \cdot \frac{1}{r|X_i|} \left(1 - \epsilon/20 \right)\cdot\left(1 + \epsilon/2 \right)\left(1 -  \epsilon \right) \left(v^{\star} - f(S) \right). & 
\end{align*}
Next, by the algorithm and by Lemma~\ref{lem:concentration},
$$  (1 + \epsilon/20) \frac{1}{r} \left(1 -  \epsilon \right) \left( v^{\star} - f(S) \right) \geq v_S(X) + \frac{\epsilon}{20}\frac{1}{r} \left(1 -  \epsilon \right) \left( v^{\star} - f(S) \right) > \E \left[f_S(R_i)\right].$$
By monotonicity, we get $\E \left[f_S(R_i)\right] \geq \E \left[f_S(R_i \cap X_{i+1})\right]$. Finally, by combining the above  inequalities, we  conclude that $$|X_{i+1}| \leq \frac{1}{ (1- \epsilon/20)^2(1+\epsilon/2)}|X_i| \leq \frac{1}{1 + \epsilon/3}|X_i| $$
where, with probability $1- \delta$, all the estimates hold with sample complexity $$m =  \mathcal{O}\left(n\left(\frac{k +r}{\epsilon}\right)^2 \log\left(\frac{n}{\delta}\right)\right)$$ per round by Lemma \ref{lem:concentration} and since $v^{\star} \geq (1 - \epsilon/20)\OPT$. 
\end{proof}

 \begin{lemma} 
\label{lem:optContrib2Full} Assume $(1 - \epsilon/20) \OPT \leq v^{\star} \leq  \OPT$ and $0 < \epsilon < 1/2
$. For any epoch $j$,
let $R_i \sim \U(X)$ be the random set at iteration $i$ of filtering during epoch $j$. For all $r, \rho > 0$, if epoch $j$ of \algMainAlmostFull \ has not ended after $\rho$ iterations of filtering, then, with probability $1 - \delta$,
$$\EU[R_1, \ldots, R_\rho]\left[f_{S_j^+ \cup \left(\cup_{i=1}^\rho R_i \right)}\left(O\right)\right] \geq \left(1 -   \frac{\rho}{r} - \epsilon/10\right) \cdot \left(\OPT - f(S_j)\right)$$
where $S_j$ and $S_j^+$ are the set $S$ at start and end of epoch $j$, with sample complexity $$m =  \mathcal{O}\left( \rho \left(\frac{\rho}{\epsilon}\right)^2\log\left(\frac{\rho}{\delta}\right)\right)$$ per epoch.
\end{lemma}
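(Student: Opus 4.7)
The plan is to follow the same template as the proof of Lemma~\ref{lem:optContrib2}, but keep careful track of two additional sources of error: the gap between $v^\star$ and $\OPT$, and the sampling error in the estimates $v_S(X)$ that replace the true expectations $\E_{R\sim\U(X)}[f_S(R)]$ in the stopping rule of the algorithm. Each of these should be absorbed into the extra $\epsilon/20$ slack that Lemma~\ref{lem:optContrib2Full} allows relative to Lemma~\ref{lem:optContrib2}.

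First I would start from the identity
\[
\EU[R_1,\dots,R_\rho]\!\left[f_{S_j^+\cup(\cup_i R_i)}(O)\right]
= \EU\!\left[f_{S_j^+}(O\cup\cup_i R_i)\right] - \EU\!\left[f_{S_j^+}(\cup_i R_i)\right],
\]
lower bound the first term by $\OPT - f(S_j^+)$ via monotonicity, and use the epoch invariant $f(S_j^+)-f(S_j)\le(\epsilon/20)(\OPT-f(S_j))$ to rewrite it as at least $(1-\epsilon/20)(\OPT-f(S_j))$. For the second term, apply subadditivity and then submodularity to get $\EU[f_{S_j^+}(\cup_i R_i)]\le\sum_{i=1}^\rho \EU[f_{S_{j,i}}(R_i)]$, exactly as in Lemma~\ref{lem:optContrib2}.

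The new ingredient is bounding $\EU[f_{S_{j,i}}(R_i)]$ via the algorithm's stopping rule. Because \filter\ has not terminated at iteration $i$, the algorithm's condition is $v_{S_{j,i}}(X_i) < (1-\epsilon)(v^\star - f(S_{j,i}))/r$. Invoking Lemma~\ref{lem:concentration} with accuracy $\epsilon' := \epsilon\,\OPT/(40\rho)$, which requires sample complexity $m=\Theta(\rho^2\epsilon^{-2}\log(\rho/\delta))$ per call and, by a union bound over the $\rho$ iterations of the epoch, yields $m=\O(\rho(\rho/\epsilon)^2\log(\rho/\delta))$ total samples that are simultaneously $\epsilon'$-accurate w.p. $1-\delta$, we can replace $v_{S_{j,i}}(X_i)$ by $\EU[f_{S_{j,i}}(R_i)]-\epsilon'$. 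Combining with $v^\star\le\OPT$ and $f(S_{j,i})\ge f(S_j)$ gives
\[
\EU[f_{S_{j,i}}(R_i)] \;\le\; \frac{1-\epsilon}{r}(\OPT-f(S_j))\;+\;\epsilon'.
\]
Summing over $i\le\rho$ contributes at most $\rho\,\epsilon' = \epsilon\,\OPT/40$ of extra loss, which is at most $(\epsilon/40)(\OPT-f(S_j))$ whenever it is non-vacuous (otherwise $f(S_j)$ is already essentially $\OPT$ and the lemma holds trivially).

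Putting it all together gives
\[
\EU\!\left[f_{S_j^+\cup(\cup_i R_i)}(O)\right]
\ge (1-\epsilon/20)(\OPT-f(S_j)) - \frac{\rho(1-\epsilon)}{r}(\OPT-f(S_j)) - \frac{\epsilon}{40}(\OPT-f(S_j))
\ge \left(1-\frac{\rho}{r}-\frac{\epsilon}{10}\right)(\OPT-f(S_j)),
\]
as claimed. The main (mostly tedious) obstacle will be verifying that the chosen $\epsilon'$ is fine enough to absorb the cumulative sampling error across the $\rho$ rounds of the epoch while keeping the sample complexity at the quoted $\O(\rho(\rho/\epsilon)^2\log(\rho/\delta))$; this is purely a bookkeeping check using Hoeffding (Lemma~\ref{l:hoeffding}) and a union bound, but requires being careful that the approximation $v^\star\in[(1-\epsilon/20)\OPT,\OPT]$ is used in the direction that keeps all inequalities correctly oriented.
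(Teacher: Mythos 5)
Your proposal follows the paper's proof essentially step for step: the same decomposition into $\E[f_{S_j^+}(O\cup\cup_iR_i)]-\E[f_{S_j^+}(\cup_iR_i)]$, monotonicity, the epoch invariant $f(S_j^+)-f(S_j)\le(\epsilon/20)(\OPT-f(S_j))$, subadditivity followed by submodularity, and then Lemma~\ref{lem:concentration} plus the filtering stopping rule to control each $\E[f_{S_{j,i}}(R_i)]$, absorbing the sampling error and the $v^\star$-vs-$\OPT$ gap into the extra $\epsilon/20$ of slack. The only cosmetic difference is that you track an absolute per-round accuracy $\epsilon'=\epsilon\OPT/(40\rho)$ where the paper writes the accuracy as $\tfrac{\epsilon}{20\rho}(v^\star-f(S_j))$; the caveat you flag about $\rho\epsilon'$ needing to be compared against $(\OPT-f(S_j))$ rather than $\OPT$ is a legitimate bookkeeping concern, but it is present in (and not resolved more carefully by) the paper's own writeup, so this does not constitute a departure from the paper's argument.
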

\begin{proof} By the condition to have a filtering iteration, a random set $R \sim \D$ must have low value at each of the \filter \ iterations:
\begin{align*}
& \EU[R_1, \ldots, R_\rho]\left[f_{S_j^+ \cup \left(\cup_{i=1}^\rho R_i \right)}\left(O\right)\right] \\
 =& \EU[R_1, \ldots, R_\rho]\left[f_{S_j^+ }\left(O\cup \left(\cup_{i=1}^\rho R_i \right)\right)\right] -  \EU[R_1, \ldots, R_\rho]\left[f_{S_j^+ }\left( \left(\cup_{i=1}^\rho R_i \right)\right)\right] & \\
\geq &\OPT - f(S_j^+) -  \EU[R_1, \ldots, R_\rho]\left[f_{S_j^+ }\left( \left(\cup_{i=1}^\rho R_i \right)\right)\right] & \text{monotonicity} \\
\geq &\OPT - f(S_j)  - \frac{\epsilon}{20} \left(\OPT - f(S)\right)  - \EU[R_1, \ldots, R_\rho]\left[f_{S_j^+ }\left( \left(\cup_{i=1}^\rho R_i \right)\right)\right] & \text{same epoch} \\
\geq &(1 - \epsilon/20)\left(\OPT - f(S)\right) - \sum_{i=1}^\rho \EU[R_i]\left[f_{S_j^+ }\left(  R_i \right)\right] & \text{subadditivity} \\
\geq &(1 - \epsilon/20)\left(\OPT - f(S)\right) - \sum_{i=1}^\rho \EU[R_i]\left[f_{S_{j,i}}\left(  R_i \right)\right] & \text{submodularity} \\ \geq &(1 - \epsilon/20)\left(\OPT - f(S)\right) - \sum_{i=1}^\rho
\EU[R_i]\Big[v_{S_{i,j}}(X_i) + \frac{\epsilon}{20\rho}\left(v^{\star} - f(S_j)\right)\Big] & \text{Lemma~\ref{lem:concentration}} \\
\geq &(1 - \epsilon/20)\left(\OPT - f(S)\right) - \sum_{i=1}^\rho
\EU[R_i]\Big[v_{S_{i,j}}(X_i) + \frac{\epsilon}{20\rho}\left(\OPT - f(S_j)\right)\Big] & v^{\star} \geq \OPT \\
\geq &(1 - \epsilon/20)\left(\OPT - f(S)\right)- \sum_{i=1}^\rho \left(\frac{1- \epsilon}{r}  + \frac{\epsilon}{20\rho}\right)\left(\OPT - f(S_{j})\right)] & \text{algorithm} \\
\geq &\left(1 - \epsilon/20  - \frac{\rho}{r} - \epsilon/20\right) \cdot \left(\OPT - f(S_j)\right). &
\end{align*}
where, with probability $1- \delta$,  the estimates hold for all $\rho$ iterations with sample complexity $m =  \mathcal{O}\left(\rho \left(\frac{\rho}{\epsilon}\right)^2 \log\left(\frac{\rho}{\delta}\right)\right)$ per round by Lemma \ref{lem:concentration} and since $v^{\star} \geq (1- \epsilon/20)\OPT$. 
\end{proof}

\begin{lemma}
\label{lem:TstarFull} Assume $(1 - \epsilon/20) \OPT \leq v^{\star} \leq  \OPT$ and $0 < \epsilon < 1/2$.
If $r \geq 20 \rho \epsilon^{-1}$,  then, with probability $1 - \delta$, the set  $X_\rho$ of elements that survive $\rho$ iterations of filtering at any epoch $j$ of \algMainAlmostFull \ satisfies 
$$f_{S_j^+}(X_{\rho}) \geq (\epsilon/4) \left(1 -  \epsilon \right) \left( v^{\star} - f(S_j) \right).$$
where $S_j$ and $S_j^+$ are the set $S$ at the start and end of epoch $j$ and with sample complexity
$m =  \mathcal{O}\left(\rho \left(\frac{k}{\epsilon}\right)^2 \log\left(\frac{\rho}{\delta}\right)\right)$ per epoch.
\end{lemma}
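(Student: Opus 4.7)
The plan is to adapt the proof of Lemma~\ref{lem:Tstar2} to the noisy setting of \algMainAlmostFull, replacing Lemma~\ref{lem:optContrib2} by its noisy counterpart Lemma~\ref{lem:optContrib2Full} and invoking Lemma~\ref{lem:concentration} whenever the idealized argument uses an exact expectation. Fix an epoch $j$, fix an arbitrary ordering $O = \{o_1,\ldots,o_k\}$ with $O_{\ell-1} = \{o_1,\ldots,o_{\ell-1}\}$, and define
$$\Delta_\ell := \EU[R_1, \ldots, R_\rho] \left[ f_{S_j^+ \cup O_{\ell-1}\cup \left(\cup_{i=1}^\rho R_i \setminus \{o_\ell\}\right) }(o_\ell)\right], \qquad \Delta := \frac{1}{k}\Big(1 - \tfrac{\rho}{r} - \tfrac{\epsilon}{10}\Big)\left(\OPT - f(S_j)\right),$$
where the $\epsilon/10$ slack in $\Delta$ is inherited from Lemma~\ref{lem:optContrib2Full} (versus $\epsilon/20$ in the idealized proof). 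Let $Q := \{o_\ell : \Delta_\ell \geq (1-\epsilon/4)\Delta\}$; the goal is to show $Q \subseteq X_\rho$ and that $f_{S_j^+}(Q)$ is large.

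First, I would prove $Q \subseteq X_\rho$, i.e., every $o_\ell \in Q$ survives all $\rho$ filtering iterations. By submodularity, at iteration $i \leq \rho$ the true expectation $\EU[R_i]\left[f_{S_{j,i} \cup (R_i \setminus \{o_\ell\})}(o_\ell)\right]$ is at least $\Delta_\ell \geq (1-\epsilon/4)\Delta$. By Lemma~\ref{lem:concentration}, choosing a sufficiently large sample size makes the estimation error in $v_{S_{j,i}}(X,o_\ell)$ at most $(\epsilon/(20k))(\OPT - f(S_j))$. Using $\rho/r \leq \epsilon/20$ (from $r \geq 20\rho/\epsilon$) gives $\Delta \geq (1-3\epsilon/20)(\OPT - f(S_j))/k$, and combining with $v^\star \leq \OPT$ a short arithmetic check verifies that for $0<\epsilon<1/2$,
$$(1-\epsilon/4)(1-3\epsilon/20)-\epsilon/20 \;\geq\; (1+\epsilon/2)(1-\epsilon),$$
so the estimate exceeds the survival threshold $(1+\epsilon/2)(1-\epsilon)(v^\star - f(S_{j,i}))/k$ at every iteration and $o_\ell$ is never discarded.

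Second, I would lower-bound $f_{S_j^+}(Q)$ exactly as in Lemma~\ref{lem:Tstar2}. Lemma~\ref{lem:optContrib2Full} gives $\sum_{\ell=1}^{k} \Delta_\ell \geq k\Delta$, and splitting by membership in $Q$ yields $\sum_{o_\ell \in Q} \Delta_\ell \geq k\epsilon\Delta/4$. Submodularity then produces $f_{S_j^+}(Q) \geq \sum_{o_\ell \in Q} \Delta_\ell \geq k\epsilon\Delta/4$, and since $Q \subseteq X_\rho$ and $f_{S_j^+}$ is monotone,
$$f_{S_j^+}(X_\rho) \;\geq\; \frac{\epsilon}{4}\Big(1-\tfrac{\rho}{r}-\tfrac{\epsilon}{10}\Big)(\OPT-f(S_j)) \;\geq\; \frac{\epsilon}{4}(1-\epsilon)(v^\star - f(S_j)),$$
where the last step uses $v^\star \leq \OPT$ and $\rho/r + \epsilon/10 \leq \epsilon$. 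Taking a union bound over the bad events in Lemmas~\ref{lem:optContrib2Full} and~\ref{lem:concentration} across the $\rho$ iterations and the elements being estimated yields the sample complexity $m = \mathcal{O}(\rho(k/\epsilon)^2\log(\rho/\delta))$ per epoch and failure probability at most $\delta$.

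The main obstacle is the first step: juggling three simultaneous sources of slack, namely the $\epsilon/10$ loss from Lemma~\ref{lem:optContrib2Full}, the concentration error from Lemma~\ref{lem:concentration}, and the gap between $v^\star$ and $\OPT$, and verifying that they still fit beneath the filtering threshold $(1+\epsilon/2)(1-\epsilon)(v^\star - f(S))/k$. Each individual bound is routine, but the constants must be tracked carefully so that the accumulated error stays within the $\epsilon$ budget; choosing the threshold slack in $\Delta$ as $(1-\epsilon/4)$ is what precisely absorbs the two extra error terms relative to the idealized proof of Lemma~\ref{lem:Tstar2}.
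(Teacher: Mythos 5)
Your proof is correct and follows essentially the same route as the paper's: same decomposition into $\Delta_\ell$, same set of "high-contribution" optimal elements (you call it $Q$, the paper calls it $T$), the same two claims ($Q$ survives all $\rho$ iterations; $\sum_{o_\ell \in Q}\Delta_\ell \geq k\epsilon\Delta/4$), and the same final chain of inequalities. The only cosmetic differences are that you keep the concentration error as $(\epsilon/(20k))(\OPT - f(S_j))$ rather than the paper's $(\epsilon/20)(1/k)(1+\epsilon/2)(1-\epsilon)(\OPT - f(S_j))$, and you verify the threshold inequality via $(1-\epsilon/4)(1-3\epsilon/20)-\epsilon/20 \geq (1+\epsilon/2)(1-\epsilon)$ instead of the paper's intermediate bound $1 - 5\epsilon/12 \geq (1+\epsilon/2)(1-\epsilon)(1+\epsilon/20)$; both checks are valid and neither changes the argument.
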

\begin{proof} Let $j$ be any epoch.
Similarly as for Lemma~\ref{lem:Tstar}, the proof  defines a subset $T$ of the optimal solution $O$ and then shows show that elements in $T$ survive $\rho$ iterations of filtering at epoch $j$ and show that $f_{S_j^+}(T) \geq (\epsilon/4) \left(1 - \epsilon \right) \left( \OPT - f(S_j) \right).$  We define the following marginal contribution $\Delta_\ell$ of each optimal element $o_\ell$:
$$\Delta_\ell := \EU[R_1, \ldots, R_\rho] \left[ f_{S_j^+ \cup O_{\ell-1}\cup \left(\cup_{i=1}^\rho R_i \setminus \{o_\ell\}\right) }(o_\ell)\right].$$

We define $T$ to be the set of optimal elements $o_\ell$ such that $\Delta_\ell \geq (1 - \epsilon/4) \Delta$ where
$$\Delta := \frac{1}{k}\left(1   - \frac{\rho}{r} - \epsilon/10\right) \cdot \left(\OPT - f(S_j)\right).$$
 We first argue that elements in $T$ survive $\rho$ iterations of filtering at epoch $j$. For element $o_\ell \in T$, we have
\begin{align*}
\Delta_\ell  \geq  (1 - \epsilon/4)\Delta & \geq \frac{1}{k}(1 - \epsilon/4)\left(1  - \frac{\rho}{r} - \epsilon/10\right)\cdot \left(\OPT - f(S_j)\right)  \\
& \geq \frac{1}{k}(1 - 5\epsilon/12)\cdot \left(\OPT - f(S_j)\right) \\
& \geq \frac{1}{k}(1+\epsilon/2) (1 - \epsilon) (1+\epsilon/20) \cdot \left(\OPT - f(S_j)\right).
\end{align*}
where the third inequality is by the condition on $r$.
Thus, at iteration $i \leq \rho$, by Lemma~\ref{lem:concentration} and by submodularity, 
\begin{align*}
v_{S_{j,i}}(X_i)  + \frac{\epsilon}{20} \cdot \frac{1}{k}(1+\epsilon/2) (1 - \epsilon)   \left(\OPT - f(S_j)\right)  & \geq 
\EU[R_i  ] \left[f_{S_{j,i} \cup (R_i \setminus \{o_\ell\})}(o_\ell) \right] \\
&  \geq  \EU[R_1, \ldots, R_\rho] \left[ f_{S_j^+ \cup O_{\ell-1}\cup \left(\cup_{i=1}^\rho R_i  \setminus \{o_\ell\}\right)}(o_\ell)\right] \\
&  = \Delta_\ell \\
& \geq \frac{1}{k}(1+\epsilon/2) (1 - \epsilon)  (1+\frac{\epsilon}{20})\cdot \left(\OPT - f(S_j)\right),
\end{align*}
where, with probability $1- \delta$, the estimates hold for all $\rho$ iterations with sample complexity
$m =  \mathcal{O}\left(\rho \left(\frac{k}{\epsilon}\right)^2 \log\left(\frac{\rho}{\delta}\right)\right)$. 
 Thus, $$v_{S_{j,i}}(X_i)  \geq \frac{1}{k}(1+\epsilon/2) (1 - \epsilon)  \cdot \left(\OPT - f(S_j)\right) \geq \frac{1}{k}(1+\epsilon/2) (1 - \epsilon)  \cdot \left(v^{\star} - f(S_j)\right)$$
and $o_\ell$ survives all iterations $i \leq \rho$, for all $o_\ell \in T$. 

Next, we argue that $f_{S_j^+}(T) \geq (\epsilon/4) \left(1 -  \epsilon \right) \left( \OPT - f(S_j) \right).$ Note that
$$\sum_{\ell =1}^k \Delta_\ell \geq   \EU[R_1, \ldots, R_\rho]\left[f_{S_j^+ \cup \left(\cup_{i=1}^\rho R_i \right)}\left(O\right)\right] \geq\left(1   - \frac{\rho}{r}- \epsilon/10\right) \cdot \left(\OPT - f(S_j)\right) =  k \Delta.$$

where the second inequality is by Lemma~\ref{lem:optContrib2Full}. Next, observe that
$$ \sum_{\ell =1}^k \Delta_\ell = \sum_{o_\ell \in T} \Delta_\ell + \sum_{j  \in O \setminus T} \Delta_\ell \leq \sum_{o_\ell \in T} \Delta_\ell + k (1 - \epsilon/4)\Delta.$$
By combining the two inequalities above, we get  $\sum_{o_\ell \in T} \Delta_\ell  \geq k \epsilon\Delta /4 $. Thus, by submodularity,
\begin{align*}
f_{S_j^+}(T)  \geq \sum_{o_\ell \in T} f_{S_j^+ \cup O_{\ell-1}}\left(o_\ell\right) 
 \geq \sum_{o_\ell \in T}\EU[R_1, \ldots, R_\rho] \left[ f_{S_j^+ \cup O_{\ell-1}\cup \left(\cup_{i=1}^\rho R_i  \setminus \{o_\ell\}\right)}(o_\ell)\right]  
 = \sum_{o_\ell \in T} \Delta_\ell  
 \geq k \epsilon \Delta/ 4. 
\end{align*}
We conclude that 
\begin{align*}
f_{S_j^+}(X_{\rho}) \geq f_{S_j^+}(T)  \geq k \epsilon \Delta /4= \frac{\epsilon}{4} \left( 1 - \frac{\rho}{r}  - \epsilon/10\right) \cdot \left(\OPT - f(S_j)\right) 
& \geq 
\frac{\epsilon}{4} \cdot  (1 -\epsilon)  \cdot \left(\OPT - f(S_j)\right) \\
& \geq 
\frac{\epsilon}{4} \cdot  (1 - \epsilon)  \cdot \left(v^{\star} - f(S_j)\right).
\end{align*}
where the first inequality is by monotonicity and since $T \subseteq X_\rho$ is  a set of surviving elements.
\end{proof}

\begin{lemma}
\label{lem:roundsFull}
Assume $(1 - \epsilon/20) \OPT \leq v^{\star} \leq  \OPT$ and constant $0 < \epsilon < 1/2$. For any epoch $j$, with probability $1 - \delta$,  there are at most $\log_{1+\epsilon/3}(n)$ iterations of filtering when the number of iterations of \algMainAlmostFull \ with $r = 20 \log_{1+\epsilon/3}(n) / \epsilon$ and with sample complexity
$m =  \mathcal{O}\left(n\left( \frac{k +r}{\epsilon}\right)^2 \log\left(\frac{n}{\delta}\right)\right)$ at each round.
\end{lemma}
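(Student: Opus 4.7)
The plan is to mirror Lemma~\ref{lem:rounds2} step-by-step, substituting each idealized sub-lemma with its full-algorithm counterpart. Set $\rho := \log_{1+\epsilon/3}(n)$ and suppose, toward a contradiction, that epoch $j$ performs $\rho$ filtering iterations of \algMainAlmostFull \ without the outer while-loop's termination condition being triggered at any intermediate step. With the stated sample complexity $m = \mathcal{O}(n((k+r)/\epsilon)^2 \log(n/\delta))$, I first apply Lemma~\ref{lem:pruningFull} at each of the $\rho$ iterations under a union bound, so that $|X_{i+1}| < |X_i|/(1+\epsilon/3)$ for every $i \leq \rho$ simultaneously; iterating this gives
\[
|X_\rho| < \frac{n}{(1+\epsilon/3)^{\rho}} \;=\; 1 \;\leq\; k/r,
\]
so that any draw $R \sim \U(X_\rho)$ of a block of size $k/r$ is deterministically $R = X_\rho$ (using the convention of taking all remaining elements when there are too few).

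Next, the hypothesis $r \geq 20\rho/\epsilon$ of Lemma~\ref{lem:TstarFull} is met with equality by our choice $r = 20\log_{1+\epsilon/3}(n)/\epsilon$, so its conclusion gives $f_{S_j^+}(X_\rho) \geq (\epsilon/4)(1-\epsilon)(v^\star - f(S_j))$. Because the algorithm's variable $S$ equals $S_j$ throughout epoch $j$ and $S_j \subseteq S_j^+$, submodularity yields $f_S(X_\rho) \geq f_{S_j^+}(X_\rho)$, and combined with $R = X_\rho$ above, the (effectively deterministic) quantity $f_S(R)$ is at least $(\epsilon/4)(1-\epsilon)(v^\star - f(S))$. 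Finally, I apply Lemma~\ref{lem:concentration} to the re-estimate $v_S(X)$ computed at the bottom of iteration $\rho$: its additive error can be made a small fraction of $v^\star$, and since for $\epsilon < 1/2$ we have $(\epsilon/4)(1-\epsilon) > \epsilon/20$ with room to spare, this forces $v_S(X_\rho) \geq (\epsilon/20)(v^\star - f(S))$, violating the while-loop condition and ending the epoch in at most $\rho$ filtering steps.

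The main obstacle is keeping the union bound over the three probabilistic ingredients (Lemmas~\ref{lem:pruningFull}, \ref{lem:TstarFull}, and \ref{lem:concentration}) across all $\rho$ rounds within the stated sample budget. Since $\rho = \mathcal{O}(\log n)$, rescaling each individual failure probability by a polylogarithmic factor in $n$ only inflates $\log(1/\delta)$ to $\Theta(\log(n/\delta))$, which the stated $m$ already accommodates. Beyond this, the work is routine bookkeeping: tracking the cumulative $\epsilon$-slack from replacing $\OPT$ with $v^\star \in [(1-\epsilon/20)\OPT,\OPT]$ in the filter thresholds and from replacing true expectations by the empirical estimates $v_S(X)$ and $v_S(X,a)$, and checking that the $\epsilon/20$ threshold is still strictly exceeded after absorbing all these lower-order errors.
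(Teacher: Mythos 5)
Your proof is correct and follows essentially the same route as the paper's: invoke Lemma~\ref{lem:pruningFull} over the $\rho = \log_{1+\epsilon/3}(n)$ filtering iterations to shrink $X$ to at most $k/r$ elements, invoke Lemma~\ref{lem:TstarFull} (whose hypothesis $r \geq 20\rho/\epsilon$ you correctly verify) to lower-bound $f_{S_j^+}(X_\rho)$, and deduce that the draw $R = X_\rho$ has enough marginal value to trigger the epoch-termination condition. The only cosmetic difference is in the last step: you target the while-loop's estimated quantity $v_S(X)$ directly (via submodularity $f_{S_j}(X_\rho) \geq f_{S_j^+}(X_\rho)$ and one more application of Lemma~\ref{lem:concentration}), whereas the paper argues that the current \filter\ call terminates and that $f_{S_j}(S_j^+ \cup X_\rho) \geq (\epsilon/20)(v^\star - f(S_j))$; both land in the same place.
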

\begin{proof}
If an epoch $j$  has not yet terminated after $\log_{1+\epsilon/3}(n)$ iterations of filtering, then,
by Lemma~\ref{lem:pruningFull}, at most $k/r$ elements survived these $\log_{1+\epsilon/3}(n)$ iterations. By Lemma~\ref{lem:TstarFull}, with the set $X_\rho$ of elements that survive these $\log_{1+\epsilon/3}(n)$ iterations is such that $f_{S_j^+}(X_{\rho}) \geq (\epsilon/4) \cdot \left(1 -  \epsilon \right) \left( v^{\star} - f(S_j) \right)$. Since there are at most $k/r$ surviving elements, $R = X_\rho$ for $R \sim \U(X_\rho)$ and 
$$\E\left[f_{S_j^+}(R)\right] \geq f_{S_j^+}(X_{\rho}) \geq \frac{\epsilon}{4} \cdot \left(1 -  \epsilon \right) \left( v^{\star} - f(S_j) \right) \geq \frac{1}{r}\cdot \left(1 -  \epsilon \right) \left( v^{\star} - f(S_j^+) \right)$$
where the last inequality is by monotonicity since $S_j \subseteq S_j^+$. Thus the current call to the \filter \ subroutine terminates and $X_{\rho}$ is added to $S_j^+$ by the algorithm. Next,
$$f_{S_j}(S_j^+ \cup X_{\rho}) \geq f_{S_j}( X_{\rho}) \geq  f_{S_j^+}( X_{\rho}) \geq  \frac{\epsilon}{4} \cdot \left(1 -  \epsilon \right) \left( \OPT - f(S_j) \right) \geq \frac{\epsilon}{20}\left( v^{\star} - f(S_j) \right) $$
where the first inequality is by monotonicity and the second by submodularity. Thus,  epoch $j$ ends.
\end{proof}

We are now ready to prove the main result for \algMainAlmostFull.

\begin{lemma}
\label{lem:algalmostFull}
Assume $(1 - \epsilon/20) \OPT \leq v^{\star} \leq  \OPT$. The \algMainAlmostFull  \ algorithm is a  $\mathcal{O}\left(\epsilon^{-1}\log_{1+\epsilon/3}(n)\right)$-adaptive algorithm that obtains, with probability $1 - \delta$, a $1 - 1/e -  \epsilon$ approximation, with $r = 20 \log_{1+\epsilon/3}(n) / \epsilon$. Its sample complexity
$m =  \mathcal{O}\left(n\left( \frac{k +\log_{1+\epsilon/3}(n) }{\epsilon}\right)^2 \log\left(\frac{n}{\delta}\right)\right)$ at each round.
\end{lemma}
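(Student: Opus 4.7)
The plan is to mirror the proof of Theorem~\ref{thm:alg2}, using the full-algorithm counterparts (Lemmas~\ref{lem:pruningFull},~\ref{lem:optContrib2Full},~\ref{lem:TstarFull},~\ref{lem:roundsFull}) to absorb the error introduced by (i) working with the proxy $v^\star$ in place of $\OPT$ and (ii) replacing exact expectations by the sampled estimates $v_S(X)$ and $v_S(X,a)$. First I would fix a single failure event: by Lemma~\ref{lem:concentration} and a union bound over all $\mathcal{O}(\epsilon^{-1}\log_{1+\epsilon/3} n)$ adaptive rounds, taking sample complexity $m = \mathcal{O}\!\left(n\bigl(\tfrac{k+\log_{1+\epsilon/3} n}{\epsilon}\bigr)^2\log(n/\delta)\right)$ per round suffices for every invocation of \estOne{} and \estTwo{} inside all calls to \filter{} to produce estimates within an additive $\epsilon' = \Theta(\epsilon/(k+r))\cdot(v^\star - f(S))$ of their targets, with total failure probability at most $\delta$. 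Conditioning on this good event, all structural lemmas hold simultaneously.

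Next, I would bound the adaptive complexity. By Lemma~\ref{lem:roundsFull} (with $r = 20\log_{1+\epsilon/3}(n)/\epsilon$), every epoch invokes \filter{} in at most $\log_{1+\epsilon/3}(n)$ adaptive rounds. Since \algMainAlmostFull{} runs $20/\epsilon$ epochs and the queries inside one filtering step are non-adaptive (the for-loop over $a\in X$ is parallel, and each \estOne{}/\estTwo{} call is a single non-adaptive batch), the total adaptivity is $\mathcal{O}(\epsilon^{-1}\log_{1+\epsilon/3}(n))$.

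For the approximation guarantee, I would split into two cases exactly as in the proof of Theorem~\ref{thm:alg2}. \emph{Case 1: the algorithm adds $k$ elements in $r$ iterations.} At each iteration the while-loop condition $v_S(X) \geq (\epsilon/20)(v^\star - f(S))$ was violated (or $|S\cup T|=k$), so during filtering the terminating step yielded $\E[f_{S\cup T}(R)] \geq \tfrac{1}{r}(1-\epsilon)(v^\star - f(S\cup T))$ up to the sampling error, which by the concentration choice we can absorb into the $(1-\epsilon)$ factor and then replace $v^\star$ by $(1-\epsilon/20)\OPT$ to obtain increments of at least $\tfrac{1-2\epsilon}{r}(\OPT - f(S))$. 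The standard inductive unfolding $f(S_i) \geq \bigl(1 - (1-\tfrac{1-2\epsilon}{r})^i\bigr)\OPT$ evaluated at $i=r$ gives $f(S) \geq (1-e^{-(1-2\epsilon)})\OPT \geq (1 - 1/e - \epsilon)\OPT$, up to rescaling $\epsilon$. \emph{Case 2: the algorithm completes $20/\epsilon$ epochs.} By the epoch termination condition, each epoch increases $f(S)$ by at least $(\epsilon/20)(v^\star - f(S)) \geq (\epsilon/20)(1-\epsilon/20)(\OPT - f(S))$, and a similar geometric unrolling over $20/\epsilon$ epochs yields $f(S) \geq (1 - e^{-1} - \epsilon)\OPT$.

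The main obstacle I expect is bookkeeping the various small error slacks so that they collectively fit inside the advertised $\epsilon$: namely, the $(1-\epsilon/20)$ from $v^\star$, the $\epsilon/20$ additive loss absorbed into the concentration estimates at every threshold comparison (both in the while-condition of \filter{} and in the while-condition of \algMainAlmostFull{}), the $\epsilon/10$ slack inherited from Lemma~\ref{lem:optContrib2Full} that propagates through Lemma~\ref{lem:TstarFull} into the epoch termination, and the $(1-\epsilon)^r \approx e^{-(1-\epsilon)}$ slack from the greedy-style recursion. Each is at most $O(\epsilon)$, so choosing the constants appropriately (or rescaling $\epsilon$ by a universal constant at the top of the proof) produces the stated $1-1/e-\epsilon$ bound. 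The rest is arithmetic in the spirit of the proofs of Theorem~\ref{thm:algone} and Theorem~\ref{thm:alg2}.
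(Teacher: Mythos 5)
Your proposal follows essentially the same route as the paper's proof: the identical two-case split (termination after $r$ iterations of adding $k/r$-element blocks, or after $20/\epsilon$ epochs), the same greedy-style recursion in each case, the same use of Lemma~\ref{lem:roundsFull} for the per-epoch adaptivity bound, and the same plan of absorbing the proxy-$v^\star$ and sampling errors into the approximation constant. The only cosmetic difference is that you defer the arithmetic to a final ``rescale $\epsilon$'' step, whereas the paper explicitly tracks the error factors (via $e^{-(1-\epsilon)} \leq (1+2\epsilon)/e$ and $v^\star \geq (1-\epsilon/20)\OPT$) to land on exactly $1-1/e-\epsilon$ without any rescaling, which is tidier here since $\epsilon$ also appears in $r$ and in the sample complexity $m$.
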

\begin{proof}
First, consider the case where the algorithm terminates after $r$ iterations of adding elements to $S$. Let $S_i$ denote the solution $S$ at the $i$th iteration of \algMainAlmostFull.
The algorithm increases the value of the solution $S$ by at least $\left(1 -  \epsilon \right) \left( v^{\star} - f(S) \right) / r$  at every iteration with $k/r$ new elements. Thus,
$$f(S_i) \geq f(S_{i-1}) + \frac{1- \epsilon}{r} \left(v^{\star} - f(S_{i-1})\right)$$
and we obtain $$f(S) \geq \left(1 - e^{-(1- \epsilon)}\right)v^{\star} \geq \left(1 - e^{-(1- \epsilon)} - \epsilon/20\right)\OPT \geq \left(1 - \frac{1 + 2 \epsilon}{e} - \epsilon/20\right)\OPT \geq \left(1 - \frac{1}{e} - \epsilon\right)\OPT$$ similarly as for Theorem~\ref{thm:algone}. 

Next, consider the case where the algorithm terminated after $(\epsilon/20)^{-1}$ epochs. At every epoch $j$, the algorithm increases the value of the solution $S$ by $(\epsilon/20)(v^{\star} - f(S_j))$. Thus,
$$f(S_j) \geq f(S_{j-1}) + \frac{\epsilon}{20}\cdot\left(v^{\star} - f(S_{j-1})\right).$$
Similarly as in the first case, we get that after $(\epsilon/20)^{-1}$ epochs, $f(S) \geq (1 - e^{-1}) v^{\star} \geq (1 - e^{-1} - \epsilon/20) v^{\star}$.

The number of rounds is at most $ 20 \log_{1+\epsilon/3}(n) / \epsilon + (\epsilon/20)^{-1} \log_{1+\epsilon/3}(n)$ since there are at most $r = 20 \log_{1+\epsilon/3}(n) / \epsilon$ iterations of adding elements and at most $(\epsilon/20)^{-1}$ epochs, each of which with at most $\log_{1+\epsilon/3}(n)$  filtering iterations by Lemma~\ref{lem:roundsFull}.
\end{proof}

\thmalgFull*
\begin{proof} 
With $\log_{(1 - \epsilon/20)^{-1}}(n) $ different guesses $v^{\star}$ of $\OPT$, there is at least one $v^{\star}$ in \algMainFull \ that is such that 
$(1 - \epsilon/20) \OPT \leq v^{\star} \leq \OPT$. The solution to \algMainAlmostFull$(v^{\star})$  with such a $v^{\star}$ is then, with probability $1 - \delta$, a $1 -1/e - \epsilon$ approximation  
with sample complexity
$m =  \mathcal{O}\left(n\left( \frac{k +\log_{1+\epsilon/3}(n)}{\epsilon}\right)^2 \log\left(\frac{n}{\delta}\right)\right)$ at each round and with adaptivity $\mathcal{O}\left(\epsilon^{-1}\log_{1+\epsilon/3}(n)\right)$ by Lemma~\ref{lem:algalmostFull}. Since \algMainFull \ picks the best solution returned by all instances of \algMainAlmostFull, it also obtains with probability $1 - \delta$, a $1 -1/e - \epsilon$ approximation.  

Finally, since there are     $\log_{(1 - \epsilon/20)^{-1}}(n) $ non-adaptive  instances of \algMainAlmostFull, each with adaptivity $\mathcal{O}\left(\epsilon^{-1}\log_{1+\epsilon/3}(n)\right)$,  the total  number of adaptive rounds of  \algMainFull \ is $\mathcal{O}\left(\epsilon^{-1}\log_{1+\epsilon/3}(n)\right)$. The total query complexity per round over all guesses is $$m =  \mathcal{O}\left(n\log_{(1 - \epsilon/20)^{-1}}(n)\left( \frac{k +\log_{1+\epsilon/3}(n)}{\epsilon}\right)^2 \log\left(\frac{n}{\delta}\right)\right).$$ 
\end{proof}

\end{document}